\documentclass[12pt]{article}
\usepackage{amsfonts}
\usepackage{amsmath}
\usepackage{amsthm}
\usepackage{graphicx}
\newtheorem{thm}{Theorem}
\newtheorem{lemma}{Lemma}
\newtheorem{conj}{Conjecture}
\newtheorem{open}{Open question}

\title{Avalanche dynamics of the Abelian sandpile model on the expanded cactus graph}
\author{Gregory Gauthier\footnote{
The author gratefully acknowledges the support of Professor Mark Sapir of Vanderbilt University
in the research and writing of this paper, as well as support from National
Science Foundation Grant DMS\# 0700811 for research during the summer of 2011.  Additionally, the author thanks Tatiana Smirnova-Nagnibeda
and Michel Matter for their helpful discussions and suggestions,
as well as the referees for their review of this paper.}}
\begin{document}
\maketitle

\begin{abstract}
We investigate the avalanche dynamics of the Abelian sandpile model
on arbitrarily
large balls of the expanded cactus graph
(the Cayley graph of the free product $\mathbb{Z}_3 *
\mathbb{Z}_2$).
We follow the approach of Dhar and Majumdar%
\cite{DharMajumdar90}
to enumerate the number of recurrent configurations.
We also propose a method of enumerating
all the recurrent configurations in which adding
a grain to a designated origin vertex
(far enough away from the boundary vertices) causes topplings to 
occur in a specific cluster (a connected subgraph that is the union of cells, or copies of the 3-cycle) within the first wave of an avalanche.
This filling method lends itself to combinatorial evaluation of the number of recurrent configurations in
which a certain number of cells topple in the first wave of an avalanche
starting at the origin, which are
amenable to analysis
using well-known recurrences and corresponding generating functions.
Using this, we show that the cell-wise first-wave critical exponent of
the expanded cactus is $\frac{3}{2}$.
\end{abstract}

\section{Introduction}
\subsection{Definition of the Abelian sandpile model}
Here we introduce our definition of the Abelian sandpile model.\footnote{Good sources
for basic information on the Abelian sandpile model found in this section
are \cite{Dhar99} and \cite{MatterNagnibeda11}}

Let $(G_i)_{i=1}^\infty$ be an infinite sequence of finite graphs that is \emph{increasing}
(that is, each graph in the sequence is a subgraph of its successor), where
the sequence exhausts an
infinite $d$-regular transitive graph $\Gamma$
(that is, $\Gamma = \bigcup_{i=1}^\infty G_i$).
For example, $(G_i)_{i=1}^\infty$
might be a sequence of finite connected
graphs that each contain a designated origin vertex $o$ in $\Gamma$.\footnote{%
If $\Gamma$ is transitive, it does not matter which vertex is designated as the origin.}
Hereafter, we will consider $(G_i)_{i=1}^\infty$ to be an increasing sequence of
connected induced finite
subgraphs of $\Gamma$ containing the origin vertex.
If we define $G_i = B_{\Gamma}(o, i) = \{v \in V(\Gamma) | d_{\Gamma}(o, v) \leq i\}$
(the balls centered at $o$ of radius $i$ according to $d_{\Gamma}$, the graph distance function), then the
$G_i$'s are an increasing sequence of graphs that exhaust $\Gamma$.

The Abelian sandpile model on each graph $G_i$ in the sequence consists of
configurations, which are functions $\chi: V(G_i) \to \mathbb{Z}^+$ that
assign each vertex a \emph{height} that is a positive integer number of grains stacked on that vertex.  A \emph{stable} configuration is a configuration where each vertex has
height less than or equal to $d$.
Transitions from one stable configuration $\chi$ to another are governed by the following rules:
\begin{enumerate}
\item A grain of sand is added to a randomly selected vertex $v$ of $G$.
\item If the addition of that grain of sand causes the vertex $v$ to have a height exceeding $d$,
then $v$ \emph{topples}, causing it to lose $d$ grains and causing
each neighbor of $v$ in $G$ to gain one grain.  (There may be fewer than $d$ neighbors of $v$ in $G$,
in which case the total number of grains in the system will decrease.)\footnote{
Matter and Nagnibeda \cite{MatterNagnibeda11} alternatively consider only finite graphs $G$
whose vertices can be classified as dissipative; dissipative vertices do not topple.  Presumably
the remaining non-dissipative vertices each have degree $d$.  Given our notion of the Abelian sandpile
model, we can add the vertices in $V(\Gamma) - V(G)$ adjacent to a vertex in $G$ as dissipative vertices;
the model thus generated is clearly equivalent.}
\item Vertices gaining grains from toppling can also topple themselves if their height now exceeds $d$,
causing a chain reaction of topplings.%
\footnote{In the Abelian sandpile model, the order in which vertices topple does not
matter.  This is also true in more generalized sandpile models; see \cite{Dhar99}.}
\item Eventually, the toppling must stop in a configuration where each vertex's height is less than or equal to
$d$.  (This is true since there is at least one vertex
in $G$ whose degree is less than $d$; see \cite{Dhar99}.)
The process of toppling vertices iteratively in a configuration to
reach a stable configuration is known
as \emph{relaxation}.
At this point, another grain of sand is added to a
randomly selected vertex of $G$.
\end{enumerate}
If, when we start with a configuration $\chi$ of $G$,
adding a grain to a vertex $v$ of $G$ causes an avalanche,
then we can consider relaxing it by toppling a sequence of
not necessarily distinct vertices $v_1, \ldots, v_k$.
We denote by $\chi_0$ the configuration $\chi$ with
an extra grain at $v$ and $\chi_i$ the configuration reached
from $\chi_{i-1}$ by toppling $v_i$.  In order for this
to be well-defined, we need $\chi_{i-1}(v_i) > d$ for all
$i$ from $1$ to $k$.  We thus define the sequence $v_1, \ldots, v_k$
to be a
\emph{toppling sequence} if it satisfies the following conditions:
\begin{enumerate}
\item For all integers $i$ from $1$ to $k$, $\chi_{i-1}(v_i) > d$, and
\item The configuration reached after adding a grain to $v$
and toppling $v_1, \ldots, v_k$ in order is stable.
\end{enumerate}

Note that, in the sandpile model, unstable configurations never appear outside of the
process of adding a grain and relaxing.  Therefore, from this point forward, all
configurations are assumed to be stable.

\subsection{The sandpile model as a Markov process}
We can naturally define for each $v \in V(G)$ an
operator $a_v$ that acts on configurations by adding a grain to $v$ and relaxing until
a valid configuration is reached.\cite{Dhar99}  If we look only at recurrent configurations
(described below),
then we can also define inverse operators $a_v^{-1}$ on configurations $\chi$
as the unique \emph{recurrent} configuration $\chi'$
that is changed to $\chi$ under $a_v$.  The operators and their inverses together generate
a finite Abelian group.

If we assign to each $v$ a positive probability $p_v$, then application of $a_v$'s
successively to recurrent configurations according to those probabilities is a Markov process.%

\subsection{Avalanches, recurrent and forbidden configurations, and critical exponents}

The series of topplings caused by the addition of a grain to a vertex is called an \emph{avalanche},
and its
\emph{mass} equals the number of vertices that topple at least once due to the added grain.

Within a configuration, a \emph{forbidden subconfiguration} (FSC)
is a set $V' \subseteq V$  with the property
that every vertex $v \in V'$ has a height less than or equal to the number of edges incident to $v$
and a vertex of $V'$ (see
\cite{DharMajumdar90}).  For example, two adjacent vertices with height 1 form an FSC, as do three
vertices in a chain with heights 1--2--1.\cite{DharMajumdar90}
A configuration is \emph{recurrent} if it almost certainly occurs infinitely often in the Markov process.\cite{MeesterRedigZnamenski01}
We will use the well-known characterization of recurrent configurations.
\begin{lemma}[\cite{MeesterRedigZnamenski01}]
A configuration is recurrent if and only if it does
not have a forbidden subconfiguration.
\end{lemma}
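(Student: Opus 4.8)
The plan is to route through Dhar's \emph{burning algorithm}, which trades the static ``no forbidden subconfiguration'' (FSC) condition for an equivalent dynamical one. First I would adjoin to $G$ a virtual \emph{sink} vertex $s$, joined to each $v \in V(G)$ by $d - \deg_G(v)$ parallel edges, so that every non-sink vertex has degree exactly $d$ in $G \cup \{s\}$ and toppling conserves the total grain count, the never-toppled sink absorbing the loss. ``Toppling $s$ once and relaxing'' then deposits $d - \deg_G(v)$ grains into each $v$ and may start a cascade; call a configuration $\chi$ \emph{burnable} if every vertex of $G$ topples in this relaxation. (A short least-action argument shows each vertex in fact topples exactly once, so this is Dhar's usual burning test.)

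\emph{Step 1: $\chi$ is burnable iff $\chi$ has no FSC.} Let $U \subseteq V(G)$ be the set of vertices that never topple in the sink relaxation. A vertex $v$ topples once it has received more than $d - \chi(v)$ grains, counting the $d - \deg_G(v)$ from $s$ together with one from each toppled neighbor. If $v \in U$, then every neighbor of $v$ outside $U$ topples, so $v$ receives at least $\big(d-\deg_G(v)\big) + \big(\deg_G(v) - d_U(v)\big) = d - d_U(v)$ grains, where $d_U(v)$ is the number of neighbors of $v$ lying in $U$; since $v$ never topples and the final configuration is stable, $\chi(v) + d - d_U(v) \le d$, i.e.\ $\chi(v) \le d_U(v)$. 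Hence a nonempty $U$ is an FSC. Conversely, a short induction on the order in which vertices topple shows that the first vertex $v$ of a putative FSC $V'$ to topple would need more than $d - \chi(v)$ toppled neighbors, all necessarily outside $V'$; but $v$ has only $\deg_G(v) - d_{V'}(v) \le d - \chi(v)$ neighbors outside $V'$, a contradiction. So if an FSC exists then $U \ne \emptyset$ and $\chi$ is not burnable. (This is the equivalence of \cite{DharMajumdar90}, rephrased.)

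\emph{Step 2: recurrent $\Rightarrow$ burnable.} I would use the sandpile group from the preceding subsection. The toppling relations $a_v^{\,d} = \prod_{w \sim v} a_w$ hold among its generators; multiplying them over all $v \in V(G)$ collapses to $\prod_v a_v^{\,d - \deg_G(v)} = \mathrm{id}$, and the left-hand operator is exactly ``topple $s$ and relax.'' So if $\chi$ is recurrent, that operation returns $\chi$. Writing $t(v)$ for the number of times $v$ topples in it and equating heights before and after gives $d\,t(v) - \sum_{w \sim v} t(w) = d - \deg_G(v)$ for every $v$, that is $\widetilde{\Delta}\,t = \widetilde{\Delta}\,\mathbf{1}$ for the reduced Laplacian $\widetilde{\Delta}$ of $G$; since $\widetilde{\Delta}$ is invertible (using that $G$ is connected and contains a vertex of degree below $d$), $t = \mathbf{1}$, and in particular every vertex topples, so $\chi$ is burnable.

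\emph{Step 3: burnable $\Rightarrow$ recurrent} --- the step I expect to be the main obstacle, since it upgrades a purely local condition to the global property of recurrence. The clean route is a counting argument. Dhar's classical enumeration gives that the number of recurrent configurations equals $\det\widetilde{\Delta}$, the order of the sandpile group. On the other side, Dhar's burning algorithm, equipped with a fixed tie-breaking rule when a vertex is pushed past threshold by several neighbors simultaneously, records for each vertex the edge along which it ``last'' received a grain, and this produces a bijection between burnable configurations and spanning trees of $G \cup \{s\}$ rooted at $s$; by the matrix--tree theorem these also number $\det\widetilde{\Delta}$. Since Step 2 already shows every recurrent configuration is burnable, the two sets have equal finite cardinality and so coincide; together with Step 1 this is the lemma. (One could instead try to identify the burnable configurations with the image of the ``topple-$s$'' operator and argue recurrence directly, but ruling out that a configuration in that image is nevertheless escaped forever by the Markov chain re-imports a count of essentially the same kind.)
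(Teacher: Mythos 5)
The paper does not actually prove this lemma: it is imported wholesale from the literature (cited to Meester--Redig--Znamenski, with the burning algorithm and the spanning-tree bijection only described informally afterward, and the bijection there is stated for configurations already known to be recurrent). So there is no in-paper argument to compare against; what you have written is a genuine proof where the paper has a citation. Your route is the standard one and is essentially sound: Step 1 (burnable $\Leftrightarrow$ no FSC) is a correct local accounting argument, Step 2 correctly derives the identity $\prod_v a_v^{\,d-\deg_G(v)} = \mathrm{id}$ and inverts the reduced Laplacian to force every vertex to burn exactly once, and Step 3 closes the containment $\mathrm{Rec} \subseteq \mathrm{Burnable}$ by showing both sets have cardinality $\det\widetilde{\Delta}$. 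Two caveats worth flagging. First, a small accounting slip in the converse of Step 1: the bound you display, $\deg_G(v) - d_{V'}(v) \le d - \chi(v)$, omits the $d - \deg_G(v)$ grains the first-toppling vertex of $V'$ receives from the sink; the correct total received is at most $\bigl(d - \deg_G(v)\bigr) + \bigl(\deg_G(v) - d_{V'}(v)\bigr) = d - d_{V'}(v) \le d - \chi(v)$, so the contradiction survives, but as written the inequality does not account for all incoming grains. Second, Step 3 leans on two imported theorems (Dhar's count $|\mathrm{Rec}| = \det\widetilde{\Delta}$, proved via the free transitive action of the sandpile group on the recurrent class, and the burning-to-spanning-tree bijection applied to \emph{burnable} rather than recurrent configurations); this is not circular and is how the result is proved in the sources the paper cites, but you should state explicitly that the enumeration of recurrent configurations is obtained without reference to FSCs, since otherwise a reader may suspect the argument assumes what it proves.
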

In fact, it is
possible to equivalently define a configuration $\chi$ as recurrent if it is almost certain
that $\chi$ will appear infinitely many times in the Markov process.\cite{MeesterRedigZnamenski01}%
\footnote{Although the Markov process allows sand to be added to any vertex in a given finite
graph $G_i$ in the sequence of balls of $\Gamma$ centered at $o$, it is valid to only
examine what happens when a grain is added to $o$.  To examine what happens when a grain
is added to a different vertex $v$, a different graph sequence of balls centered at $v$
will need to be used.  But if $\Gamma$ is transitive (as the expanded cactus examined in
this paper is), the results will be the same regardless of which vertex is chosen as the
origin.  Although it is possible to consider the avalanche dynamics for every vertex
of a \emph{fixed} finite graph $G$ (which requires taking into account boundary effects),
this is beyond the scope of this paper.}

In the Abelian sandpile model with sinks (or non-dissipative vertices), one can use
the burning algorithm to determine whether a configuration $\chi$ is recurrent
or contains a forbidden subconfiguration.\cite{Dhar99}  At the start, only the sinks are burned.
At each step, the burning algorithm
examines each vertex $v \in G$ and determines whether $\chi(v)$ is greater than the number
of unburnt neighbors---if so, $v$ is burned.  This continues until every vertex is burned,
indicating that $\chi$ is a recurrent configuration, or until it is impossible to burn
any more vertices (in which case the unburnt vertices form a forbidden subconfiguration).
\cite{Dhar99}  The burning algorithm also facilitates a bijection between recurrent
configurations and spanning forests where each component contains exactly one sink.
Vertices burned on the first step must have fewer than $d$ unburnt neighbors and therefore
are adjacent to sinks,
while for a vertex $v$ to burn on the $t$th step (with $t > 1$)
requires that $v$ not be adjacent to a sink and that a neighbor of $v$ burn on the $t-1$st
step.  In a recurrent configuration, every vertex burns, and so one can construct a forest
where each component contains exactly one sink
by connecting each vertex that burns on the first step to a neighboring sink and each
vertex that burns on the $t$th step ($t > 1$) to a neighbor that burns on the $t-1$st step.
Although there may be more than one possible choice of neighbor, with appropriate conventions
on which neighbor to select based on the height of the vertex, this construction serves as a
bijection between recurrent configurations and spanning forests where each component contains
exactly one sink.\cite{Dhar99}

It is well-known that, in the steady state,
each recurrent configuration is equally
probable.
For each $k \in \mathbb{Z}^+$, $n \in \mathbb{Z}^+ \cup \{0\}$,
we can calculate both the number of recurrent configurations of $G_k$ as well as the number
of those recurrent configurations in which adding a grain to $o$ causes exactly $n$ distinct vertices to topple.
Define $N_k$ to be the number of recurrent configurations of $G_k$ and $N_{k, i}$ to
be the number of recurrent configurations of $G_k$ in which adding a grain to $o$ causes
exactly $i$ distinct vertices to topple.  Then define
$$p_k(i) = \frac{N_{k, i}}{N_k}$$
As $k$ goes to infinity, the values $p_k(n)$ should converge for each fixed $n$.  Define
$$p(n) = \lim_{k \to \infty}p_k(n)$$
Then we can consider the asymptotic behavior of the sequence $\{p(n)\}_{n=0}^\infty$.
In particular,
if there exist constants $C_1, C_2>0$, $N \in \mathbb{Z}^+$, and $\alpha \in \mathbb{R}$
such that $C_1 n^{-\alpha} < p(n) < C_2 n^{-\alpha}$, for all $n>N$ where
$p(n) > 0$, then $\alpha$ is the
\emph{critical exponent} of the increasing graph sequence
$(G_i)_{i=1}^\infty$ exhausting $\Gamma$.

As we will see, the problem becomes more amenable when we consider only those vertices
that topple on the first wave of an avalanche.  A vertex $v$ \emph{topples in the first
wave} of an avalanche caused by adding a grain at $o$ if there exists a toppling sequence
containing $v$ in which the vertices in the sequence up to and including
the first occurrence of $v$ include $o$ exactly once.
Define $N_{k, i}^f$ to be the number of recurrent configurations
of $G_k$ in which adding a grain to $o$ causes exactly $i$ distinct vertices to topple
in the first wave.  Then $p_k^f(i)$ and $p^f(n)$ are defined analogously to
$p_k(i)$ and $p(n)$.  This then allows us to state the
first-wave critical exponent $\alpha^f$ as a number satisfying that
$C_1 n^{-\alpha} < p^f(n) < C_2 n^{-\alpha}$ for all $n > N$ where $p^f(n) > 0$
for appropriate constants $C_1, C_2 > 0$.

In this paper, we examine the Abelian sandpile model on arbitrarily large but finite subgraphs of
the \emph{expanded cactus}
(as named by \cite{FisherEssam61}), which is formed by taking the 3-regular Bethe
lattice and ``decorating'' each vertex by replacing it with a three-vertex cycle, and the three edges
incident to that vertex in the Bethe lattice each become incident to a separate vertex in the three-vertex
cycle.  (See Figure \ref{fig:expanded-cactus}.)

\begin{figure}[htbp]
\begin{center}
\includegraphics[height=0.3\textheight]{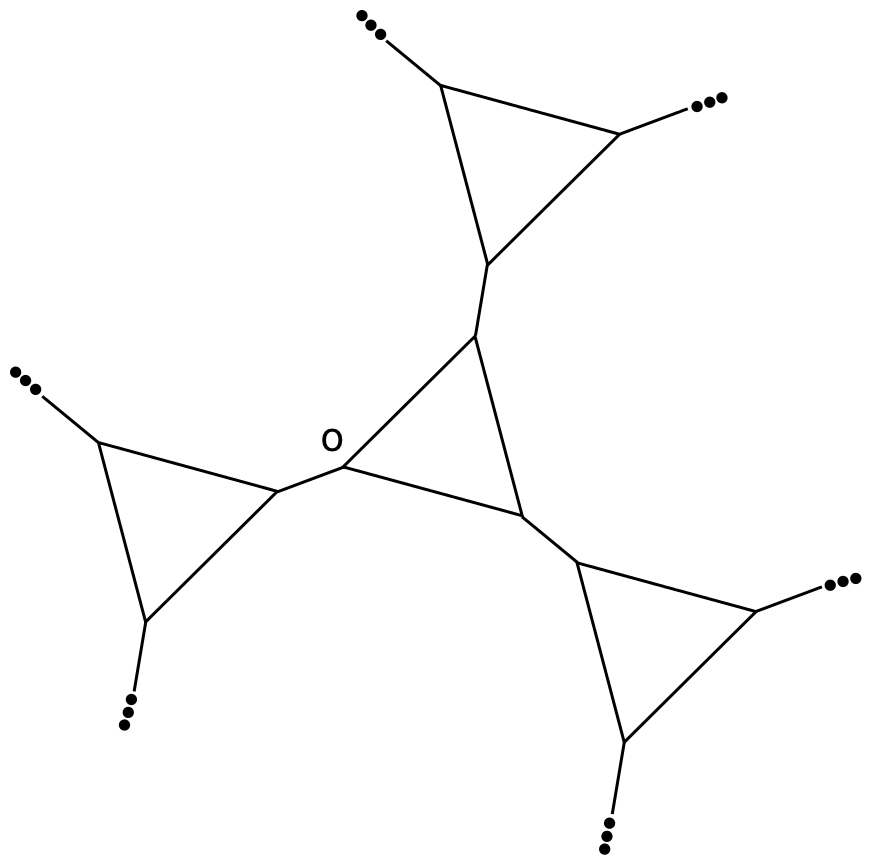}
\end{center}
\caption{The infinite expanded cactus with the origin vertex labeled.}
\label{fig:expanded-cactus}
\end{figure}

Each copy of the three-vertex cycle in the expanded cactus is designated a \emph{cell}.
We will only consider finite expanded cacti that can be generated by decorating a finite subgraph
of the Bethe lattice (the \emph{underlying graph}).
The expanded cactus is
the Cayley graph of the group $\mathbb{Z}_3 * \mathbb{Z}_2$
with respect to the standard generating set.  The maximum height of each
vertex is 3.  In the infinite expanded cactus $\Gamma$,
we will take the graph
sequence $(G_i)$ defined as follows: take the infinite 3-regular Bethe
lattice $H$ and fix an origin vertex $z$ in $H$.  $G_i$ will be given
by the decoration of $B_{H}(z, i)$, the ball of radius $i$ about $z$ in
the Bethe lattice.\footnote{Matter and Nagnibeda \cite{MatterNagnibeda11}
examined the critical exponent for cactus graphs in the \emph{random
weak limit}.  Their limiting process involves taking larger and
larger balls that exhaust the graph,
but the origin is a randomly selected vertex within each
ball.  Therefore, in the limit, the origin is not a fixed vertex
arbitrarily deep within the graph, but rather has its origin located
according to a probability distribution such that the origin is a
finite distance
away from a vertex with degree less than $d$ with probability $1$.
Thus, the graphs studied by Matter and Nagnibeda are one-ended,
which is used in their analysis.  However, the graphs studied in
this paper are infinitely-ended, so the approach used in
\cite{MatterNagnibeda11} is not applicable to our problem.}
The cell formed by decorating $z$, present in
each $G_i$, will be designated the origin cell; additionally, one
arbitrary vertex in that origin cell will be designated the origin vertex.

Although calculating the distribution of avalanches of specific mass requires calculating
the probability that adding a grain will cause a given number of vertices to topple, it is simpler,
in the case of the expanded cactus, to consider a different definition of avalanche mass:
the number of \emph{cells} in which at least one vertex topples (which we will call
the \emph{cell mass} of the avalanche).
To examine the behavior of the Abelian sandpile model on
$\Gamma$, we will consider a graph sequence $(G_i)_{i=1}^\infty$ of the expanded cactus $\Gamma$,
where $G_i$ is the decoration of the ball centered at the origin of the underlying 3-regular tree
with radius $i$.  Analogously to how $p_k(n)$ and $p(n)$ are defined in general, we will define
$p_k^c(n)$ and $p^c(n)$ using the number of recurrent configurations where exactly $n$ \emph{cells}
topple when a grain is added to the origin vertex in place of the number of
recurrent configurations where exactly
$n$ vertices topple when a grain is added to the origin vertex.
Likewise, we say that a cell is toppled on the first wave if any of its vertices
is toppled in the first wave; then, we define $p_k^{cf}(n)$ and $p^{cf}(n)$ in
a similar fashion using the cells that topple in the first wave.

If there exist constants $C_1, C_2>0$, $N \in \mathbb{Z}^+$, and $\alpha \in \mathbb{R}$
such that $C_1 n^{-\alpha} < p^{fc}(n) < C_2 n^{-\alpha}$, for all $n>N$ where
$p(n) > 0$, then $\alpha$ is the
\emph{cell-wise first-wave critical exponent} of the increasing graph sequence
$(G_i)_{i=1}^\infty$ exhausting $\Gamma$.
If the critical exponent,
(respectively, first-wave critical exponent%
, cell-wise first-wave critical exponent)
is $\alpha$ for every increasing graph sequence $(G_i)_{i=1}^\infty$
exhausting $\Gamma$, then we say that the critical exponent
(respectively, first-wave critical exponent%
, cell-wise first-wave critical exponent)
of $\Gamma$ is $\alpha$.

The critical exponent
of increasing graph sequences $(G_i)_{i=1}^\infty$ exhausting the expanded cactus
in the Abelian sandpile model
is important to mean-field theory, which concerns itself with the size of
avalanches in critical systems.\cite{ZapperiLauritsenStanley95}  Dhar and Majumdar \cite{
DharMajumdar90}
showed that the critical mean-field exponent for
infinite binary (3-regular) trees is $\frac{3}{2}$.  Although it is
conjectured
that the critical exponents of quasi-isometric graphs should be the same, this has
not been proven for non-trivial pairs of infinite graphs.  Our result,
below, provides evidence consistent with this conjecture:
\begin{thm}
On the expanded cactus, the cell-wise
first-wave critical exponent is $\frac{3}{2}$.
\end{thm}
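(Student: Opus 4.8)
The plan is to follow the Dhar–Majumdar strategy adapted to the cell structure of the expanded cactus. First I would establish the first-wave structure: when a grain is added at the origin vertex $o$ and the origin cell becomes unstable, the first wave of topplings consists precisely of the vertices that topple before $o$ topples a second time, and because of the tree-like (cactus) structure of $\Gamma$, the set of \emph{cells} toppled in the first wave forms a connected subgraph of cells — a ``subtree'' of cells hanging off the origin cell — whose boundary behavior is controlled by the heights on the bounding vertices of each cell. The key reduction is the \emph{filling method} advertised in the abstract: for a fixed connected cluster $\mathcal{C}$ of cells containing the origin cell, I would count the recurrent configurations of $G_k$ in which adding a grain at $o$ topples exactly the cells of $\mathcal{C}$ in the first wave. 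The count should factor as a product over the cells of $\mathcal{C}$, together with an independent contribution from the configuration outside $\mathcal{C}$ (which, for $G_k$ with $\mathcal{C}$ far from the boundary, contributes a factor independent of the shape of $\mathcal{C}$ and cancels in the ratio $p_k^{cf}(n)$).

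Second, I would set up the combinatorics. Each cell is a $3$-cycle with three distinguished ``ports'' (the vertices carrying the tree edges); a cell toppling in the first wave is determined by local height data on its $3$ vertices subject to (a) the no-FSC / burning condition making the global configuration recurrent, and (b) the toppling condition propagating from the parent cell. This gives a transfer-matrix / branching recursion: let $f(x) = \sum_n a_n x^n$ be the generating function for the number of admissible cell-clusters (weighted by the local recurrent-configuration counts) of a given cell-mass $n$ rooted at a cell with one incoming port, and similarly a generating function $F(x)$ for clusters rooted at the origin cell (which has three ``children'' directions rather than two). Because each cell in the binary-tree direction spawns two child subtrees, $f$ satisfies a quadratic functional equation of the form $f(x) = x\,\phi(f(x))$ for an explicit low-degree polynomial $\phi$ (encoding the finitely many local height patterns), and $F(x)$ is a polynomial in $f(x)$. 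This is the point where I would invoke ``well-known recurrences and corresponding generating functions'': the equation $f = x\,\phi(f)$ is exactly the Lagrange-inversion / Galton–Watson generating-function situation that produces $a_n \sim c\,\rho^{-n} n^{-3/2}$ by the standard square-root singularity analysis (the discriminant of the quadratic vanishes to first order at the dominant singularity).

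Third, I would pass to the limit and extract the exponent. One must check that $p_k^{cf}(n) \to p^{cf}(n)$ exists (standard: for $k$ large relative to $n$, the cluster of toppled cells cannot reach the boundary, so $N_{k,n}^{cf}/N_k$ stabilizes), and that $p^{cf}(n)$ is, up to the $k$-independent normalization, proportional to $a_n$ (or a finite combination of such coefficients from the three root-directions). Then the singularity analysis of $f = x\,\phi(f)$ gives $p^{cf}(n) \asymp n^{-3/2}$ — more precisely $p^{cf}(n) = \Theta(n^{-3/2})$ along the arithmetic progression of $n$ for which $p^{cf}(n) > 0$ — which is exactly the statement that the cell-wise first-wave critical exponent is $\tfrac{3}{2}$. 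Finally, since the argument used only that $o$ lies arbitrarily deep inside $G_k$, it applies to every increasing exhausting sequence, giving the exponent for $\Gamma$ itself.

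The main obstacle I anticipate is \textbf{step two}: correctly identifying the local admissibility conditions on a single cell — simultaneously enforcing recurrence (via the burning algorithm restricted to the cell and its three neighboring vertices) \emph{and} the first-wave toppling propagation — so that the cluster count genuinely factorizes into a branching process with a finite, explicit offspring polynomial $\phi$. Getting the boundary bookkeeping right (distinguishing port vertices from the third vertex of each cell, and handling the origin cell's extra branch) is delicate, but once $f = x\,\phi(f)$ is in hand with $\phi$ a genuine polynomial having positive coefficients and $\phi(0) \neq 0$, the $n^{-3/2}$ asymptotics are forced by the classical analytic-combinatorics dichotomy, and the rest is routine.
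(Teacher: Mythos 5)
Your overall strategy is the one the paper actually follows: a cell-wise adaptation of Dhar--Majumdar, with filling rules that assign admissible local height patterns to each cell of the toppled cluster, a quadratic functional equation $f(x)=x\,\phi(f(x))$ (the paper's is $f(x)=x\left(12+8f(x)+3f(x)^2\right)$, with weights $3$, $4$, $12$ for internal, medial, and terminal cells), a square-root singularity at $x=\tfrac{1}{20}$ matching the $20^{n}$ growth of the total number of recurrent configurations, and hence $p^{cf}(n)\asymp n^{-3/2}$. Your identification of step two as the delicate point is also accurate.

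However, there is one genuine gap: you assert that the count of recurrent configurations toppling exactly the cluster $\mathcal{C}$ ``genuinely factorizes into a branching process'' with purely local admissibility conditions on each cell and its attached radicals. On the full expanded cactus this is false. The filling rules, applied cell by cell, can produce configurations that are \emph{not} recurrent: the set of all vertices of $\mathcal{C}$ joined to $o$ by height-$3$ paths, together with the radicals attached to them (the paper's ``liberties''), forms a forbidden subconfiguration whenever every such radical is only weakly allowed. Recurrence therefore imposes a \emph{non-local} disjunction --- at least one liberty must be strongly allowed --- which depends jointly on the configuration of the whole cluster and cannot be absorbed into a per-cell weight. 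The paper devotes a separate argument to showing that the fraction of filling-rule configurations satisfying this liberty rule is bounded between $\tfrac{7}{48}$ and $1$ uniformly over clusters of every size, so the exponential growth rate and the $n^{-3/2}$ polynomial factor survive; without some such uniform two-sided bound your numerator overcounts by including non-recurrent configurations, and the claimed exact product formula is incorrect even though the final exponent happens to be unaffected. A smaller omission of the same flavor: your limit argument in step three should also justify why restricting to the first wave is what makes the cluster analysis closed (on a cactus, unlike a tree, vertices can topple for the first time in a later wave, so the first-wave restriction is doing real work, not just bookkeeping).
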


In percolation
theory, there is a similar proposition that the critical exponents
of percolation on a transitive graph
only depend on the large-scale structure of the graph
\cite{Kozakova08} and therefore, quasi-isometric graphs will have the same
critical exponents for percolation.  However, there are few
calculated examples that support this conjecture.

\subsection{Outline of method of Dhar and Majumdar and our approach}
Dhar and Majumdar \cite{DharMajumdar90} determine $p_i$ by counting,
for each sufficiently large subgraph $T$ of the $3$-regular tree,
the number of recurrent configurations (which appears in the denominator)
and the number of recurrent configurations where exactly $i$
vertices topple (which appears in the numerator).  Using the notions
of strongly and weakly allowed configurations on subtrees,
Dhar and Majumdar first compute the limiting ratio of strongly allowed
to weakly allowed configurations as the size of a subtree goes to infinity.
They then derive for every connected induced subgraph (\emph{cluster})
$C$ with $n$ vertices
a recurrence relation for the number of
recurrent configurations on $T$ in terms of the number of strongly
allowed configurations of each of the $n+2$ subtrees $U_1, \ldots, U_{n+2}$
induced by removing $C$.
In particular, Dhar and Majumdar showed that the number of recurrent
configurations is $3 \cdot 4^n \prod_{j=1}^{n+2} N_s(U_j)$, where
$N_s(U_j)$ denotes the number of strongly allowed configurations on $U_j$.
Then, Dhar and Majumdar show that a configuration $\chi$ on $T$ is recurrent
and has the property that adding a grain to the origin causes exactly
the $n$ vertices in a cluster $C$ to topple if and only if
\begin{enumerate}
\item $\chi(v) = 3$ for every $v \in C$
\item Each of the $n+2$ induced subtrees $U_1, \ldots, U_{n+2}$
has a root vertex whose height is $1$ or $2$
\item At least one of the $n+2$ induced subtrees is strongly allowed
\end{enumerate}
Dhar and Majumdar show that for each such cluster $C$ of size $n$,
the number of recurrent configurations where exactly the vertices in $C$
topple is $\left(1 - 2^{-(n+2)}\right) \prod_{j=1}^{n+2} N_s(U_j)$,
and this is independent of the shape of $C$.  Dhar and Majumdar
then derive the formula for the number of $n$-vertex clusters containing
the origin using generating functions, and they showed that the number of
clusters grows asymptotically to $4^n n^{-\frac{3}{2}}$.  From this,
it follows that
$$p_i = \left(1 - 2^{-(n+2)}\right) n^{-\frac{3}{2}}$$
from which it is clear that the critical exponent of the $3$-regular tree is
$\frac{3}{2}$.

The remainder of our paper analyzes the cell-wise
first-wave critical exponent of the expanded
cactus using a refinement of the method of Dhar and Majumdar.
In section 2, we describe how configurations can be enumerated
by dividing the finite graph into decorated rooted subtrees and defining
and considering weakly and strongly allowed radicals (configurations
restricted to decorated rooted subtrees) on those
decorated subtrees.  Based on this method, we can express the number
of recurrent configurations on a finite graph with regards to a cluster
(defined below) of cells about the origin in terms of the number of strongly
allowed radicals on the decorated subtrees induced by removing the
cluster.  Section 4 outlines how the number of recurrent configurations on
a rooted expanded cactus where
cells in a given cluster topple and no others can be enumerated in terms
strongly and weakly allowed radicals induced by removing the cluster,
using the concept of filling rules to determine what configurations
are allowed within the cluster.  Section 5 examines how we can adapt
the filling rules to enumerate configurations on the expanded cactus
where given cells topple on the first wave, and it examines issues
with multiple-wave avalanches on
the expanded cactus, issues arising from our graph's being a cactus
and not a tree as in \cite{DharMajumdar90}.
In section 6, we use asymptotic enumeration
methods to express the number of configurations where $n$ cells topple
in terms of the number of strongly allowed radicals induced by any cluster
of size $n$.  Our method of counting overcounts in that it includes
configurations that are not recurrent---section 7 shows that only a small
number of positions are erroneously counted and thus that the asymptotics
are not affected.
Finally, section 8 provides open questions and
a framework for generalizing
the method used in this paper to arbitrary decorations of infinite trees
by finite transitive connected graphs.

\section{Counting method---weakly and strongly allowed radicals}

Throughout this paper, our method of enumerating recurrent configurations
on the finite expanded cactus $G$
with certain properties is to take a \emph{cluster}
(or connected subgraph formed by the union of cells)
$C$ with $n$ cells, containing a predetermined cell called the origin,
one of whose vertices is also designated the origin
vertex.  The subgraph induced by $V \setminus C$ has $n + 2$ connected components, which are
\emph{decorated rooted subtrees}.
Each decorated rooted subtree has a cell
immediately adjacent to the cluster, which we will call the \emph{root cell}, and a
\emph{root vertex within}
the root cell that is adjacent to a vertex within the cluster.

Given a configuration $\chi$ and a decorated rooted subtree $U$, the restriction $\left.\chi\right|_U$
is called a \emph{radical}.  The radical's
root cell and vertex are the same as the decorated rooted subtree on which it is based.  Considering
the radical as not attached to anything, the configuration may be a recurrent configuration
 or have an FSC.
If the radical does not have a forbidden subconfiguration, then we will create a dichotomy,
analogous to the one introduced in \cite{DharMajumdar90}, of strongly allowed and weakly allowed
radicals.  A radical that does not contain an FSC is \emph{strongly allowed}
 if,
when attached to a vertex of height 1, does not result in an FSC, and is \emph{weakly allowed} otherwise.
Also, by definition, if $U$ is empty, then the null configuration on $U$
is a strongly-allowed radical.
Given a decorated rooted subtree $U$, we denote the number of weakly allowed radicals on $U$ as
$N_w(U)$, the number of strongly allowed radicals on $U$ as $N_s(U)$, and the ratio $\frac{N_w(U)}
{N_s(U)}$ as $x_U$.

In our enumerations, our goal is to express the number of recurrent configurations,
 as well as the number of
recurrent configurations
wherein topplings occur in exactly the cells of a given cluster $C$ of $n$ cells, in terms
of the product
$$\prod_{i=1}^{n+2} N_s(U_i)$$
where each $U_i$ is a decorated rooted subtree induced by $C$.  We first turn to counting the number
of recurrent configurations.

\section{Enumerating the recurrent configurations}
\label{sec:enumerating-recurrent-configurations}

Consider the origin cell in the expanded cactus and a configuration of both the cell and the
three induced radicals.  Whether the configuration is recurrent or not
can be determined entirely by
the heights of the vertices in the origin cell, as well as whether each radical is strongly or weakly
allowed.

In Table \ref{tab:recurrent-configuration-origin}, we examine the 16 possible cell configurations and determine which combinations
of strongly and weakly allowed radicals result in an recurrent configuration.
Here, S denotes a strongly allowed
radical, W a weakly allowed radical.  The order of the letters in each combination indicates what type
of radical is attached where, corresponding to the order of the vertices in the given cell.  For
example, the entry WSS in the 2-2-3 row indicates that if a weakly allowed radical is attached to the first
2 and strongly allowed radicals to the second 2 and to the 3, the resulting configuration is
recurrent.

\begin{table}[htbp]
\begin{center}
\begin{tabular}{cl}
Cell & recurrent radical combinations \\ \hline
1-2-3 & SSS \\
1-3-2 & SSS \\
2-1-3 & SSS \\
2-3-1 & SSS \\
3-1-2 & SSS \\
3-2-1 & SSS \\ \hline
2-2-3 & SSS, WSS, SWS \\
2-3-2 & SSS, WSS, SSW \\
3-2-2 & SSS, SWS, SSW \\ \hline
1-3-3 & SSS, SWS, SSW \\
3-1-3 & SSS, WSS, SSW \\
3-3-1 & SSS, WSS, SWS \\ \hline
2-3-3 & SSS, WSS, SWS, SSW, WWS, WSW \\
3-2-3 & SSS, WSS, SWS, SSW, WWS, SWW \\
3-3-2 & SSS, WSS, SWS, SSW, WSW, SWW \\ \hline
3-3-3 & SSS, WSS, SWS, SSW, WWS, WSW, SWW \\ \hline
\end{tabular}
\end{center}
\caption{The combinations of allowed radicals for each allowed origin cell
configuration that result in a recurrent configuration.}
\label{tab:recurrent-configuration-origin}
\end{table}

We also must consider how to compute the number of strongly and weakly allowed radicals on a given
decorated rooted subtree.  This can be calculated inductively by considering the 16 possible configurations
of the root cell of a radical, as well as whether the two induced subradicals are weakly or strongly
allowed.  Each combination of a root cell configuration and weakly or strongly allowed subradicals produces
either a radical with a forbidden subconfiguration, a weakly allowed whole radical, or a strongly allowed
whole radical.

In Table \ref{tab:recurrent-configuration-radical}, we illustrate the combinations of root cells and weakly and strongly allowed radicals
that produce weakly and strongly allowed whole radicals.  The root vertex is the last number listed
in the cell; the two letter codes indicate whether a weakly allowed (W) or strongly allowed (S)
radical is being added to the vertex indicated by the first and second number, respectively.

\begin{table}[htbp]
\begin{center}
\begin{tabular}{cll}
Cell & Weakly allowed whole & Strongly allowed whole \\ \hline
1-2-3 & SS & --- \\
1-3-2 & SS & --- \\
2-1-3 & SS & --- \\
2-3-1 & SS & --- \\
3-1-2 & SS & --- \\
3-2-1 & SS & --- \\ \hline
2-2-3 & SS, WS, SW & --- \\
2-3-2 & WS & SS \\
3-2-2 & SW & SS \\ \hline
1-3-3 & SW & SS \\
3-1-3 & WS & SS \\
3-3-1 & SS, WS, SW & --- \\ \hline
2-3-3 & SW, WW & SS, WS \\
3-2-3 & WS, WW & SS, SW \\
3-3-2 & --- & SS, WS, SW \\ \hline
3-3-3 & WW & SS, WS, SW \\ \hline
\end{tabular}
\end{center}
\caption{The combinations of allowed radicals for each allowed root cell
configuration that result in strongly allowed or weakly allowed
whole radicals.}
\label{tab:recurrent-configuration-radical}
\end{table}

From Table \ref{tab:recurrent-configuration-radical}, we can calculate the limiting ratio of weakly allowed to strongly allowed
radicals on arbitrarily large decorated rooted subtrees.  To do this, we first tabulate
the number of instances of WW, WS, SW, and SS in the weakly allowed and the strongly allowed columns:

\begin{table}[htbp]
\begin{center}
\begin{tabular}{r|cccc}
 & WW & WS & SW & SS \\ \hline
Weakly allowed whole & 3 & 5 & 5 & 8 \\
Strongly allowed whole & 0 & 3 & 3 & 8
\end{tabular}
\end{center}
\caption{The number of root cell configurations for which the
given combinations of allowed subradicals produce a weakly
allowed or strongly allowed whole radical.}
\label{tab:weakly-strongly-allowed-radical-count}
\end{table}

Suppose $U$ is a decorated rooted subtree such that when the root cell is removed,
two smaller decorated rooted subtrees, $U_1$ and $U_2$ are induced.  From Table
\ref{tab:weakly-strongly-allowed-radical-count}, we know
\begin{eqnarray*}
N_w(U) & = & 3N_w(U_1)N_w(U_2) + 5N_w(U_1)N_s(U_2)\\
 & + & 5N_s(U_1)N_w(U_2) + 8N_s(U_1)N_s(U_2) \\
 & = & (3x_{U_1}x_{U_2} + 5x_{U_1} + 5x_{U_2} + 8)N_s(U_1)N_s(U_2)
\end{eqnarray*}
and
\begin{eqnarray*}
N_s(U) & = & 3N_w(U_1)N_s(U_2) + 3N_s(U_1)N_w(U_2) + 8N_s(U_1)N_s(U_2) \\
 & = & (3x_{U_1} + 3x_{U_2} + 8)N_s(U_1)N_s(U_2)
\end{eqnarray*}

Therefore,
$$x_U = \frac{3x_{U_1}x_{U_2} + 5x_{U_1} + 5x_{U_2} + 8}{3x_{U_1} + 3x_{U_2} + 8}$$

Consider now the decorated rooted subtrees generated as follows: $B_0$ as a single cell,
$B_n$ as a decorated rooted subtree whose child subtrees are $B_{n-1}$.  Let $x_n = \frac{N_w(T_n)}
{N_s(B_n)}$ for each nonnegative integer $n$.  Then
$$x_0 = 1$$
and
$$x_n = \frac{(8 + 10x_{n-1} + 3x_{n-1}^2)N_s(T_{n-1})^2}{(8 + 6x_{n-1}) N_s(T_{n-1})^2}
= \frac{8 + 10x_{n-1} + 3x_{n-1}^2}{8 + 6x_{n-1}}$$

This simplifies to
$$x_n = 1 + \frac{1}{2}x_{n-1}$$
so $x_n = 2 - 2^{-n}$, and it is clear that
$$\lim_{n \to \infty}x_n = 2$$

Further, we can compute $\frac{\partial x_U}{\partial x_{U_1}}$ as follows:
\begin{eqnarray*}
\frac{\partial x_U}{\partial x_{U_1}} & = & \frac{(3x_2+5)(3x_1 + 3x_2 + 8) - 3(3x_1x_2 + 5x_1 +
 5x_2 + 8)}{(3x_1 + 3x_2 + 8)^2} \\
  & = & \frac{9x_2^2+24x_2+16}{(3x_1 + 3x_2 + 8)^2}
\end{eqnarray*}
This means that $x_U$ increases as a function of $x_{U_1}$ and, by symmetry, $x_{U_2}$.
Thus, if $U$ is a decorated rooted subtree that contains a subtree $B_n$, then $x_U \geq x_n =
2 - 2^{-n}$.  Further, since $x_U(2, 2) = 2$, it is trivial to show inductively
that $x_U < 2$ for every finite decorated rooted subtree $U$.  Together, these results
show that we can guarantee that a decorated rooted subtree $U$ will have $x_U$ arbitrarily
close to $2$ simply by ensuring that $U$ is deep enough.

Further, we have the following result that relieves us from having to consider
which specific graph sequence we are using:
\begin{thm}
\label{thm:increasing-graph-sequence-epsilon}
Let $(G_n)_{n=1}^\infty$ be an increasing graph sequence that exhausts $\Gamma$,
the infinite expanded cactus, where each $G_n$ is a connected graph formed
by the union of cells, including the origin.  Then, for every $\epsilon > 0$,
and for every positive integer $k$, there exists a positive integer $N$ such
that for every cluster $C$ of at most $k$ cells in $G_N$ containing the origin cell
and every decorated rooted
subtree $U$ induced by $C$, it holds that $2 > x_U > 2 -\epsilon$.
\end{thm}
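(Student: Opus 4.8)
The plan is to build on two facts already established in this section: (i) $x_U < 2$ for every finite decorated rooted subtree $U$, which disposes of the upper bound outright; and (ii) $x_U$ is a strictly increasing function of the ratios $x_{U_1}, x_{U_2}$ of its two child subtrees (the content of the derivative computation $\partial x_U / \partial x_{U_1} = (9 x_{U_2}^2 + 24 x_{U_2} + 16)/(3 x_{U_1} + 3 x_{U_2} + 8)^2 > 0$ and its symmetric counterpart), together with the fact that the recursion applied to the perfect subtree $B_m$ is exactly $x_m = x(x_{m-1}, x_{m-1})$ where $x(a,b) = (3ab + 5a + 5b + 8)/(3a+3b+8)$, so that $x_{B_m} = x_m = 2 - 2^{-m}$. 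Given these, the whole statement reduces to exhibiting an $N$ that forces every decorated rooted subtree $U$ induced by a cluster $C$ of at most $k$ cells containing the origin cell in $G_N$ to contain a sufficiently deep perfect subtree $B_m$.

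\emph{Step 1: a domination lemma.} First I would prove that if a decorated rooted subtree $U$ contains $B_m$ as a decorated rooted subtree sharing the same root cell, then $x_U \ge x_m$. This is an induction on $m$. For $m = 0$ one only needs $x_U \ge 1 = x_0$ for every non-empty $U$, which follows from (ii): the child ratios are non-negative and $x(0,0) = 8/8 = 1$ (the empty subtree being strongly allowed has ratio $0$). For the step, if $U \supseteq B_m$ then each child subtree $U_j$ of the root cell of $U$ contains a copy of $B_{m-1}$ rooted at its own root, so $x_{U_j} \ge x_{m-1}$ by the inductive hypothesis; then monotonicity of $x_U = (3 x_{U_1} x_{U_2} + 5 x_{U_1} + 5 x_{U_2} + 8)/(3 x_{U_1} + 3 x_{U_2} + 8)$ in both arguments gives $x_U \ge x(x_{m-1}, x_{m-1}) = x_m$.

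\emph{Step 2: choosing $N$ via exhaustion.} Given $\epsilon > 0$ and $k$, pick an integer $m \ge 1$ with $2^{-m} < \epsilon$, so $x_m = 2 - 2^{-m} > 2 - \epsilon$. The set $S$ of cells of $\Gamma$ at cell-distance at most $k + m$ from the origin cell is finite, so since the increasing sequence $(G_n)$ exhausts $\Gamma$ there is an $N$ with $S \subseteq G_N$ (hence $S \subseteq G_{N'}$ for all $N' \ge N$, so the conclusion will in fact hold for all large indices). I claim this $N$ works. Let $C$ be a cluster of at most $k$ cells in $G_N$ containing the origin cell, and let $U$ be a decorated rooted subtree induced by removing $C$, with root cell $r$. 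Since $C$ is connected with at most $k$ cells, the origin cell is at cell-distance at most $k-1$ from every cell of $C$, so $r$ (being adjacent to $C$) is at cell-distance at most $k$ from the origin cell. Because the cells of $\Gamma$ form a tree under adjacency and $C$ is connected, any path in $G_N$ from $r$ into $C$ must pass through the unique cell of $C$ closest to $r$; hence $U$ is precisely the set of cells of $G_N$ lying in the branch hanging off $r$ away from $C$. Every cell of $\Gamma$ at cell-distance at most $m$ from $r$ within that branch is at cell-distance at most $k + m$ from the origin cell, hence lies in $S \subseteq G_N$ and therefore in $U$; these cells form a complete copy of $B_m$ rooted at $r$. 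By Step 1, $x_U \ge x_m > 2 - \epsilon$, while $x_U < 2$ because $U$ is finite, which is exactly the claim.

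I expect the one point needing care to be the geometric bookkeeping in Step 2 — verifying that excising a connected cluster from $G_N$ leaves, in each outgoing branch, the \emph{full} truncation of $\Gamma$ to $G_N$, so that ``deep in $\Gamma$'' genuinely forces ``deep in $U$'', and keeping the cell-distance estimates consistent. This is ultimately just the tree structure of the cell-adjacency graph of $\Gamma$ together with the connectedness of $C$, but I would spell it out rather than leave it implicit; everything else follows immediately from the monotonicity and the recursion already established above.
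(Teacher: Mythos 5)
Your proposal is correct and follows essentially the same route as the paper, which leaves this theorem's proof implicit in the preceding discussion: the monotonicity of $x_U$ in $x_{U_1}, x_{U_2}$, the bound $x_U \geq x_m = 2 - 2^{-m}$ when $U$ contains $B_m$ at its root, the upper bound $x_U < 2$, and the observation that a sufficiently large $G_N$ forces every induced subtree to be deep. Your write-up in fact supplies more detail (the domination induction and the cell-distance bookkeeping) than the paper itself does.
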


Now let us suppose that $C$ is a cluster of $n$ cells about the origin, inducing decorated rooted
subtrees $U_1, \ldots, U_{n+2}$.  We would like to express the number of recurrent configurations
on the entire
graph in terms of $N_s(U_1)\cdot N_s(U_2) \cdots N_s(U_{n+2})$.  The following analysis works with
clusters of any shape, but for simplicity, we will assume that the cluster is a chain of $n$ cells,
numbered consecutively from $1$ to $n$.  Label the two induced decorated rooted subtrees adjacent to cell
$1$ $U_1$ and $U_2$, the one decorated rooted subtree adjacent to cell $k$ (for $1 < k < n$) $U_{k+1}$,
and the two decorated rooted subtrees adjacent to cell $n$ $U_{n+1}$ and $U_{n+2}$.  Also, for $1 \leq k
< n$, define $T_k$ to be the decorated subtree whose root is cell $k$ and which does not contain cell
$k+1$.

With these labellings, we can compute the total number of recurrent configurations
on the graph by dividing the graph
at cell $n$ and using Table \ref{tab:recurrent-configuration-origin}.  The result is
$$N_{\textrm{recurrent}} = [16 + 8x_{T_{n-1}} + 8x_{U_{n+1}} + 8x_{U_{n+2}}$$
$$+ 3x_{T_{n-1}}x_{U_{n+1}} + 3x_{T_{n-1}}x_{U_{n+2}} +
3x_{U_{n+1}}x_{U_{n+2}}]\cdot$$
$$N_s(T_{n-1})N_s(U_{n+1})N_s(U_{n+2})$$

Further, we have the recurrence, for $2 \leq k < n$,
\begin{equation}
\label{eqn:first-recurrence}
N_s(T_k) = [8 + 3x_{U_{k+1}} + 3x_{T_{k-1}}]N_s(U_{k+1})N_s(T_{k-1})
\end{equation}
and
\begin{equation}
\label{eqn:second-recurrence}
N_s(T_1) = [8 + 3x_{U_1} + 3x_{U_2}]N_s(U_1)N_s(U_2)
\end{equation}

If $\epsilon > 0$ is given and $G$ is a finite graph satisfying the conclusion
of Theorem \ref{thm:increasing-graph-sequence-epsilon}, then we have
$$N_{\textrm{recurrent}} > [100 - 60\epsilon + 9\epsilon^2]N_s(T_{n-1})N_s(U_{n+1})N_s(U_{n+2})$$
and iterative expansion gives
$$N_{\textrm{recurrent}} > (20 - 6\epsilon)^{n-1} (100 - 60\epsilon + 9\epsilon^2)
\prod_{i=1}^{n+2}N_s(U_i)$$

Also, since $x_U < 2$ for every finite decorated rooted subtree $U$, we have
$$N_{\textrm{recurrent}} < 100\cdot 20^{n-1}\prod_{i=1}^{n+2}N_s(U_i)$$

Therefore, as $\epsilon$ approaches $0$, we have
$$N_{\textrm{recurrent}} = 5 \cdot 20^n \prod_{i=1}^{n+2} N_s(U_i)$$

\section{Enumerating the recurrent configurations on a decorated
\emph{rooted} subtree in which cells of a given cluster will topple using filling rules}
\label{sec:enumerating-the-configurations}

In this section, we propose filling rules that,
given a decorated rooted subtree and a cluster $C$ about the origin, describe what
combinations of configurations of the cells of $C$
and radicals induced by $C$ will result in an
recurrent configuration wherein adding a grain to the origin
causes topplings to occur within at least one
vertex in each cell of $C$ and no vertices outside the cells of $C$.
The rules
allow for independent assignment of configurations to each cell of $C$,
and requirements for the radicals
only depend on the cell to which the underlying decorated rooted subtree
is attached.  Therefore,
the number of recurrent configurations on a decorated rooted subtree
where a given cluster of cells topples can
be enumerated by simple use of the multiplication
principle of counting.

\subsection{Terminology}
To explain the filling rules clearly, we need to define some additional terms.
Within each cell, the \emph{origin-facing vertex} is the vertex
closest to the origin vertex.
Each cell $c$ within
$C$ is classified as internal, medial, or terminal, as follows.
If each non-origin-facing vertex of $c$ is attached to a
cell in the cluster, then $c$ is an \emph{internal cell}.
If exactly one non-origin-facing vertex of $c$ is attached to
a cell in the cluster, then $c$ is a \emph{medial cell}.
If no non-origin-facing vertices in $c$ are attached to
cells in the cluster, then $c$ is a \emph{terminal cell}.
Figure \ref{fig:sandpile-cactus-cluster} illustrates an example cluster
and the internal, medial, and terminal cells within it.

\begin{figure}[htbp]
\begin{center}
\includegraphics[height=0.3\textheight]{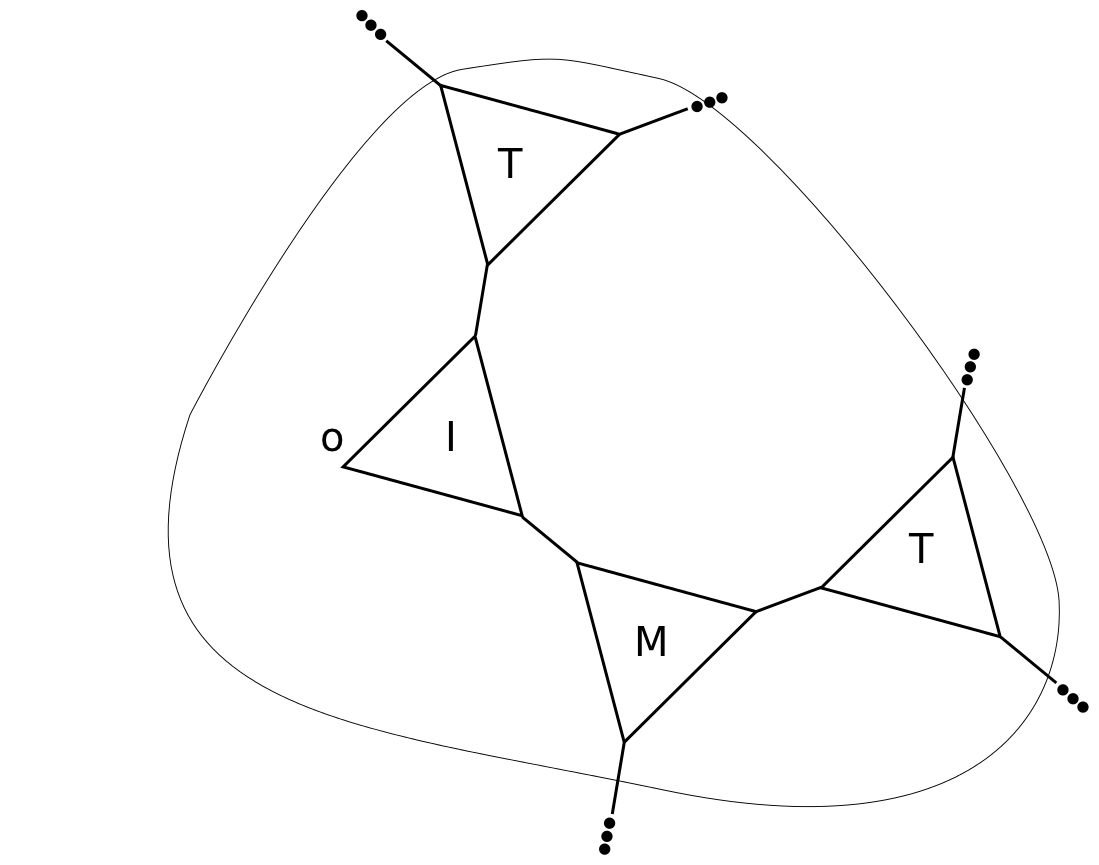}
\end{center}
\caption{An example cluster, with cells labeled as internal (I),
medial (M), or terminal (T).}
\label{fig:sandpile-cactus-cluster}
\end{figure}

Define a radical to be a \emph{stopper} if it is allowed and
its root vertex has height 1 or 2.  Also, we define the empty
radical to be a stopper.  For radicals that are sufficiently
deep, the following hold:
\begin{enumerate}
\item The number of stoppers equals the number of strongly allowed radicals.%
\footnote{This holds regardless of the shape of the radical, so long as it is a decorated rooted
subtree, since there is a bijection between stoppers and strongly allowed radicals.}
\item The number of strongly allowed stoppers is $\frac{7}{20}$ of the number of strongly allowed radicals.
\end{enumerate}
To see this, refer to Table \ref{tab:recurrent-configuration-radical}.
Since weakly allowed subradicals are twice as prevalent as strongly allowed
subradicals, weight each SS as 1, each SW and WS as 2, and each WW as 4.  Then calculating the total
weights for strongly allowed whole radicals, stoppers,
and strongly allowed stoppers, gives the preceding
results.

\subsection{The rules}
The recurrent configurations in which topplings occur in each cell in $C$ and no others are precisely
characterized by the following:
\begin{enumerate}
\item The cells in $C$ have configurations specified by the filling rules, according to their
being internal, medial, or terminal.
\item The radicals are all allowed, and they satisfy
the specific requirements for being strongly allowed
or a stopper that are
determined by the cell to which the underlying decorated rooted subtree is attached.
\end{enumerate}

\subsubsection{Internal cells}
The possible configurations for internal cells are
\begin{enumerate}
\item 3-3-3
\item 3-3-2, with the vertex closest to the origin having height 3 (this item encompasses two
cell configurations by symmetry)
\end{enumerate}

\subsubsection{Medial cells}
The possible configurations for medial cells, as well as the corresponding restrictions on
the attached radical, are
\begin{enumerate}
\item 3-3-3---radical must be a stopper
\item 3-3-1, with the vertex adjoining the induced decorated rooted subtree having height 1---radical
must be strongly allowed
\item 3-3-2, with the vertex adjoining the induced decorated rooted subtree having height 2---radical
must be a stopper
\item 3-3-2, with the vertex adjacent to another cell in $C$ but not the closest vertex to the origin
having height 2---radical must be a stopper
\end{enumerate}

\subsubsection{Terminal cells}
Finally, we turn to terminal cells, which have these possible configurations
that can be filled into the cell (and the corresponding
radical restrictions):
\begin{enumerate}
\item 3-3-3---both radicals must be stoppers
\item 3-3-2, with the vertex closest to the origin having height 3---both radicals must be stoppers
(this item encompasses two cell configurations)
\item 3-3-1, with the vertex closest to the origin having height 3---the radical on the decorated
rooted subtree adjacent to the 3 must be a stopper;
the other radical must be strongly allowed
(this item encompasses two cell configurations)
\item 3-2-2, with the vertex closest to the origin having height 3---at least one of the two radicals
must be strongly allowed; the other must be allowed
\item 3-2-1, with the vertex closest to the origin having height 3---both radicals must be strongly
allowed (this item encompasses two cell configurations)
\end{enumerate}

\subsection{Correctness of filling rules for decorated \emph{rooted}
subtrees}
In order to use the filling rules to count the number of recurrent
configurations wherein the cells in cluster $C$ topple and no others,
we prove the following result.
\begin{thm}
\label{thm:filling-rules}
Let $\chi$ be a configuration on $G$, a finite decorated rooted subtree,
and let $C$ be a cluster of $G$.
Then $\chi$ is a recurrent configuration and
adding a grain to the origin vertex causes exactly the cells in $C$ to
topple if and only if $\chi$ can be generated by applying the filling
rules to the cluster $C$ in $G$.
\end{thm}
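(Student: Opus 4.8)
The plan is to prove the two implications separately, in each case reducing to a finite case analysis organized by the classification of the cells of $C$ as internal, medial, or terminal.

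For the ``if'' direction, assume $\chi$ is produced by the filling rules. I would first check that adding a grain at the origin topples exactly the cells of $C$, arguing by induction on the distance in the underlying tree from the origin cell. In the base case the origin-facing vertex of the origin cell is the origin vertex, which has height $3$ in every permitted configuration, so it becomes unstable and topples; a short inspection of the allowed origin-cell configurations shows that each cell-mate attached to a cluster cell also reaches height $4$ and topples (so the avalanche enters each child cell of the origin cell in $C$), while each cell-mate attached to a radical either never reaches height $4$ (sending no grain into the radical) or reaches height $4$ exactly in the cases where the rule demanded that radical be a stopper, so its root, of height $\le 2$, does not topple. The inductive step is the identical computation for the few permitted configurations of an internal, medial, or terminal cell $c$: its origin-facing vertex receives one grain from the (inductively toppling) parent vertex and, having height $3$, topples, after which one reads off from the configuration that the cluster-adjacent cell-mates topple (continuing the induction) and the radical-adjacent cell-mates behave exactly as the radical restriction permits. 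The same bookkeeping shows no cell outside $C$ topples, since grains can leave $C$ only by arriving at a radical root along one edge and the rules ensure such a root is either never fed or is a stopper. The one subtlety is that a priori a vertex might topple several times across several waves; I would dispose of it by the standard fact that the set of cells toppled over the entire avalanche equals the set toppled in the first wave, within which each vertex topples at most once, so the ``each vertex topples once'' accounting above is exhaustive.

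Still in the ``if'' direction, I would verify $\chi$ is recurrent using the characterization of recurrent configurations by the absence of forbidden subconfigurations. A forbidden subconfiguration $V'$ cannot lie inside a single radical (radicals are allowed by hypothesis). If $V'$ met both $C$ and a radical $U$, it would contain the root of $U$ and the vertex of $C$ to which $U$ attaches; but the rules attach a radical to a vertex of height $1$ or $2$ only when that radical is required strongly allowed, which by definition forbids exactly this. And $V'$ cannot lie within $C$: in each cell of $C$ the rules keep enough vertices at height $3$ and push the low vertices against radicals or into terminal cells that no forbidden subconfiguration can close up inside $C$ --- a finite check I would carry out cell by cell using Table~\ref{tab:recurrent-configuration-origin} and Table~\ref{tab:recurrent-configuration-radical}.

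For the ``only if'' direction, assume $\chi$ is recurrent and adding a grain at the origin topples exactly the cells of $C$. First I would show every cell $c\in C$ has its origin-facing vertex topple: since the underlying graph is a tree of triangles, grains can enter a non-origin cell only through its origin-facing vertex, so the first vertex of $c$ to topple is that vertex; that $c$ topples forces its parent cell to topple, and an induction up the cell-tree (with base case the origin cell, where the added grain must topple the origin vertex or else nothing topples) completes this. Restricting to the first wave, each such origin-facing vertex topples exactly once having received exactly one grain beforehand, hence had height $3$ in $\chi$. Knowing this, I would proceed cell by cell: given that the origin-facing vertex has height $3$, the cell topples, every cluster-adjacent cell-mate must relay the avalanche into $C$, and no radical root may topple, one is forced into exactly the configurations in the rules; recurrence of $\chi$ (no forbidden subconfiguration straddling $C$ and a radical) then forces the ``strongly allowed'' conditions, and the no-toppling-outside-$C$ requirement forces the ``stopper'' conditions.

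I expect the main obstacle to be the dynamical bookkeeping in the ``only if'' direction --- ruling out scenarios in which the cactus's cycles let the avalanche revisit a cell, so that a vertex topples several times or in an unexpected order, precisely the multiple-wave phenomena new relative to the tree case of \cite{DharMajumdar90} and treated in Section~5. The clean route is to establish once and for all that the toppled-cell set is determined by the first wave and that each vertex topples at most once per wave, after which both directions collapse to the finite per-cell computations above; organizing that reduction, together with the special treatment of the origin cell, is where the work concentrates.
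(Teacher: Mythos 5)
Your overall architecture matches the paper's: both directions reduce to a per-cell case analysis over the permitted internal/medial/terminal configurations, recurrence is handled via the forbidden-subconfiguration characterization, and your ``only if'' induction up the cell-tree is essentially the paper's strong induction on $|C|$ organized around Lemma~\ref{lem:recursion} (cutting the edge above a cell's origin-facing vertex). However, two of your steps are not right as stated. First, your resolution of the multiple-toppling worry --- ``the standard fact that the set of cells toppled over the entire avalanche equals the set toppled in the first wave'' --- is not a fact on the expanded cactus: Section~5 exhibits a configuration (Figure~\ref{fig:sandpile-cactus-counterexample}) in which a vertex topples only in a later wave, and this failure is precisely why the paper needs a separate theorem with extended filling rules there. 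What saves the present theorem is its hypothesis that $G$ is a decorated \emph{rooted} subtree: its root vertex has degree $2 < 3$, so by Lemma~\ref{lem:origin-multiple-topplings} no vertex topples more than once and there is only one wave. That one-line observation, not a first-wave reduction, is what you should invoke; the reduction you propose to ``establish once and for all'' is false in the generality you state it.

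Second, in the ``if'' direction your dismissal of a forbidden subconfiguration straddling $C$ and a radical rests on the claim that ``the rules attach a radical to a vertex of height $1$ or $2$ only when that radical is required strongly allowed.'' That is false: in a 3-2-2 terminal cell one of the two radicals attached to height-2 vertices may be merely weakly allowed, and the stopper requirements (3-3-2 medial and terminal cells, etc.) also permit weakly allowed radicals on height-2 vertices. Those cases must instead be killed by neighbor counting: a height-2 vertex in an FSC needs two FSC-neighbors, its toppling cell-mates are excluded (Lemma~\ref{lem:possible-forbidden-subconfigurations}), so in the 3-2-2 case the FSC is forced to contain both 2's and hence to make both radicals weakly allowed, contradicting the ``at least one strongly allowed'' clause --- this is the paper's cases 5 and 6. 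Note also that ``strongly allowed'' is defined by attaching a height-\emph{1} vertex, so deploying it against a height-2 attachment requires the extra subset argument the paper spells out. Both points are repairable, but as written your justifications for them would not go through.
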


To prove this, we will use the following propositions.
(Unless otherwise stated, $G$ is a connected finite subgraph with origin of an
infinite connected $d$-regular graph $\Gamma$.)
\begin{lemma}
\label{lem:origin-first-repeated-toppling-vertex}
Let $v \in G$ be given such that $\chi(v) = d$.  Let
$v_1, v_2, \ldots, v_k$ be a toppling sequence for
the avalanche caused by adding a grain to $v$.  Then,
if $v_1, v_2, \ldots, v_k$ are not all distinct, then the
first repeated vertex in the sequence is $v$.
\end{lemma}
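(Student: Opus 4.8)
The plan is to first identify $v_1$ and then argue by contradiction with a short grain-counting estimate. Since $\chi$ is stable and $\chi_0$ (the configuration $\chi$ with one extra grain at $v$) agrees with $\chi$ away from $v$ while $\chi_0(v) = \chi(v) + 1 = d+1$, the vertex $v$ is the only vertex of $\chi_0$ with height exceeding $d$; because a toppling sequence must satisfy $\chi_0(v_1) > d$, this forces $v_1 = v$. Now suppose the $v_i$ are not all distinct and let $j$ be the least index with $v_j \in \{v_1, \dots, v_{j-1}\}$, so that $v_1, \dots, v_{j-1}$ are pairwise distinct and $v_j = v_i$ for a unique index $i < j$; since $v_1 = v$, it suffices to show that $v_i = v$. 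I would assume the contrary, writing $w := v_j = v_i$ with $w \ne v$, so that $i \ge 2$.

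The key step is to bound the number of grains that reach $w$ before it topples for the second time at step $j$. Let $r_1$ be the number of grains $w$ receives during steps $1, \dots, i-1$ and $r_2$ the number it receives during steps $i+1, \dots, j-1$. Each such grain is deposited by a neighbour of $w$ in $G$ toppling at the corresponding step, and since $v_1, \dots, v_{j-1}$ are distinct and none of them except $v_i = w$ equals $w$, the topplings counted by $r_1$ and by $r_2$ occur at distinct vertices, all lying in $N_G(w)$; hence $r_1 + r_2 \le \deg_G(w) \le d$, the last inequality because $G$ is a subgraph of the $d$-regular graph $\Gamma$. On the other hand, $w$ does not topple during steps $1, \dots, i-1$, so $\chi_{i-1}(w) = \chi(w) + r_1$, and the legality condition at step $i$ gives $\chi(w) + r_1 > d$; after $w$ topples at step $i$ its height drops by $d$, and it does not topple again before step $j$, so $\chi_{j-1}(w) = \chi(w) + r_1 - d + r_2$, and the legality condition at step $j$ gives $\chi(w) + r_1 + r_2 > 2d$. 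But $\chi$ is stable, so $\chi(w) \le d$, and together with $r_1 + r_2 \le d$ this yields $\chi(w) + r_1 + r_2 \le 2d$, a contradiction. Therefore $v_j = v$, which is precisely the claim.

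The only delicate point is the bookkeeping behind $r_1 + r_2 \le \deg_G(w)$: one must check that the two groups of grain-depositing topplings really come from \emph{distinct} neighbours of $w$ (this is exactly where distinctness of the prefix $v_1, \dots, v_{j-1}$ is used) and that $w$ receives nothing at step $i$ itself, i.e.\ that $\Gamma$ has no loops. Everything else is immediate: the identification $v_1 = v$, the degree bound from $d$-regularity, and the two legality inequalities coming straight from the definition of a toppling sequence. Note in particular that the stability requirement on the \emph{final} configuration in the definition of a toppling sequence is never used, and that the argument is valid for an arbitrary connected finite subgraph $G$ with origin of any infinite $d$-regular $\Gamma$, not just the expanded cactus.
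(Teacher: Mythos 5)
Your proof is correct and follows essentially the same route as the paper: both bound the number of grains the first repeated vertex can receive before its second toppling by $d$ (one per distinct neighbour in the repeat-free prefix) and conclude that only the vertex holding the added grain can reach height $d+1$ again. The paper packages this as a single height formula with an indicator term $1(v_t = v)$ rather than a contradiction, but the counting is identical.
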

\begin{proof}
Take $t$ to be the smallest positive
integer such that $v_t = v_{t'}$ for some $t' < t$.
For each $s$ from $1$ to $k$, let $\chi_s$
denote the not necessarily stable configuration obtained by starting
with $\chi$, adding a grain to $u$, and relaxing only the vertices
$v_1, \ldots, v_s$ in that order.
Now, we have that
$\chi_{t-1}(v_t) > d$ since $v_t$ is the $t$th toppling vertex in
the sequence.  But we also know that
$$\chi_{t-1}(v_t) = \chi(v_t) + 1(v_t = v) +
\sum_{s=1}^{t-1} 1(v_s \sim v_t) - d$$
However, the sum $\sum_{s=1}^{t-1} 1(v_s \sim v_t)$ is at most $d$,
since each of the $d$ neighbors of $v_t$ topples at most once
prior to the second toppling of $v_t$.  Thus, in order for
$\chi_{t-1}(v_t) > d$ to hold, we must have $v_t = v$.
\end{proof}

\begin{lemma}
\label{lem:origin-multiple-topplings}
Let $v \in G$ be given such that $\chi(v) = d$.
Then $v$ topples more than once in the avalanche
caused by adding a grain to $v$ if and only if
$v$ has $d$ neighbors in $G$ and each of those neighbors
topples.  In particular, if
$d_G(v) < d$, then no vertex topples more than once in
the avalanche caused by adding a grain to $v$.
\end{lemma}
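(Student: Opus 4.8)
The plan is to reduce this statement to Lemma \ref{lem:origin-first-repeated-toppling-vertex} together with a careful bookkeeping of how many grains $v$ accumulates over the course of the avalanche. Since the Abelian sandpile model is order-independent, I may fix one convenient toppling sequence $v_1, \dots, v_k$ for the avalanche caused by adding a grain to $v$, and work with the partial configurations $\chi_s$ obtained by adding a grain to $v$ and toppling $v_1, \dots, v_s$ in order, exactly as in the proof of the previous lemma.

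For the forward direction, suppose $v$ topples more than once. By Lemma \ref{lem:origin-first-repeated-toppling-vertex}, $v$ is the first repeated vertex in the toppling sequence, so there is a moment when $v$ topples for the second time; let $t$ be the index of that second toppling, i.e.\ $v_t = v$ and $v$ occurs exactly once among $v_1, \dots, v_{t-1}$. As in the previous lemma, the height of $v$ just before its second toppling is
$$\chi_{t-1}(v) = \chi(v) + 1 + \sum_{s=1}^{t-1} 1(v_s \sim v) - d,$$
where the ``$-d$'' accounts for the one previous toppling of $v$ itself. Since $\chi(v) = d$, the requirement $\chi_{t-1}(v) > d$ forces $\sum_{s=1}^{t-1} 1(v_s \sim v) \geq d$. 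Now the number of distinct neighbors of $v$ that appear among $v_1, \dots, v_{t-1}$ is at most $d_G(v)$, and each such neighbor appears at most once before this point (its first repeat, if any, would have to be $v$ by Lemma \ref{lem:origin-first-repeated-toppling-vertex}, but $v$ has not yet toppled a second time). Hence $\sum_{s=1}^{t-1} 1(v_s \sim v) \leq d_G(v) \leq d$, so equality holds throughout: $d_G(v) = d$ and every neighbor of $v$ in $G$ has toppled (at least once) before the second toppling of $v$. This proves the forward direction and also shows the contrapositive of the last sentence: if $d_G(v) < d$, then $v$ cannot topple twice, and by Lemma \ref{lem:origin-first-repeated-toppling-vertex} no vertex repeats at all, so no vertex topples more than once.

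For the reverse direction, suppose $v$ has $d$ neighbors in $G$ and each of them topples in the avalanche. I want to show $v$ topples at least twice. Consider the state of the relaxation at the first moment $s^\ast$ by which every neighbor of $v$ has toppled at least once and $v$ has toppled at most once; more precisely, track the running total of grains received by $v$. Each of the $d$ neighbors of $v$ contributes at least one grain to $v$ over the full avalanche, and $v$ also received the initial added grain, for a gross total of at least $d+1$ grains added to the starting height $\chi(v) = d$; subtracting the $d$ grains lost in $v$'s first toppling leaves $v$ with height at least $\chi(v) + 1 + d - d = d+1 > d$ at the end — but the final configuration is stable, a contradiction unless $v$ toppled a second time to shed those grains. (To make this rigorous I would argue that a neighbor toppling strictly after $v$'s first and only toppling would leave $v$ unstable at the end; a neighbor toppling before or between forces the count above; in either case $v$ must topple again.) The only subtlety is ensuring the accounting of grains into $v$ versus out of $v$ is done at a consistent point in the sequence, which is exactly the kind of partial-configuration argument used in Lemma \ref{lem:origin-first-repeated-toppling-vertex}.

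The main obstacle I anticipate is the reverse direction: one must be careful that ``each neighbor topples'' could in principle happen only \emph{after} $v$ has already finished, so the clean inequality $\chi_{t-1}(v) > d$ used in the forward direction is not directly available. The fix is to observe that stability of the final configuration together with $\chi(v) = d$ and $d$ incoming grains (one per neighbor) plus the added grain forces $v$ to lose grains at least twice, hence topple at least twice — i.e.\ run the grain-conservation count on the \emph{final} configuration rather than on an intermediate one. Once that is set up, both directions are short, and the final sentence about $d_G(v) < d$ falls out immediately by combining this lemma with Lemma \ref{lem:origin-first-repeated-toppling-vertex}.
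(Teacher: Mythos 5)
Your proof is correct. The forward direction is essentially the paper's argument verbatim: you locate the second toppling of $v$ via Lemma \ref{lem:origin-first-repeated-toppling-vertex}, compute $\chi_{t-1}(v)$, and force $\sum_{s=1}^{t-1} 1(v_s \sim v) = d$, which yields both $d_G(v) = d$ and the toppling of every neighbor; your derivation of the final sentence from the contrapositive plus Lemma \ref{lem:origin-first-repeated-toppling-vertex} is exactly how the paper dispatches it. Where you genuinely diverge is the reverse direction. The paper picks the \emph{first} time $t$ by which every neighbor of $v$ has toppled, argues that $v$ has by then toppled exactly once (it topples first since it alone is unstable, and by the forward direction it cannot have toppled twice before step $t$), and reads off $\chi_t(v) = d+1$ to force a second toppling. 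You instead run a global grain-conservation count on the \emph{final} configuration: if $v$ toppled $k$ times, its final height is $d + 1 + N - dk$ with $N \geq d$ incoming grains, and stability forces $dk \geq d+1$, hence $k \geq 2$. Your version is arguably cleaner — it avoids having to justify that $v$ has toppled exactly once at a particular intermediate moment (a point the paper handles somewhat tersely) and needs only the stability of the terminal configuration. The cost is that it is slightly less local: the paper's argument pinpoints \emph{when} the second toppling becomes inevitable, which is in the spirit of the wave decomposition used later, while yours only certifies that it happens. You correctly flagged the one real subtlety (neighbors toppling after $v$'s first toppling) and identified the right fix, so I consider the proposal complete modulo writing out the final-configuration bookkeeping explicitly.
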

\begin{proof}
Let $v_1, \ldots, v_k$ be a toppling sequence for the
avalanche caused by adding a grain to $\chi$ at $v$.

$(\Longrightarrow)$ Suppose $v$ topples more than once.
Let $t$ be the positive
integer such that $v_t = v$ and such that there exists
exactly one $t' < t$ such that $v_{t'} = v$.
Then, for each $s$ from $1$ to $k$, let $\chi_s$
denote the not necessarily stable configuration obtained by starting
with $\chi$, adding a grain to $u$, and relaxing only the vertices
$v_1, \ldots, v_s$ in that order.  We have
$$\chi_{t-1}(v) = d + 1 + \sum_{s=1}^{t-1} 1(v_s \sim v_t) - d$$
But since $v$ topples at step $t$ of relaxation, we have
$$\chi_{t-1}(v) \geq d + 1$$
This can only happen when $\sum_{s=1}^{t-1} 1(v_s \sim v_t)$
is its maximum value, $d$.  When
$\sum_{s=1}^{t-1} 1(v_s \sim v_t) = d$, we have that
$v$ has $d$ neighbors in $G$ and each of those neighbors topples,
as desired.

$(\Longleftarrow)$ Suppose that each neighbors of $v$ topples
more than once.  Choose $t$ to be the smallest positive integer
such that for each neighbor $u$ of $v$, there is a positive integer
$s \leq t$ such that $v_s = u$.  Then, by the converse,
$v$ has not toppled twice, so it has toppled exactly once by step $t$.
We have
$$\chi_t(v) = d + 1 + \sum_{s=1}^{t-1} 1(v_s \sim v_t) - d$$
and by hypothesis, we know that $\sum_{s=1}^{t-1} 1(v_s \sim v_t) = d$,
so
$$\chi_t(v) = d + 1$$
and the relaxation process requires that $v$ topple again.

The second statement of the lemma is immediate.

\end{proof}
In particular, given a radical of the expanded cactus,
adding a grain to the root vertex causes each vertex to topple
at most once.

\begin{lemma}
\label{lem:connected-toppling-vertices}
Let $v \in G$ be given such that $\chi(v) = d$.
Let $v_1, \ldots, v_k$ be a toppling sequence for the
avalanche caused by adding a grain to $v$.  For each
$t = 1, 2, \ldots, k$, define $S_t = \{v_s|1 \leq s \leq t\}$.
The subgraph of $G$ induced by $S_t$ is connected
and contains $v$.
\end{lemma}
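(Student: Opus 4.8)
The plan is to proceed by induction on $t$, tracking the invariant that $S_t$ induces a connected subgraph containing $v$. The base case $t=1$ is immediate: by Lemma~\ref{lem:origin-first-repeated-toppling-vertex}, the first toppling vertex $v_1$ must be $v$ itself (since the avalanche is caused by adding a grain to $v$, and $v$ is the only vertex whose height can exceed $d$ after the addition but before any toppling), so $S_1 = \{v\}$, which is trivially connected and contains $v$.

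For the inductive step, suppose $S_{t-1}$ is connected and contains $v$. If $v_t \in S_{t-1}$, then $S_t = S_{t-1}$ and we are done. So suppose $v_t \notin S_{t-1}$. The key observation is that for $v_t$ to topple at step $t$, we need $\chi_{t-1}(v_t) > d$, where $\chi_{t-1}$ denotes the configuration obtained from $\chi$ by adding a grain at $v$ and toppling $v_1, \dots, v_{t-1}$ in order. Since $v_t \neq v$ (as $v \in S_{t-1}$ but $v_t \notin S_{t-1}$), the grain added to $v$ does not raise the height of $v_t$. Thus the only way $\chi_{t-1}(v_t)$ can exceed its original value $\chi(v_t) \leq d$ is if $v_t$ received at least one grain from a toppling among $v_1, \dots, v_{t-1}$; that is, some $v_s$ with $s < t$ is a neighbor of $v_t$. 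Since $v_s \in S_{t-1}$, this shows $v_t$ is adjacent to a vertex of $S_{t-1}$, so $S_t = S_{t-1} \cup \{v_t\}$ is connected, and it still contains $v$.

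The argument is essentially elementary, so there is no serious obstacle; the only point requiring a little care is making precise that $\chi(v_t) \leq d$ (which holds because all configurations in the paper are assumed stable) and that the grain added at $v$ genuinely cannot help $v_t$ when $v_t \neq v$. One might also want to state explicitly that $\chi_{t-1}(v_t) = \chi(v_t) + \mathbf{1}(v_t = v) + \sum_{s=1}^{t-1}\mathbf{1}(v_s \sim v_t) - d\cdot(\text{number of times } v_t \text{ has toppled among } v_1,\dots,v_{t-1})$, and since $v_t \notin S_{t-1}$ that last count is $0$ and $\mathbf{1}(v_t = v) = 0$; hence $\chi_{t-1}(v_t) > d$ forces $\sum_{s=1}^{t-1}\mathbf{1}(v_s \sim v_t) > 0$, giving the needed neighbor in $S_{t-1}$. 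This completes the induction and hence the proof.
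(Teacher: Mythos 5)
Your proof is correct and rests on the same key observation as the paper's: a vertex other than $v$ can only exceed height $d$ at step $t$ if some earlier-toppling vertex is its neighbor. The paper packages this as a proof by contradiction (taking the earliest-toppling vertex in a putative component of $S_t$ not containing $v$) while you run a forward induction on $t$, but the argument is essentially identical.
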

\begin{proof}
Let $t$, $1 \leq t \leq k$ be given.
Clearly, $v \in S$.  Let $H$ be the subgraph of $G$
induced by $S_t$, and suppose for a contradiction that
$H$ has more than one component.  Let $C$ be a component
of $H$ that does not contain $v$.
Choose the smallest positive integer $s$ such that $v_s \in S_t$.
By this choice of $s$, we have the no neighbor of $v_s$ has toppled
by the $s$th step of relaxation.  Therefore, $\chi_{s-1}(v_s) =
\chi(v_s) \leq d$, so we are not allowed to relax $v_s$ at the
$s$th step, which gives us a contradiction.  Hence,
$H$ is connected.
\end{proof}

The following is an immediate consequence.
\begin{lemma}
Let $G$ be a finite decorated rooted subtree with root vertex
$o$ having height $3$.  Then
the cells that topple in the avalanche caused by adding a grain
to $o$ form a cluster.
\end{lemma}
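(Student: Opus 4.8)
The statement to prove is the final lemma: if $G$ is a finite decorated rooted subtree with root vertex $o$ of height $3$, then the set of cells that topple in the avalanche caused by adding a grain to $o$ forms a cluster (a connected union of cells). The plan is to leverage the three lemmas just established --- in particular Lemma~\ref{lem:connected-toppling-vertices}, which says that for any toppling sequence $v_1,\dots,v_k$ the prefix vertex sets $S_t=\{v_1,\dots,v_t\}$ induce connected subgraphs containing $o$. Fix one such toppling sequence for the avalanche; Lemma~\ref{lem:origin-multiple-topplings} (second statement) guarantees that on a decorated rooted subtree every vertex toppling at most once, so $S_k$ is precisely the set of vertices that topple. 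By Lemma~\ref{lem:connected-toppling-vertices} with $t=k$, the induced subgraph on $S_k$ is connected and contains $o$.

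The remaining work is to upgrade ``connected set of vertices'' to ``union of whole cells.'' First I would argue that the toppling set is a union of cells, i.e.\ whenever one vertex of a cell topples, all three do. The key structural fact about the expanded cactus is that each cell is a $3$-cycle whose three vertices each have degree $3$: two neighbors inside the cell, one neighbor outside. Suppose a vertex $u$ inside a cell $c$ topples. Then each of the two other vertices $u',u''$ of $c$ gains a grain from $u$'s toppling, and $u$ also gains grains back from each of $u',u''$ whenever they topple; since $\chi(o)=3$ and every other vertex starts at height $\le 3$, and on a tree-like subtree every vertex topples at most once, a short accounting of heights --- using the height-balance identity $\chi_{t-1}(v_t)=\chi(v_t)+1(v_t=v)+\sum_{s<t}1(v_s\sim v_t)-d$ appearing in the proofs above --- forces $u'$ and $u''$ also to reach height $>3$ and topple. (Effectively: once two vertices of a triangle topple, the third receives enough.) I expect to handle this by the same prefix-set argument: consider the first moment a vertex of $c$ topples; by connectivity of the growing toppling region this vertex was fed a grain by a neighbor, and then by tracking how grains circulate within the triangle one shows the other two vertices of $c$ must also become unstable before the avalanche ends.

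Once we know the toppling set is a union of cells, and that the vertex set it spans is connected (Lemma~\ref{lem:connected-toppling-vertices}), connectedness of the corresponding union of cells follows: if cells $c$ and $c'$ both topple and the vertex-induced subgraph on all toppled vertices is connected, any path between a vertex of $c$ and a vertex of $c'$ stays inside toppled vertices, hence inside toppled cells, giving a chain of toppled cells from $c$ to $c'$ (consecutive cells in the expanded cactus share a vertex, and an edge of the path crossing from one cell to another witnesses adjacency). Since $o$'s cell topples (as $\chi(o)=3$ forces $o$ to topple, after which its cell-mates topple by the union-of-cells claim), the cluster contains the origin cell as required.

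The main obstacle is the ``union of cells'' step: showing that a single toppled vertex drags its whole cell along. The subtlety is that this is \emph{not} true for an arbitrary graph --- it relies specifically on the degree-$3$ structure of cells in the expanded cactus and on Lemma~\ref{lem:origin-multiple-topplings} (no repeated topplings on a rooted subtree). I would prove it by choosing the earliest toppling inside a given toppled cell and analyzing the finitely many ways grains can enter and leave the triangle, or alternatively by a clean potential/counting argument on the triangle in isolation. Everything else is a direct application of the already-proved lemmas, so no heavy computation is anticipated beyond this local case analysis.
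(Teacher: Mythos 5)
Your argument contains a false intermediate claim. You propose to show that ``whenever one vertex of a cell topples, all three do,'' and you correctly flag this as the main obstacle --- but the claim is simply not true. Consider a terminal cell with configuration 3-2-1 (origin-facing vertex of height $3$), which the paper's own filling rules list as a valid filling: the height-$3$ vertex receives a grain, topples once, and sends one grain to each of its two cell-mates, raising them to heights $3$ and $2$; both remain stable and never topple. So a toppled cell need not have all three of its vertices topple, and no amount of grain-accounting on the triangle will rescue the claim. The good news is that the claim is also unnecessary: the paper defines a cell as toppling when \emph{at least one} of its vertices topples (this is how cell mass is defined in the introduction), so you do not need to upgrade the toppled vertex set to a union of whole cells.

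With that misreading removed, the lemma really is the immediate consequence the paper asserts it to be, and the first part of your proposal already contains the whole proof: by Lemma~\ref{lem:connected-toppling-vertices} the set $S$ of toppled vertices induces a connected subgraph containing $o$; the toppled cells are exactly the cells meeting $S$; each such cell is itself connected (a $3$-cycle) and intersects the connected set $S$, so their union is connected; and it is by construction a union of cells, hence a cluster, containing the origin cell since $o \in S$. Your final paragraph's path-chasing argument for connectivity of the union of cells is fine, but it should be run on ``cells meeting $S$'' rather than on ``cells all of whose vertices topple.''
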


\begin{lemma}
Let $G$ be a finite decorated rooted subtree with root vertex
$o$ with configuration $\chi$ such that
$\chi(o) = 3$, and let $c$ be cell with origin-facing
vertex having height less than $3$ (in particular, $c$ is not
the origin cell).  Then $c$ does not topple
in the avalanche caused by adding a vertex to $o$.
\end{lemma}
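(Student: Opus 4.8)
The plan is to reason about the structure of the avalanche using the connectivity of the toppling set (Lemma~\ref{lem:connected-toppling-vertices}) together with a careful accounting of how many grains a cell outside the origin cell can receive. The statement to prove is: if $G$ is a finite decorated rooted subtree with root vertex $o$, $\chi(o)=3$, and $c$ is a cell whose origin-facing vertex $u$ has height strictly less than $3$, then $c$ does not topple in the avalanche caused by adding a grain to $o$. First I would observe that, by the preceding lemma, the set of cells that topple forms a cluster containing the origin cell; so if $c$ topples, there is a chain of toppling cells from the origin cell to $c$. Since $c$ is not the origin cell, the grain added at $o$ does not land in $c$, so every grain $c$ ever receives must come from neighbors of its vertices that are \emph{outside} $c$ toppling. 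I would then argue that the only ``entry point'' into $c$ along the cluster-chain is through its origin-facing vertex $u$: the two non-origin-facing vertices of $c$ have their external neighbors deeper in the tree (away from the origin), and by Lemma~\ref{lem:connected-toppling-vertices} a toppling set that reaches those deeper cells would already have to contain a path through $c$ itself — in particular through $u$ — so nothing toppling ``below'' $c$ can be the \emph{first} external neighbor of a $c$-vertex to topple.

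Next I would set up the counting argument for $u$. The vertex $u$ has three neighbors: the two other vertices of $c$, and one vertex $w$ outside $c$ (lying in the parent cell, toward the origin). For $u$ to topple, it must reach height $\geq 4$. It starts at $\chi(u) \leq 2$. The two in-cell neighbors of $u$ can each topple at most once before $u$ first topples (by Lemma~\ref{lem:origin-multiple-topplings} and its corollary on radicals, no vertex in such a tree topples more than once; even without that, the first repeated vertex would be $o$ by Lemma~\ref{lem:origin-first-repeated-toppling-vertex}, and $o$ is not in $c$). Likewise $w$ topples at most once before $u$ first topples. So the total height $u$ can attain before its first toppling is at most $\chi(u) + 3 \leq 2 + 3 = 5$... which is $\geq 4$, so this crude bound is not enough by itself.

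The main obstacle, then, is sharpening this: I need to show that $u$ cannot in fact receive grains from all three neighbors before toppling, or more precisely that the in-cell neighbors of $u$ cannot both topple unless $u$ has already toppled. Here is where I would use the fact that the in-cell neighbors of $u$ — call them $a$ and $b$ — themselves need to reach height $4$, and by Lemma~\ref{lem:connected-toppling-vertices} the toppling set stays connected through the origin cell; so for $a$ or $b$ to topple, grains must propagate into $c$, and the \emph{only} vertex of $c$ adjacent to anything toward the origin is $u$. Thus the first vertex of $c$ to topple must be $u$, meaning $a$ and $b$ have not yet toppled when $u$ first topples, so $u$ receives at most $1$ grain (from $w$) plus its initial $\chi(u) \leq 2$, giving height at most $3 < 4$: contradiction. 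I would phrase this last step as: in the toppling sequence $v_1,\dots,v_k$, let $v_t$ be the first vertex lying in $c$; then $v_t$ must have a neighbor that toppled earlier (else $\chi_{t-1}(v_t) = \chi(v_t) \leq 3 < 4$, forbidden), and that earlier neighbor is not in $c$ by minimality of $t$, hence it is $w$, hence $v_t = u$; and at step $t$ the vertex $u$ has received exactly one external grain, so $\chi_{t-1}(u) \leq \chi(u) + 1 \leq 3$, contradicting that $u$ topples at step $t$. Since no vertex of $c$ topples, $c$ does not topple, as claimed. I expect the delicate point to be making rigorous that $w$ is the unique neighbor of $c$ outside $c$ that can topple before the first vertex of $c$ does — this is exactly where the tree structure of the underlying graph (so that $c$ separates its descendants from $o$) and Lemma~\ref{lem:connected-toppling-vertices} must be combined.
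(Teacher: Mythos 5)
Your argument is correct and is essentially the paper's own proof: both hinge on Lemma~\ref{lem:connected-toppling-vertices} plus the fact that the origin-facing vertex of $c$ is a cut-vertex separating $c$ (and everything below it) from $o$, so that vertex can have received at most one grain when it would first topple, giving height at most $\chi(u)+1\leq 3<4$. The only cosmetic difference is that you phrase it as ``the first vertex of $c$ to topple must be the origin-facing vertex,'' while the paper assumes the origin-facing vertex topples and bounds its height directly; the substance is identical.
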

\begin{proof}
Suppose that the avalanche caused by adding a vertex to $o$
has a toppling sequence $v_1, \ldots, v_k$.
Let $v$ be the origin-facing vertex of $c$.
We claim that $v$ does not topple.  Suppose for a contradiction
that $v_t = v$ for a (necessarily unique) positive integer $t$.
We note that $v$ is a cut-vertex of $G$, and only one neighbor
$u$ of $v$ in $G$ is contained in the connected component of
$G \setminus \{v\}$ containing $o$.  Thus, by Lemma
\ref{lem:connected-toppling-vertices}, we have that
$S_t = \{v_s|1 \leq s \leq t\}$ induces a connected subgraph of $G$.
Therefore, the only neighbor of $v$ that could topple before $v$
does is $u$.  But we then have
$\chi_{t-1}(v) = \chi(v) + 1 < d + 1$, so $v$ cannot be
the $t$th vertex to topple.  This is a contradiction.
So $v$ does not topple.
The other two vertices in $c$ are not in the same component
of $G \setminus \{v\}$ as $o$, and so they cannot topple either.
Hence, $c$ does not topple.
\end{proof}

The following transfer rule allows us to manipulate toppling
sequences in a way that makes the proof of the main theorem
easier.
\begin{lemma}
\label{lem:transfer}
Let $G$ with origin $o$ and configuration $\chi$ be given
such that $\chi(o) = d$.
Suppose that $v_1, \ldots, v_k$
is a toppling sequence where no vertex appears more than once.
If $1 \leq i < j \leq k$ and
for all $m$ such that $i \leq m < j$, we have $v_j \not\sim
v_m$, then the sequence
$v_1, \ldots, v_j, v_i, \ldots, v_{j-1}, v_{j+1}, \ldots, v_k$
is a toppling sequence.
\end{lemma}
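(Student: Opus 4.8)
The plan is to rely on the abelian property of the sandpile: having fixed the starting configuration $\chi_0$ (namely $\chi$ with an extra grain at $o$), the configuration obtained by performing a finite collection of topplings depends only on how many times each vertex topples, not on the order in which they are carried out. Since $v_1,\dots,v_k$ contains no repetitions, the configuration $\chi_s$ reached after the first $s$ topplings is determined by the set $B_s=\{v_1,\dots,v_s\}$ and is given explicitly by
$$\chi_s(x)=\chi_0(x)+\bigl|\{y\in B_s:y\sim x\}\bigr|-d\cdot 1(x\in B_s),$$
since each toppled vertex loses $d$ grains and sends one grain along each incident edge. When no vertex can topple twice --- which is automatic here, e.g.\ by the remark following Lemma \ref{lem:origin-multiple-topplings} for radicals --- this formula needs no appeal to the general abelian theorem, since there is nothing to reorder within a single vertex's topplings.

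The next step is to record how the proposed sequence compares with the original. I would write $w_1,\dots,w_k$ for the new sequence, so $w_m=v_m$ for $m<i$, $w_i=v_j$, $w_m=v_{m-1}$ for $i<m\le j$, and $w_m=v_m$ for $m>j$; set $A_s=\{w_1,\dots,w_s\}$ and let $\psi_s$ be the configuration after the first $s$ topplings of the new sequence, determined by $A_s$ via the displayed formula. Then $A_s=B_s$ for $s<i$ and for $s\ge j$ (so in particular $A_k=B_k$, whence $\psi_k=\chi_k$), while $A_s=B_{s-1}\cup\{v_j\}$ for $i\le s\le j$. The second condition defining a toppling sequence is then immediate: $\psi_k=\chi_k$ is stable by hypothesis. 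For the first condition, $\psi_{s-1}(w_s)>d$, the cases $s<i$ and $s>j$ follow at once from $\psi_s=\chi_s$. For $s=i$ one has $\psi_{i-1}=\chi_{i-1}$ and $w_i=v_j$; since $v_j\not\sim v_m$ for every $m$ with $i\le m\le j-1$, the displayed formula gives $\chi_{i-1}(v_j)=\chi_{j-1}(v_j)>d$. For $i<s\le j$ one has $w_s=v_{s-1}$ and $A_{s-1}=B_{s-2}\cup\{v_j\}$, with $v_{s-1}\notin A_{s-1}$ (the $v_m$ are distinct and $s-1<j$) and $v_j\not\sim v_{s-1}$ by hypothesis, so the displayed formula yields $\psi_{s-1}(v_{s-1})=\chi_{s-2}(v_{s-1})>d$. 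Hence $w_1,\dots,w_k$ is a toppling sequence.

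The argument is almost entirely bookkeeping, and I do not expect a genuine obstacle. The only points needing care are to state the explicit height formula cleanly and to justify its use here (each vertex topples at most once), and to keep the index shift straight so that the non-adjacency hypothesis $v_j\not\sim v_m$ for $i\le m<j$ is invoked at exactly the two places where it is needed --- once to bound $\chi_{i-1}(v_j)$ and once to bound $\psi_{s-1}(v_{s-1})$.
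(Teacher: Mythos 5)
Your proof is correct and follows essentially the same route as the paper's: both verify that each toppling in the rearranged sequence remains legal by observing that the non-adjacency hypothesis keeps the relevant heights unchanged, and that the final configuration is unaffected since the same set of vertices topples. Your version merely makes the paper's bookkeeping explicit via the set-based height formula.
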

\begin{proof}
We know that each vertex topples at most once, so each
vertex appears at most once in the toppling sequence.
In the original sequence, the number of grains on $v_j$
at the time it topples is equal to $\chi(v_j) +
\sum_{t = 1}^{j-1} 1(v_t \sim v_j)$.  Because none
of the vertices $v_i, \ldots, v_{j-1}$ are adjacent to $v_j$,
we have that the number of grains on $v_j$ before
$v_i$ topples is also $\chi(v_j) +
\sum_{t = 1}^{i-1} 1(v_t \sim v_j)$, which is the same
as the number of grains on $v_j$ at the time $v_j$ topples.
Consequently, moving $v_j$ before $v_i$ will still permit us
to legally topple $v_j$ at its new position in the sequence.
The other vertices $v_i, \ldots, v_{j-1}$ will have the same
number of grains when they topple in the new sequence as they
did in the old sequence.  Finally, the same vertices topple
in the same order, so after the topplings in each sequence,
the configurations are the same.  Therefore,
the new sequence gives a stable configuration after the topplings
are finished.
Thus, the new sequence,
$v_1, \ldots, v_j, v_i, \ldots, v_{j-1}, v_{j+1}, \ldots, v_k$,
is a toppling sequence.
\end{proof}

\begin{lemma}
\label{lem:recursion}
Let $G$ be a finite decorated rooted subtree with origin vertex
$o$ with configuration $\chi$ such that
$\chi(o) = 3$.  Let $c$ be a cell in $G$, and let
$v$ be the origin-facing vertex of $c$.  Let $e$ be the edge
connected $v$ to a vertex not in $c$, and define
$U$ to be the decorated rooted subtree that is
the connected component of $G \setminus \{e\}$ containing $v$.
If $v$ topples in
the avalanche caused by adding a grain to $o$, then,
for every vertex $u \in V(U)$, $u$ topples in the avalanche in $G$
caused by adding a grain to $o$ if and only if $u$ topples
in the avalanche in $U$ caused by adding a grain to $v$.
\end{lemma}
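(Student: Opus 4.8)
The plan is to show that the avalanche in $G$, restricted to $U$, behaves exactly like the avalanche in $U$ started by adding a grain to $v$. The key observation is that $v$ is a cut-vertex of $G$: removing the edge $e$ disconnects $G$ into $U$ (containing $v$) and another piece, and the only vertex of $U$ adjacent to anything outside $U$ is $v$, via the single edge $e$. So grains can only enter or leave $U$ through $v$. First I would fix a toppling sequence $v_1,\dots,v_k$ for the avalanche in $G$; by Lemma \ref{lem:origin-multiple-topplings} (applied in the decorated-rooted-subtree setting, where the root has degree less than $d$) no vertex topples more than once, so each vertex appears at most once in the sequence.

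Next I would use the transfer rule, Lemma \ref{lem:transfer}, to reorder the toppling sequence so that, once $v$ topples, all subsequent topplings of vertices of $U$ that are ``caused'' by $v$ happen in a contiguous block, and more importantly so that no vertex \emph{outside} $U$ topples between $v$'s toppling and any toppling of a vertex strictly inside $U$. Concretely: a vertex $u$ strictly inside $U$ is adjacent only to vertices of $U$, so it can be slid past any non-$U$ vertex without violating the hypothesis of Lemma \ref{lem:transfer}; and a vertex outside $U$ (including on the other side of $e$) is not adjacent to any vertex of $U$ except $v$, so those too can be slid past. After this rearrangement the topplings split cleanly: the sub-sequence consisting of $v$ followed by all vertices of $U$ that topple forms, on its own, a valid toppling sequence for the configuration $\chi|_U$ with an extra grain added at $v$ — because during that block the vertices of $U$ receive grains only from each other and from $v$ (never from across $e$), which is precisely the dynamics in $U$. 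Conversely, a toppling sequence in $U$ starting from $v$ can be spliced into the $G$-sequence right after $v$ topples, since the grains it deposits on vertices of $U$ are exactly those that would be deposited in $G$.

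The two directions then follow. If $v$ topples in $G$ and $u\in V(U)$ topples in $G$, then $u$ appears in the $U$-block of the rearranged sequence, hence topples in the $U$-avalanche; and since the $U$-avalanche dynamics is Abelian (order-independent), $u$ toppling is a property of the configuration $\chi|_U + $ (grain at $v$) alone. For the other direction, if $u$ topples in the $U$-avalanche, splice that $U$-sequence into the $G$-sequence after $v$'s toppling; stability of the final $G$-configuration follows because the spliced topplings are exactly the ones forced by the extra grains $v$ pushes into $U$, and the vertices outside $U$ are unaffected by the reordering (nothing adjacent to them changed). I expect the main obstacle to be bookkeeping in the rearrangement step: one must be careful that $v$ itself topples exactly once in each avalanche (guaranteed by Lemma \ref{lem:origin-multiple-topplings} since root degree is less than $d$), and that the grain ``added at $v$'' in the $U$-avalanche correctly accounts for the single grain $v$ pushes toward $U$ when it topples in $G$ — i.e., that the net effect of $v$'s toppling on $U$ is precisely one grain delivered along $e$, matching the one extra grain in the $U$-problem. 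Once that correspondence of grains is pinned down, the equivalence of ``$u$ topples'' in the two settings is immediate from Lemma \ref{lem:connected-toppling-vertices} (which confines the toppled set to a connected region) together with the Abelian property.
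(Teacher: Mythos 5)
Your proposal is correct and follows essentially the same route as the paper: fix a toppling sequence, note no vertex repeats, use the transfer rule (Lemma \ref{lem:transfer}) together with the connectivity of the toppled set (Lemma \ref{lem:connected-toppling-vertices}) to push all non-$U$ topplings before $v$, and then observe that the block $v, u_1, \ldots, u_{m'}$ is a valid toppling sequence for $\chi|_U$ with a grain added at $v$, the single grain crossing $e$ being accounted for exactly as you describe. The only cosmetic difference is that the paper slides the offending non-$U$ vertices backward past the $U$-block (matching the stated form of the transfer rule) rather than sliding $U$-vertices forward, and it gets both directions of the equivalence at once from the stability of the final configuration rather than an explicit splicing argument.
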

\begin{proof}
Let $\chi$ be a configuration on $G$, and let
$v_1, \ldots, v_k$ be a toppling sequence.
By hypothesis, $v$ is in this
sequence as vertex $v_\ell$.
First, we claim that there is a permutation
of this sequence such that all the vertices after $v$
in the sequence are in $V(U)$.  We use Lemma \ref{lem:transfer}
iteratively as follows: take the smallest index $i$ such
that $i > \ell$ and $v_i \notin V(U)$.  By Lemma
\ref{lem:connected-toppling-vertices}, we have that
the neighbor of $v$ that is not in $U$ appears before
$v$ in the toppling sequence.  Thus, since each vertex
topples at most once, we have $v_i \not\sim v$.  Moreover,
$v_i \not\sim w$ for any $w \in V(U)$, including
$v_{\ell+1}, \ldots, v_{i-1}$.  Thus, we can apply Lemma
\ref{lem:transfer} to bring $v_i$ directly before $v_\ell$.
After completing this process, we have a toppling sequence
of the form $w_1, \ldots, w_m, v, u_1, \ldots, u_{m'}$, where
each $w_i \notin V(U)$ and each $u_i \in V(U)$.  

Now, we want to show that $v, u_1, \ldots, u_{m'}$ is a toppling
sequence for the avalanche caused by adding a grain to the
configuration $\chi|_U$ at $v$ in $U$.  For each $i$, $1 \leq i \leq m'$,
we have that $u_i$'s neighbors are in $V(U)$.  Therefore,
$u_i$'s height just before it is toppled is the same in both
$w_1, \ldots, v, u_1, \ldots, u_{m'}$ and $v, u_1, \ldots, u_{m'}$.
Moreover, after toppling $v, u_1, \ldots, u_{m'}$, the resulting
configuration $\left(\chi|_U\right)_{m' + 1}$ must be stable,
for if this were not the case, we would be able to topple some vertex
$z \in U$, which we could have also toppled at the end of the
sequence $w_1, \ldots, w_m, v, u_1, \ldots, u_{m'}$.  Thus,
the sequence $v, u_1, \ldots, u_{m'}$ is a toppling sequence
for the avalanche caused by adding a grain to configuration
$\chi|_U$ at $v$ in $U$, and the lemma follows.
\end{proof}

\begin{lemma}
\label{lem:possible-forbidden-subconfigurations}
Let $\chi$ be a configuration on the decorated rooted subtree
$G$ with origin $o$.  Suppose $\chi(o) = 3$, and let
$v_1, \ldots, v_k$ be a toppling sequence for the avalanche
caused by adding a grain to $o$.  If $\chi$ has a forbidden
subconfiguration $S$, then no vertex in the sequence
$v_1, \ldots, v_k$ is in $S$.
\end{lemma}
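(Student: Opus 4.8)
The plan is to argue by contradiction: suppose $\chi$ has a forbidden subconfiguration $S$ but some vertex of the toppling sequence lies in $S$. Let $t$ be the smallest index with $v_t \in S$, and consider the configuration $\chi_{t-1}$ obtained by adding a grain to $o$ and toppling $v_1, \ldots, v_{t-1}$. The key observation is that toppling a vertex never \emph{decreases} the height of any other vertex, and the only vertex whose height was decreased relative to $\chi$ is each $v_s$ with $s < t$ — but by minimality of $t$, none of those are in $S$. Meanwhile, $v_t$ itself has had its grain count pushed strictly above $d$ (that is why it topples at step $t$), so in particular $\chi_{t-1}(v_t) > \chi(v_t)$. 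More importantly for the vertices of $S$, we have $\chi_{t-1}(w) \geq \chi(w)$ for every $w \in S$ (heights only go up for non-toppled vertices, and the toppled vertices $v_1,\ldots,v_{t-1}$ avoid $S$).

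First I would make precise the monotonicity claim: since $\chi(o) = 3 = d$, adding a grain makes $o$ the unique vertex eligible to topple initially, and relaxation only adds grains to neighbors of toppled vertices while removing $d$ from the toppled vertex itself. So for any vertex $w$ not among $v_1, \ldots, v_{t-1}$, $\chi_{t-1}(w) = \chi(w) + \#\{s < t : v_s \sim w\} \geq \chi(w)$. Then I would recall the definition of FSC from the introduction: $S$ is forbidden for $\chi$ means every $w \in S$ has $\chi(w) \leq \#\{$edges from $w$ to another vertex of $S\}$. The contradiction I want is that $S$ would then also be a forbidden subconfiguration for the configuration $\chi_{t-1}$ — but $\chi_{t-1}$, extended to an unstable configuration, arises in the middle of relaxing a recurrent starting point, and we need to derive an impossibility.

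The cleanest route is this: I claim $S$ remains an FSC when we delete the toppled vertices and work relative to $\chi_{t-1}$, \emph{and} that $v_t \in S$ now has $\chi_{t-1}(v_t) > d \geq d_G(v_t) \geq \#\{$neighbors of $v_t$ in $S\}$, directly violating the FSC condition $\chi_{t-1}(v_t) \leq \#\{$neighbors of $v_t$ in $S\}$. Wait — but the FSC condition is stated relative to $\chi$, not $\chi_{t-1}$. So the real argument is: because $\chi_{t-1}(w) \geq \chi(w)$ for all $w \in S$ and the FSC inequality $\chi(w) \leq \deg_S(w)$ holds for all $w \in S$, we would need $\chi_{t-1}(w) \leq \deg_S(w)$ to fail only by way of the grain additions — and the grain additions to $w \in S$ come from toppled neighbors, which (being outside $S$) do not contribute to $\deg_S(w)$. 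Hmm, that does not immediately give a contradiction with $\chi$ being an FSC of $\chi$ itself; rather, the point is that $v_t$ is ineligible to topple. Concretely: just before $v_t$ topples, its height exceeds $d$, so it exceeds $\deg_S(v_t)$; but all grains $v_t$ received came from toppled vertices, none in $S$; and $\chi(v_t) \leq \deg_S(v_t)$ since $S$ is an FSC of $\chi$. These three facts force the number of toppled neighbors of $v_t$ to be at least $d + 1 - \chi(v_t) \geq d + 1 - \deg_S(v_t) \geq d + 1 - d_G(v_t)$, which combined with $d_G(v_t) \leq d$ says every neighbor of $v_t$ in $G$ has toppled; in a decorated rooted subtree of the expanded cactus this is impossible for the vertex closest to $S$ along the path to $o$, or can be ruled out directly by Lemma \ref{lem:connected-toppling-vertices} together with the tree structure (the component structure prevents all three neighbors from toppling before $v_t$ when $v_t$ is the first vertex of $S$ to topple and $S$ is a nonempty set all of whose vertices are "low").

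The main obstacle I anticipate is handling the boundary/degree bookkeeping correctly: I need $\deg_S(v_t) \geq$ (number of toppled-before-$t$ neighbors of $v_t$) to be genuinely impossible, which requires using that $v_t$ is the \emph{first} vertex of $S$ to topple (so its toppled neighbors are exactly its non-$S$ neighbors) and that in a decorated rooted subtree, a vertex cannot have all of its (at most 3) neighbors both outside $S$ and already toppled — this is where Lemma \ref{lem:connected-toppling-vertices} enters, since the toppled set $S_{t-1}$ is connected and contains $o$, so it reaches $v_t$ through exactly one neighbor of $v_t$ (the cut-vertex/tree structure), forcing at most one toppled neighbor, whereas we would need at least $d - \deg_S(v_t) + 1 \geq 1$, and a more careful count using $\chi(v_t) \leq \deg_S(v_t)$ and $\chi(v_t) \geq 1$ closes the gap. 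I would write this final degree computation out carefully, as it is the crux; everything preceding it is routine monotonicity bookkeeping.
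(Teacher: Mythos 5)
Your overall strategy---isolate the first vertex $v_t$ of $S$ that topples and count the grains it holds at that moment---is sound, and it is essentially the paper's own argument with the packaging removed: the paper burns $v_1, \ldots, v_k$ in order using exactly the inequality $\chi(v_i) > 3 - \sum_{j<i} 1(v_j \sim v_i)$ and then invokes the standard fact that burnable vertices cannot lie in a forbidden subconfiguration. However, two of the steps you actually wrote down are wrong and would sink the proof if carried out as sketched. First, from ``the number of toppled neighbors of $v_t$ is at least $d+1-d_G(v_t)$'' together with $d_G(v_t) \le d$ you conclude that \emph{every} neighbor of $v_t$ in $G$ has toppled; this is a non sequitur---that inequality only yields at least one toppled neighbor. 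Second, you propose to finish by arguing via Lemma \ref{lem:connected-toppling-vertices} that the connected toppled set reaches $v_t$ through exactly one neighbor, ``forcing at most one toppled neighbor.'' That is false on the expanded cactus: cells are triangles, so the two neighbors of $v_t$ inside its own cell are adjacent to each other and can both topple before $v_t$ does. The connectivity lemma does not bound the number of toppled neighbors by one; the cut-vertex picture you are importing is precisely what is lost in passing from the $3$-regular tree to its decoration.

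Neither step is needed. Let $m$ be the number of indices $s<t$ with $v_s \sim v_t$, and let $\delta$ be the number of neighbors of $v_t$ lying in $S$. By minimality of $t$, no $v_s$ with $s<t$ lies in $S$, so $m \le d_G(v_t) - \delta$. Also $o \notin S$, since $\chi(o)=3$ exceeds $d_G(o)=2$ in a decorated rooted subtree; hence $v_t \ne o$ and $\chi_{t-1}(v_t) = \chi(v_t) + m$. The forbidden-subconfiguration condition gives $\chi(v_t) \le \delta$, whence $\chi_{t-1}(v_t) \le \delta + \left(d_G(v_t)-\delta\right) = d_G(v_t) \le 3$, contradicting the requirement $\chi_{t-1}(v_t) > 3$ for $v_t$ to topple at step $t$. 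This two-line count---which you gesture at with ``a more careful count \ldots closes the gap'' but never actually perform---is the entire proof; the monotonicity discussion and the appeal to the tree structure can be deleted.
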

\begin{proof}
To show this, it suffices to show that the vertices
$v_1, \ldots, v_k$ can be sequentially burned according to
the burning algorithm.  Indeed, $v_1$, which must be $o$,
can be burned because $\chi(o) = 3$ and $d(o) = 2$.
Now, for any positive integer $i$ from $2$ to $n$,
the number of unburnt neighbors of $v_i$ remaining
after we have burned $v_1, \ldots, v_{i-1}$ is
at most $3 - \sum_{j=1}^{i-1} 1(v_j \sim v_i)$.
But we know that $\chi(v_i) + \sum_{j=1}^{i-1} 1(v_j \sim v_i) > 3$.
This implies that $\chi(v_i) > 3 - \sum_{j=1}^{i-1} 1(v_j \sim v_i)$,
which is more than the number of unburnt neighbors of $v_i$.
Consequently, we can burn $v_i$ at that stage.  By the burning
algorithm, no forbidden subconfiguration $S$ can contain
any of the vertices $v_1, \ldots, v_k$
\end{proof}

\begin{proof}[Proof of Theorem \ref{thm:filling-rules}]
$(\Longrightarrow)$.  By strong induction on the number
of cells in $C$.  In the base case, suppose that $C$ is a one-cell
cluster.  Then the origin cell is a terminal cell.  Note that
every allowed cell configuration with origin-facing vertex having
height $3$ is allowed to be filled into a terminal cell.
We therefore check all five cases for the cell configuration
(up to permuting the two non-origin heights):
\begin{enumerate}
\item {\bf 3-3-3 and 3-3-2.}  The two non-origin vertices topple in the
avalanche, and so
the attached radicals must either be empty or have a root vertex
with height less than $3$.  In either case, the radicals
are stoppers, as prescribed by the filling rules.
\item {\bf 3-3-1.}  The non-origin vertex with height $3$ topples in
the avalanche, and so the attached radical must either be empty
or have a root vertex with height less than $3$, making such
radical a stopper.  Moreover, if the radical attached to the $1$
is not strongly allowed, then it forms a forbidden subconfiguration
when attached to the $1$.  Thus, the radical attached to the $1$
must be strongly allowed.  These are precisely the conditions
imposed by the filling rules.
\item {\bf 3-2-2.}  We need to show that at least one of the
attached radicals must be strongly allowed.  For if both are
weakly allowed, then the two radicals, when joined to the two
vertices with height $2$, have a forbidden subconfiguration.
\item {\bf 3-2-1.}  We must show that both radicals must be strongly
allowed.  If the radical attached to the $2$ is weakly allowed,
then that radical, union the two non-origin vertices of the origin cell,
contains a forbidden subconfiguration.  If the radical attached to the
$1$ is weakly allowed, then that radical, when attached to the $1$,
has a forbidden subconfiguration.
\end{enumerate}
Thus, the base case is complete.

Now for the induction step, we let $C$ be a cluster with $n \geq 2$ cells,
and we assume that $(\Longrightarrow)$ holds for any cluster
with fewer than $n$ cells.  Note that the origin cell of $C$
must be internal or medial.  We handle both cases in turn.

If the origin cell is medial, then exactly one non-origin vertex in
the origin cell is attached to another cell in $C$.  Call
that vertex $a$, and call the other non-origin vertex in the origin
cell $b$.  Denote by $c$ the vertex not in the origin cell
that is adjacent to $a$.
Let $U$ denote the decorated rooted subtree consisting
of $C$ minus the origin cell, rooted at vertex $c$.  Since $c$
topples, $a$ must topple.  The only four cell configurations for
the origin cell in which $a$ topples are 3-3-3, 3-3-2, 3-3-1, and
3-2-3 (where each triple of numbers denotes the heights of
$o$, $a$, and $b$, respectively).  We have four cases to check
to ensure that the radical attached to $b$ satisfies the filling rules:
\begin{enumerate}
\item {\bf 3-3-3, 3-3-2, and 3-2-3.}  Since $b$ topples,
the origin of the radical attached to $b$ must have height less
then $3$ (or else it would topple, too).  Thus, that radical
must be a stopper.
\item {\bf 3-3-1.}  If the radical attached to $b$ is weakly
allowed, then we have a forbidden subconfiguration when that
radical is attached to vertex $b$.  Thus, the radical attached
to $b$ must be strongly allowed.
\end{enumerate}
This means that the origin cell and the attached radical
must satisfy the filling rules.  Moreover, by Lemma \ref{lem:recursion},
when a grain
is added to the configuration $\chi|_U$ at $c$, the cells
that topple are precisely the cells in the cluster $U \cap C$
of $n-1$ cells.  Thus, by the induction hypothesis, those cells
and their attached radicals must satisfy the filling rules.
Therefore, the entire graph $G$ satisfies the filling rules with
respect to cluster $C$.

Now suppose that the origin cell is internal.  Then we must
have all three vertices in the origin cell topple.  This
means that the origin cell must be either 3-3-3 or 3-3-2 (with
the 2 in a non-origin vertex).  Removing the origin cell
induces two subradicals $U_1$ and $U_2$, and again by Lemma
\ref{lem:recursion}, the cells that topple in $U_1$ and $U_2$
are precisely $C \cap U_1$ and $C \cap U_2$, which are both
clusters of less than $n$ cells.  Those cells and their attached
radicals must satisfy the filling rules, and so the whole cluster
satisfies the filling rules.  This completes the induction
and the first half of the proof.

$(\Longleftarrow)$
For the second half of the proof, suppose that $G$ has a configuration
that satisfies the filling rules with respect to a cluster $C$ in $G$.
By inspection, it is clear that all of the cells in $C$ topple.
It remains to show that $G$ does not contain a forbidden subconfiguration.
Suppose for a contradiction that $S \subset V(G)$ is a forbidden
subconfiguration.  Then, we can take a connected component of $S$,
and that will also be a forbidden subconfiguration.  Note that $S$
cannot be a subset of a radical induced by $C$, as each radical must
be allowed.  Also, by Lemma \ref{lem:possible-forbidden-subconfigurations},
it follows that the vertices in $S$ are confined to the nontoppling vertices
in a single cell $c$ in the cluster and one or both of its attached radicals.
We examine each case of $c$:
\begin{enumerate}
\item {\bf $c$ is a 3-3-3, 3-3-2, or 3-2-3 medial cell.}  This is impossible,
as all of the vertices of $c$ topple.  Thus, $S$ is a subset of the attached
radical, which is impossible.
\item {\bf $c$ is a 3-3-1 medial cell.}  This implies that the attached
radical is weakly allowed, which is barred by the filling rules.
\item {\bf $c$ is a 3-3-3, 3-3-2, or 3-2-3 terminal cell.}  This is impossible,
as all of the vertices of $c$ topple.  Thus, $S$ is a subset of the attached
radical, which is impossible.
\item {\bf $c$ is a 3-3-1 or 3-1-3 terminal cell.}  Then $S$ must consist
of the nontoppling vertex with height $1$ plus some vertices in the attached
radical.  Thus, the attached radical is weakly allowed, when it is, in fact,
required to be strongly allowed by the filling rules.
\item {\bf $c$ is a 3-2-2 terminal cell.}  Then $S$ must contain both nontoppling
vertices with height $2$.  Let $U_1$ and $U_2$ be the attached radicals.
Then $S \cap V(U_1)$ and $S \cap V(U_2)$ have the property that every vertex
except the root has at least as many neighbors in $S$ as its height.
Therefore, $U_1$ and $U_2$ would each contain a forbidden subconfiguration
if its root were attached to a vertex with height $1$.  Thus,
$U_1$ and $U_2$ are both weakly allowed, which is not permitted by the filling
rules.
\item {\bf $c$ is a 3-2-1 or 3-1-2 terminal cell.}  We have two possibilities.
If the nontoppling vertex with height $2$ is not in $S$, then $S$ contains
the vertex with height $1$ and part of its attached radical, making that
attached radical weakly allowed.  But the filling rules do not permit this.
If the nontoppling vertex with height $2$ is in $S$, then $S$ must contain
the vertex with height $1$ also.  Then, the attached radical $U$ to
the vertex with height $2$ has a subset $V(U) \cap S$ satisfying
that every vertex in this subset, except for the root, has at least
as many neighbors in the subset as its height.  Therefore, if a vertex with
height $1$ is attached to $U$, this subset,
together with the newly-attached $1$, would form a forbidden subconfiguration.
Therefore, $U$ is weakly allowed.  Again, though, the filling rules do not
permit this.
\end{enumerate}
Therefore, we cannot find any such forbidden subconfiguration $S$.
Thus, as desired, $\chi$ is recurrent and adding a grain to the origin
causes the cells in $C$ to topple and no others.  This completes the proof.
\end{proof}

\subsection{Overall impact of filling rules}
The requirements for each cell configuration are
independent, and the requirements for each radical depend only on the cell to which the underlying
decorated rooted subtree is attached.  With the exception of a 3-2-2 terminal filling,
each assignment of a configuration to each cell in $C$
accounts for $\prod_{i=1}^{n+2} N_s(U_i)$ configurations,
since each radical $U_i$ must be strongly allowed
or a stopper, either of which is a restriction
that allows for $N_s(U_i)$ configurations on the radical.

However, a 3-2-2 terminal filling has a different effect on the number of allowed radicals on
the two attached decorated rooted subtrees.  Since neither radical has to be a stopper, but only one
has to be strongly allowed, this allows for three possibilities: both radicals are strongly allowed,
the first is weakly allowed and the second strongly allowed, or vice versa.  Since weakly allowed
radicals are twice as prevalent as strongly allowed radicals in the limit, this means that there
are 5 times as many configurations for the two attached radicals as there would be for any other
terminal filling.

Therefore, we can state that if we have a configuration of the cells of $C$
where there are $k$ terminal cells in $C$ with the 3-2-2 configuration, then
the number of overall configurations accounted for by the given configuration of $C$ is
$$5^k \prod_{i=1}^{n+2} N_s(U_i)$$

An easy and appropriate way to account for this fact is to treat the possibility of a 3-2-2
terminal filling as if it were five separate fillings.  If we count a 3-2-2 terminal fillings
as 5 \emph{effective fillings}, and all other fillings (internal, medial, and terminal)
as 1 effective fillings, then we have
\begin{enumerate}
\item 3 effective fillings for each internal cell
\item 4 effective fillings for each medial cell
\item 12 effective fillings for each terminal cell
\end{enumerate}
and
the total number of recurrent configurations in which
topplings occur in each cell in $C$ and no other cells is
$$3^{i} 4^{m} 12^{t} \prod_{k=1}^{n+2} N_s(U_k)$$
where $i, m, t$ are the number of internal, medial, and terminal cells of $C$.

\section{Extending the filling rules to finite copies of the expanded
cactus, generally}

In the previous section, we showed that the filling rules enumerate
all recurrent configurations on a decorated rooted subtree in which
adding a grain to the origin causes exactly $n$ cells to topple.
However, our main interest is to consider all recurrent configurations
on a sufficiently large finite copy of the expanded cactus
where adding a grain to the origin causes exactly $n$ cells to topple.
Although this extension may seem like it can be easily solved, the
extension to the expanded cactus presents complications that make
enumerating the configurations much more difficult.

\subsection{The liberty rule}
Let $G$ be a finite expanded cactus with origin $o$.  If $C$ is a
cluster about the origin that does not contain the cell directly
opposite $o$, then the height of the neighbor of $o$ not in the
origin cell is less than $3$.  Thus, no cell topples more than once,
and the filling rules can be used\footnote{Initially, we make a
small adaptation in that the radical attached to $o$ can be filled
by any stopper.}
to determine which stable configurations
have the property that adding a grain to $o$ causes precisely the cells
in $C$ to topple.

The issue is that positions resulting from applying the filling rules
no longer have to be recurrent.  To fix this, we impose an additional
requirement on our fillings.  In a configuration $\chi$ define a
\emph{liberty} to be a radical induced by $C$ that is attached to a
vertex in $C$ that is connected to $o$ by a path of vertices
of height $3$.  Note that the number of liberties depends on
the configuration.  We then require that at least one liberty be
strongly allowed.\footnote{Note that the filling rules already require
that a liberty be a stopper.}
This is justified by the following lemma.

\begin{lemma}
\label{lem:liberty-rule-1}
Let $G$ be a finite expanded cactus with origin $o$, and let
$C$ be a cluster of $G$ about the origin that does not contain
the cell directly opposite $o$.  If $\chi$ is obtained by applying
the filling rules to $C$ and its attached radicals (where the radical
attached to $o$ is a stopper), then $\chi$ is a recurrent configuration
if and only if it has at least one liberty that is a strongly allowed.
\end{lemma}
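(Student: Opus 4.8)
The plan is to prove Lemma~\ref{lem:liberty-rule-1} by combining the characterization of recurrent configurations via forbidden subconfigurations (and the burning algorithm) with the structural lemmas already established for toppling sequences. Since the filling rules already guarantee that every radical induced by $C$ is allowed and that every liberty is a stopper, the only possible obstruction to recurrence is a forbidden subconfiguration $S$ involving vertices outside the cluster together with some of the cluster's nontoppling vertices, or a forbidden subconfiguration straddling the "back side" of the origin cell (the neighbor of $o$ not in the origin cell, which now has height $3$ possibilities that did not arise in the decorated-rooted-subtree case). First I would reduce, exactly as in the proof of Theorem~\ref{thm:filling-rules}, to the case where $S$ is connected; by Lemma~\ref{lem:possible-forbidden-subconfigurations} applied to the avalanche started at $o$, no toppling vertex lies in $S$, so $S$ is confined to nontoppling vertices. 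The new feature compared to the rooted-subtree setting is that the path of height-$3$ vertices from $o$ can now "wrap around" through the back radical attached to $o$, so a forbidden subconfiguration could in principle use the back radical as a bridge between otherwise separate pieces.

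The heart of the argument is then to show: $\chi$ has a forbidden subconfiguration if and only if \emph{every} liberty is weakly allowed. For the ($\Leftarrow$) direction, assume every liberty is weakly allowed. Each liberty $U_i$, being weakly allowed, contains a subset $S_i \subseteq V(U_i)$ such that attaching the root of $U_i$ to a height-$1$ vertex produces an FSC inside $U_i \cup \{\text{that vertex}\}$; equivalently, every non-root vertex of $S_i$ already has at least as many $S_i$-neighbors as its height, and the root of $U_i$ needs exactly one more grain's worth of support. The claim is that the union of all these $S_i$, together with the nontoppling vertices of the cluster along the height-$3$ paths connecting the liberty roots to $o$ and the back vertex of the origin cell (which, by the filling-rule configurations, all have height $3$ and hence need exactly as many unburnt neighbors as their degree to remain unburnt), forms a forbidden subconfiguration: I would verify by the burning algorithm that no vertex of this union can be burned, checking the three cell types (internal, medial, terminal) against their allowed fillings to confirm that the height-$3$ cluster vertices have all their neighbors inside the candidate set. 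For the ($\Rightarrow$) direction, suppose some liberty $U_j$ is strongly allowed; I would run the burning algorithm and show it burns everything, propagating the fire from the sink through $U_j$ (which stays FSC-free even attached to a height-$1$ vertex), then along the height-$3$ liberty path to $o$, then outward through the rest of the cluster and the remaining radicals — using at each cluster vertex $v$ that $v$ is either a toppling vertex (burnable by Lemma~\ref{lem:possible-forbidden-subconfigurations}) or a nontoppling vertex whose height and neighbor count, per the filling rules, permit burning once one neighbor is burnt.

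The main obstacle I anticipate is the bookkeeping around the back radical attached to $o$ and the precise meaning of "path of vertices of height $3$": one must be careful that a liberty is defined relative to a specific configuration $\chi$, so the set of liberties and the relevant height-$3$ paths change with the filling, and the burning argument must be phrased uniformly over all fillings produced by the rules. A secondary subtlety is confirming that in a 3-2-2 terminal cell — the one filling where neither attached radical is forced to be a stopper — the two height-$2$ nontoppling vertices cannot be recruited into a forbidden subconfiguration with both attached radicals simultaneously, which is exactly where the "at least one strongly allowed liberty" condition does its work when such a terminal cell is itself a liberty-bearing cell. Once the burning-algorithm casework is organized by cell type and matched against the explicit filling-rule configurations listed in Section~\ref{sec:enumerating-the-configurations}, the proof reduces to a finite check analogous to, but slightly larger than, the one in Theorem~\ref{thm:filling-rules}.
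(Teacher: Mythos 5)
Your overall strategy is the paper's: both directions hinge on showing that $\chi$ fails to be recurrent exactly when every liberty is weakly allowed, with the forward implication witnessed by an explicit forbidden subconfiguration built from the weakly allowed liberties together with the height-$3$-connected part of the cluster, and the reverse implication proved by seeding the burning algorithm at a strongly allowed liberty, propagating along the height-$3$ path to $o$, then through $o$'s back radical, and then through the rest of the graph as in Theorem \ref{thm:filling-rules}. Your use of the witness subsets $S_i$ of the weakly allowed liberties (rather than the liberties in their entirety) is, if anything, more careful than the paper's own phrasing, since a whole radical reaches boundary vertices of degree $2$ whose height-$3$ configurations would violate the FSC condition.

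There is, however, one step that fails as written: you invoke Lemma \ref{lem:possible-forbidden-subconfigurations} ``applied to the avalanche started at $o$'' to conclude that no toppling vertex lies in a forbidden subconfiguration, hence that any FSC is confined to nontoppling vertices. That lemma is stated and proved for a decorated \emph{rooted} subtree, where the root has degree $2$, so a height-$3$ root can be burned at the very first step of the burning algorithm; in the present setting $o$ has degree $3$, that first burn is unavailable, and the conclusion is simply false---indeed, the entire point of the liberty rule is that the height-$3$ toppling vertices connected to $o$ \emph{can} sit inside an FSC when no liberty is strongly allowed. Your own construction two sentences later correctly puts exactly those height-$3$ vertices into the candidate FSC (though you mislabel them ``nontoppling''), so the reduction you state at the outset contradicts the argument you then give. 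Delete the reduction, keep the construction, and say plainly that the candidate FSC consists of the vertices of the cells whose origin-facing vertices are joined to $o$ by height-$3$ paths together with the witness sets of the (all weakly allowed) liberties; with that repair the proof goes through along the paper's lines.
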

\begin{proof}
$(\Longrightarrow)$  Suppose $\chi$ is recurrent.  Let $D$ be the
subcluster of $C$ consisting of all cells in $C$ whose origin-facing
vertex is connected to $o$ by a path of vertices with height $3$.
Also, note that every
liberty of $\chi$ must be attached to a vertex in $D$.
Now take $S$ to be the set of all vertices in $D$ and the liberties
of $\chi$ (with respect to $C$).  If every liberty of $\chi$
were weakly allowed, then $S$ would form a forbidden subconfiguration.
Thus, there must at least one liberty that is strongly allowed.

$(\Longleftarrow)$  Suppose that $U$ is a strongly allowed liberty of
$\chi$.  Apply the burning algorithm.  Since $U$ is strongly allowed,
every vertex in $U$ must burn.  Now, since $U$ is a liberty,
the vertex $v$ to which $U$ is attached burns, and there exists
a $vo$-path where each vertex has height $3$---all of those vertices
burn sequentially.  Once the origin burns, its attached radical must
burn (since it is allowed).  The remaining vertices in $G$ burn
according to Lemma \ref{lem:possible-forbidden-subconfigurations}
and the proof of Theorem \ref{thm:filling-rules}.  Therefore, $\chi$ is
recurrent.
\end{proof}

\subsection{The breakthrough problem}
The filling rules work when each vertex in the graph can only topple once,
such as in the decorated rooted subtree.  However, when the origin $o$
has degree $d$ in $G$, it can topple more than once.  We can arrange
the toppling sequence for the avalanche caused by adding a grain to $o$
so that we only topple $o$ if it is the only vertex with more than $d$
grains.  At any time we are required to topple $o$, it must have $d+1$ grains,
and so the unstable configuration $\psi$ at this time is formed
from a stable configuration $\psi'$ by adding a grain to $o$.
Then, by Lemma \ref{lem:origin-first-repeated-toppling-vertex}
and using the arrangement described above, we can create a toppling sequence
of the form
$$o, v_{11}, \ldots, v_{1n_1}, o, v_{21}, \ldots, o, v_{k1}, \ldots, v_{kn_k}$$
where for each $i$ from $1$ to $k$, no vertex is repeated in the sequence
$v_{i1}, \ldots, v{in_i}$.

In this way, we can envision an avalanche as a series of ``waves'' of
vertices that topple between successive topplings of $o$.  In the case
of a tree, no new vertices topple after the first wave.  This is
a consequence of the following lemma:

\begin{lemma}
\label{lem:tree-first-wave}
Let $G$ be a tree with root $o$ that is a finite subgraph of infinite connected
$d$-regular graph $\Gamma$.  Let $\chi$ be a configuration on $G$
such that $\chi(o) = 3$.  Suppose that $v \neq o$ topples in the avalanche
caused by adding a grain to $\chi$ at $o$.  Let
$$o, v_{11}, \ldots, v_{1n_1}, \ldots, o, v_{21}, \ldots, o, v_{k1}, \ldots, v_{kn_k}$$
be the toppling sequence for the avalanche caused by adding a grain to $\chi$ at $o$
organized into waves as described above (i.e., $o$ topples if and only if it
is the only vertex with height greater than $d$).  Then
$v \in \{v_{11}, \ldots, v_{1n_1}\}$.
\end{lemma}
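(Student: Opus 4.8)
The plan is to show that once the origin $o$ topples a second time, every vertex that is going to topple at all has already toppled in the first wave. The key structural fact is that in a tree, each non-root vertex $v$ has a unique neighbor on the path toward $o$ — call it the \emph{parent} of $v$ — and all other neighbors of $v$ are \emph{children}. I would first record the observation (a consequence of Lemma~\ref{lem:connected-toppling-vertices}) that the set of toppled vertices always induces a connected subgraph containing $o$; in a tree this means the toppled vertices form a subtree rooted at $o$, so if $v$ topples then so does its parent, and a vertex's first toppling can only be triggered after its parent has toppled.

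Next I would analyze a single wave $o, v_{i1}, \ldots, v_{in_i}$ for $i \ge 2$. The crucial claim is that no vertex outside $\{o, v_{11},\ldots,v_{1n_1}\}$ can appear in the second or any later wave. To see this, suppose toward a contradiction that $w \notin \{v_{11},\ldots,v_{1n_1}\}$ is the first such ``new'' vertex to topple, occurring in wave $i \ge 2$. Since the toppled set is a subtree rooted at $o$ and $w$ is new, $w$'s parent $p$ toppled in some wave $\le i$; and since $w$ is the \emph{first} new vertex, $p$ must have toppled in the first wave (it toppled at all and is not new). A vertex $w$ with $\chi(w) \le 3$ needs to receive grains from enough neighbors before it can topple; in a tree, $w$'s only neighbors are its parent $p$ and its children, and no child of $w$ can have toppled before $w$'s first toppling (children are strictly deeper in the subtree and cannot topple before their parent $w$). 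Hence before $w$ first topples, the only grain $w$ has received came from $p$, and $p$ can contribute at most one grain per toppling. So $w$ can first topple only after $p$ has toppled $\chi(w)$-many... more carefully: $w$ needs $\chi(w) + (\text{grains received}) \ge 4$, and grains received equals the number of times $p$ has toppled (children contribute nothing before $w$'s first topple). I would then push this: the number of times $p$ topples total is at most the number of waves that actually reach past $o$, but more to the point, I would use the fact that $p$ itself is not the origin (handle $p = o$ separately: $w$ is then in the origin cell's... no, $G$ is a tree, so $o$ has neighbors and a child $w$ of $o$ would have to be in the first wave since it receives a grain the first time $o$ topples and $\chi(w) \ge 1$... actually $\chi(w)=3$ suffices but $\chi(w)$ could be $1$, so this needs care).

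The cleanest route, which I would pursue, is an induction on wave number combined with Lemma~\ref{lem:origin-multiple-topplings} and Lemma~\ref{lem:recursion}: each time $o$ is about to topple for the $j$-th time ($j \ge 2$), the configuration is a stable configuration plus a grain at $o$; decompose $G$ at $o$ into the subtrees hanging off each neighbor of $o$. By Lemma~\ref{lem:recursion}-style reasoning applied at the neighbors of $o$, the cells that topple in the avalanche restricted to each such subtree $U$ (rooted at the neighbor $u$ of $o$) are exactly those that topple when a grain is added to $u$ in $U$ alone. Crucially, after the first wave, the configuration restricted to $U$ has already been ``relaxed with respect to a grain at $u$'' — the root $u$ of $U$ received its grain in the first wave and $U$ stabilized. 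So in every later wave, adding another grain to $u$ and relaxing $U$ will only topple vertices that already toppled, because the Abelian property means a vertex topples in the relaxation of $\chi|_U + \delta_u$ only if it is in the toppling support, and that support does not grow when we have already stabilized (the second relaxation of $\chi|_U$ after re-adding a grain at $u$ topples a subset of the first — this needs the monotonicity/Abelian lemma). I expect the main obstacle to be making this ``the toppling support does not grow on the second wave'' rigorous for a general root height rather than waving at it; I would isolate it as the statement that if $\psi$ is stable and relaxing $\psi + \delta_u$ topples vertex set $A$, then relaxing $(\psi + \delta_u \text{ relaxed}) + \delta_u$ topples a subset of $A$ — which follows from the least-action principle / Abelian property and the tree structure forcing $u$ to be a cut vertex, so all the ``new'' grains in later waves funnel through $u$ and cannot reach deeper than the first wave did.
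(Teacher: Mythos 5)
There is a genuine gap, and it sits exactly at the point you flag as ``the main obstacle.'' Your preferred route rests on the claim that if $\psi$ is stable and relaxing $\psi+\delta_u$ topples the set $A$, then relaxing $\bigl(\mathrm{relax}(\psi+\delta_u)\bigr)+\delta_u$ topples a subset of $A$, and you attribute this to the least-action principle / Abelian property together with $u$ being a cut vertex. But this monotonicity-of-waves statement is false on general graphs where the Abelian property holds just as well: the paper's Figure \ref{fig:sandpile-cactus-counterexample} exhibits a configuration on the expanded cactus in which a vertex topples in a later wave but not in the first. So the Abelian/least-action principle cannot be the reason; the entire content of the lemma is a tree-specific mechanism, and your proposal never actually supplies it. Your first sketch gets closer to that mechanism: you correctly observe (via Lemma \ref{lem:connected-toppling-vertices}) that before a vertex $w$ first topples, only its parent $p$ can have sent it grains, so a ``new'' vertex appearing in wave $i\geq 2$ forces $\chi(w)<d$ and hence forces $p$ to topple at least twice. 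But you stop exactly where the work begins, namely at showing that $p$ cannot topple twice.

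The paper closes this loop with an induction on the level of $v$: if $\chi(v)=d$, then $v$ topples as soon as its parent does, hence (by the inductive hypothesis) in the first wave; if $\chi(v)<d$, then the parent $u$ would have to topple twice, but the configuration reached just before the second toppling of $o$ is again a stable configuration plus a grain at $o$, so the inductive hypothesis applies to the \emph{restarted} avalanche and forces $u$ to topple in the first wave of that restarted avalanche (i.e., in wave two of the original). Since $u$ has at most $d-1$ grains after its first toppling (it received at most $d-1$ grains in wave one, as $v$ did not topple) and only one neighbor of $u$ can topple before $u$ within wave two (Lemma \ref{lem:connected-toppling-vertices} again), $u$ has at most $d$ grains when it is scheduled to topple again, a contradiction. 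Some version of this ``restart at the second wave and reapply the hypothesis'' step, or an equivalent quantitative accounting of the grains reaching $u$ between its two putative topplings, is what your proposal is missing; without it, neither of your two sketches constitutes a proof.
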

\begin{proof}
By induction on the level of $v$.  If $v$ is level $1$, then, by Lemma
\ref{lem:origin-multiple-topplings}, $o$ can only topple a second time
after $v$ has already toppled; thus, the lemma holds.
Now suppose the lemma holds for all vertices $v$ whose level is less than $n$,
and let $v$ be a vertex of level $n$.  If $\chi(v) = d$ and $v$ topples,
then it topples as soon as its neighbor $u$ on the unique $vo$-path topples.
By the induction hypothesis, $u$ topples in the first wave.  Therefore, after
$u$ topples, we have that $v$ has $d+1$ grains and can topple, meaning
that it is among the vertices toppling in the first wave.
So the inductive step holds if $\chi(v) = d$.  However, if $\chi(v) < d$,
then, since $v$ ultimately topples (meaning that it eventually gets $d+1$ grains),
its neighbor $u$ on the unique $vo$-path must topple twice (see Lemma
\ref{lem:connected-toppling-vertices}).  By the induction hypothesis,
$u$ topples in the first wave.
Now, consider the configuration $\psi$ reached after toppling the vertices
in the new toppling sequence up to (but not including) the second toppling of $o$.
Note that $\psi(u) < d$ ($u$ gained at most $d-1$ grains from toppling of
neighbors other than $v$ and lost $d$ grains when it toppled) and
$\psi$ is the unstable configuration reached by adding a grain to
stable configuration $\psi'$ at $o$, where $\psi'$ is $\psi$ minus a grain at $o$.
Note that
$$o, v_{21}, \ldots, o, v_{k1}, \ldots, v_{kn_k}$$
is a toppling sequence for the avalanche caused by adding a grain to $\psi$ at
$o$.  By the induction hypothesis, $u$ is among the vertices
$v_{21}, \ldots, v_{2n_2}$ that topple in the first wave of this new
sequence.  Also, we apply Lemma \ref{lem:connected-toppling-vertices}
to show that exactly one neighbor of $u$ topples prior to $u$ in
the first wave of this new toppling sequence.  Thus, when $u$ is scheduled
to topple, it has at most $d-1 + 1 = d$ grains.  But this is impossible.
Therefore, $u$ cannot topple twice.  Thus, $v$'s toppling implies
that $\chi(v) = d$, which is the case we have already handled,
and so the induction is complete.
\end{proof}

Thus, on trees, any vertex either topples in the first wave of an avalanche or
never.  This is not true, in general, for cacti.  To see this, consider
the following configuration on a subgraph of the expanded cactus.

\begin{figure}[htbp]
\begin{center}
\includegraphics[height=0.3\textheight]{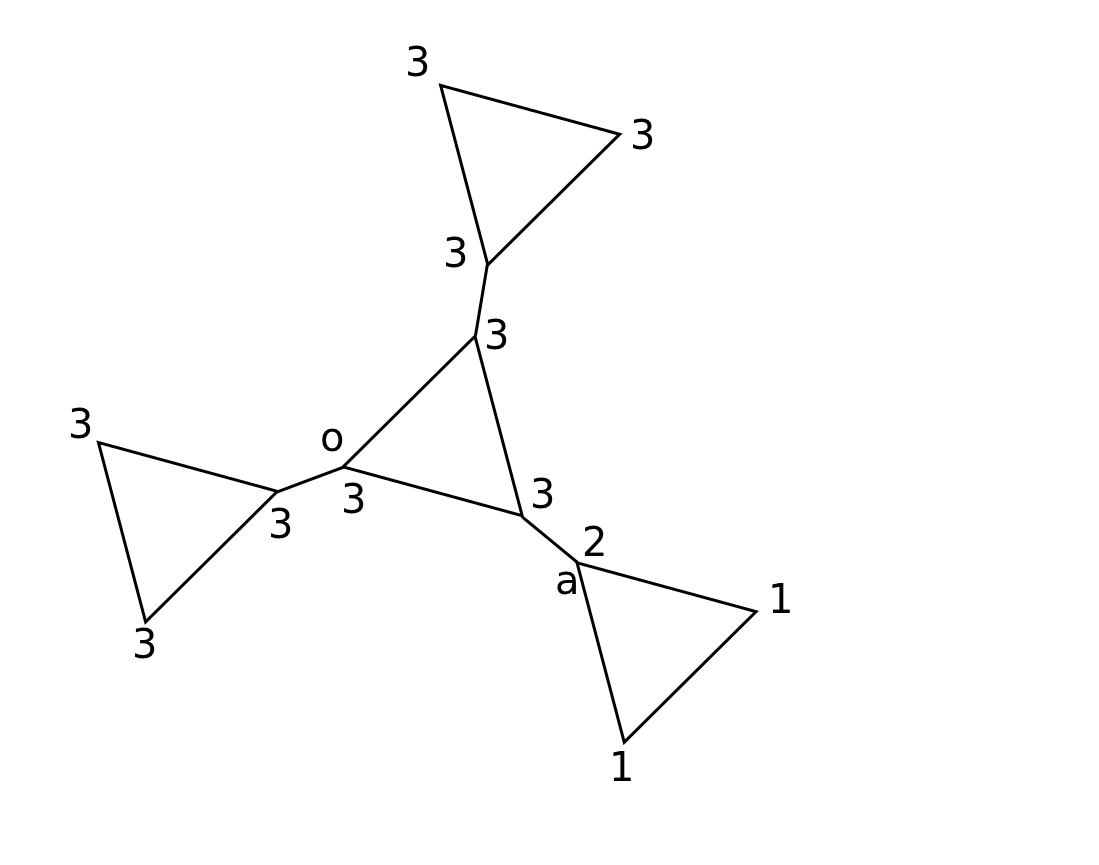}
\end{center}
\caption{A example of a configuration where a vertex (here, $a$)
does not topple in the first wave but does ultimately topple.}
\label{fig:sandpile-cactus-counterexample}
\end{figure}

Therefore, the filling rules described in Section \ref{sec:enumerating-the-configurations}
need modification to correctly enumerate the recurrent configurations where the cells
in a specified cluster topple.  One possible approach is to divide a finite expanded
cactus $G$ with origin $o$ into two rooted subtrees $U_1$ and $U_2$
by cutting the edge connecting $o$ to its neighbor $o'$ not in the origin cell.
Then, a configuration $\chi$ on $G$ can be associated with an ordered pair of
configurations $(\chi_1, \chi_2)$ on $U_1$ and $U_2$, respectively,
and the process of adding a grain to $o$ and relaxing the avalanche that
occurs (if there is one)
can be performed in the following manner:
\begin{enumerate}
\item Let $i = 1$, $\chi_1^0 = \chi_1$, and $\chi_2^0 = \chi_2$.
\item Add a grain to $\chi_1^{(i-1)}$ at $o$ to produce
configuration $\chi_1^{(i-1)}{}'$.  If $\chi_1^{(i-1)}{}'(o) \leq 3$, 
set $\chi_1^{(i)} = \chi_1^{(i-1)}{}'$ and
$\chi_2^{(i)} = \chi_2{(i-1)}$, and go to Step 8.
\item Relax $\chi_1^{(i-1)}{}'$ in $G_1$ to a stable configuration $\chi_1^{(i)}$.
\item Add a grain to $\chi_2^{(i-1)}$ at $o'$ to produce
configuration $\chi_2^{(i-1)}{}'$.  If $\chi_2^{(i-1)}{}'(o') \leq 3$, 
set $\chi_2^{(i)} = \chi_2^{(i-1)}{}'$, and go to Step 8.
\item Relax $\chi_2^{(i-1)}{}'$ in $G_1$ to a stable configuration $\chi_2^{(i)}$.
\item Increment $i$.
\item Go to step 2.
\item Stop; the avalanche is finished.  The resulting stable configuration
is associated with the ordered pair $(\chi_1^{(i)}, \chi_2^{(i)})$.
\end{enumerate}

\subsection{First-wave topplings and the filling rules}
From the procedure above, it is possible to determine which vertices
topple in the first wave of an avalanche caused by adding a grain to $o$.
Let $\chi$ be a configuration on finite expanded cactus $G$ with origin $o$
with $(\chi_1, \chi_2)$ the ordered pair of configurations on 
decorated rooted subtrees $U_1$, $U_2$,
as defined above.  Then the vertices that topple in an avalanche (if any)
caused by adding a grain to $\chi$ at $o$ are as follows:
\begin{enumerate}
\item No vertices, if $\chi_1(o) < 3$.
\item The vertices that would topple in $U_1$ in the avalanche caused by
adding a grain to $\chi_1$ at $o$,
if $\chi_1(o) = 3$ and $\chi_2(o') < 3$.
\item The vertices that would topple on $\chi_1$ in the avalanche caused by
adding a grain to $\chi_1$ at $o$, plus the vertices that would topple
on $\chi_2$ in the avalanche caused by adding a grain to $\chi_2$ at $o'$,
if $\chi_1(o) = \chi_2(o') = 3$.
\end{enumerate}

Therefore, to enumerate the recurrent configurations in which the cells in a
given cluster $C$ about the origin topple in the first wave
when a grain is added at $o$, we must do the following:
\begin{enumerate}
\item If $C$ does not contain the cell attached directly to $o$, fill
$U_1$ according to the filling rules, have $U_2$ be a stopper, and ensure
that at least one liberty is a strongly allowed stopper.
\item If $C$ contains the cell attached directly to $o$, fill
$U_1$ and $U_2$ according to the filling rules, and ensure that
at least one liberty is a strongly allowed stopper.
\end{enumerate}

If $C$ is a cluster containing the origin, we say that $\chi$
satisfies the \emph{extended filling rules} with respect to $C$
if
\begin{enumerate}
\item $C$ does not contain the cell directly attached to $o$,
$\chi_1$ satisfies the filling rules with respect to $C$,
and $\chi_2$ is a stopper, \emph{or}
\item $C$ contains the cell directly attached to $o$,
$\chi_1$ satisfies the filling rules with respect to $C \cap U_1$,
and $\chi_2$ satisfies the filling rules with respect to $C \cap U_2$.
\end{enumerate}
We say that such a configuration $\chi$ \emph{satisfies the liberty rule}
with respect to $C$ if
at least one liberty in $\chi$ is strongly allowed.

We want to prove the following result:
\begin{thm}
\label{thm:filling-rules-2}
Let $\chi$ be a configuration on $G$, a finite expanded cactus,
and let $C$ be a cluster of $G$.
Then $\chi$ is a recurrent configuration and
adding a grain to the origin vertex causes exactly the cells in $C$ to
topple in the first wave
if and only if $\chi$ satisfies the extended filling rules and
the liberty rule with respect to $C$.
\end{thm}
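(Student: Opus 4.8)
The plan is to reduce Theorem \ref{thm:filling-rules-2} to the already-established Theorem \ref{thm:filling-rules} (the rooted-subtree version) together with Lemma \ref{lem:liberty-rule-1} (the liberty rule), by analyzing the avalanche through the decomposition $G = U_1 \cup U_2$ obtained by cutting the edge $oo'$. The key observation that drives everything is the wave decomposition of the avalanche: adding a grain at $o$ and relaxing can be organized so that $o$ is toppled only when it is the unique unstable vertex, and by Lemma \ref{lem:origin-first-repeated-toppling-vertex} the first repeated vertex in a toppling sequence for an avalanche started at $o$ (with $\chi(o)=d$) is $o$ itself. Hence the vertices toppling strictly before the second toppling of $o$ are exactly the first wave, and by the procedure described just before the theorem statement, those are precisely (i) the vertices that topple in $U_1$ when a grain is added to $\chi_1$ at $o$, and, if $\chi_2(o') = 3$ as well, (ii) the vertices that topple in $U_2$ when a grain is added to $\chi_2$ at $o'$. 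I would state and prove this first-wave characterization as a preliminary lemma (it is essentially spelled out in the text already), being careful about the case analysis on $\chi_1(o)$ and $\chi_2(o')$.

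With that in hand I would split into the two cases in the definition of the extended filling rules. \textbf{Case 1: $C$ does not contain the cell directly attached to $o$.} Then the origin-facing vertex of the cell opposite $o$ has height less than $3$ in any configuration satisfying the extended filling rules (since only $o$'s own cell is filled on the $U_2$ side and $\chi_2$ is a stopper), so $o'$ does not topple, the first wave is confined to $U_1$, and ``exactly the cells of $C$ topple in the first wave'' is equivalent to ``exactly the cells of $C$ topple in the avalanche on $U_1$ started at $o$.'' By Theorem \ref{thm:filling-rules} applied to the decorated rooted subtree $U_1$, this is equivalent to $\chi_1$ satisfying the filling rules with respect to $C$. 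That handles the toppling condition; for recurrence, a configuration on $G$ is recurrent iff it has no FSC, and since cutting the edge $oo'$ cannot create an FSC that straddles the cut in a way not already witnessed inside $U_1$ or $U_2$ (an FSC is a connected set, so it lies in $U_1$ or in $U_2$ after we note $o$ and $o'$ are non-adjacent only across the cut; more precisely one must check an FSC cannot use the edge $oo'$), recurrence of $\chi$ on $G$ reduces to: $\chi_2$ has no FSC (automatic since $\chi_2$ is a stopper, hence allowed) and $\chi_1$ is recurrent \emph{as a configuration on $G$}, which by Lemma \ref{lem:liberty-rule-1} is equivalent to the liberty rule. \textbf{Case 2: $C$ contains the cell directly attached to $o$.} Here both $\chi_1(o) = 3$ and $\chi_2(o') = 3$ are forced by the filling rules on both sides (the origin cell of each subtree is internal or medial with origin-facing vertex of height $3$), so the first wave is the union of the $U_1$-avalanche from $o$ and the $U_2$-avalanche from $o'$; applying Theorem \ref{thm:filling-rules} separately to $U_1$ with cluster $C \cap U_1$ and to $U_2$ with cluster $C \cap U_2$ gives the toppling condition, and Lemma \ref{lem:liberty-rule-1} (suitably restated so that liberties are taken across all of $C$, not just one side) gives recurrence.

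The bookkeeping obstacle — and the step I expect to be the real work — is the recurrence direction in Case 2, specifically verifying that the liberty rule as stated (at least one liberty of $\chi$ with respect to $C$ is strongly allowed) is exactly the condition that prevents an FSC, when liberties can be attached on \emph{either} side of the cut $oo'$ and the ``path of height-$3$ vertices to $o$'' may cross from one subtree into the other through the origin cell. Lemma \ref{lem:liberty-rule-1} is proved only for the case where $C$ avoids the cell opposite $o$; I would need to either extend its proof verbatim (the burning-algorithm argument goes through: a strongly allowed liberty burns completely, ignites the height-$3$ path back to $o$, then $o$'s side and the opposite cell burn, then everything else by the argument in the proof of Theorem \ref{thm:filling-rules}) or invoke it twice and glue. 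A secondary subtlety is checking that an FSC in $G$ genuinely cannot use the edge $oo'$ — but since in any filled configuration satisfying the rules at least one of $o$, $o'$ has height $3 > $ its degree's worth of FSC-neighbors minus one, or more simply since $o$ topples and hence by Lemma \ref{lem:possible-forbidden-subconfigurations} lies in no FSC, no FSC contains $o$, so no FSC uses the edge $oo'$, and the reduction to $U_1$ and $U_2$ is clean. Modulo these two points the proof is a routine assembly of the cited results.
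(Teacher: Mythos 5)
Your proposal is correct and follows essentially the same route as the paper: the paper likewise assembles Theorem \ref{thm:filling-rules-2} from the first-wave characterization via the $U_1,U_2$ decomposition at the edge $oo'$, Theorem \ref{thm:filling-rules} applied to each rooted subtree, and an extension of Lemma \ref{lem:liberty-rule-1} to arbitrary clusters (its Lemma \ref{lem:liberty-rule-2}, proved "in parallel" by the same burning-algorithm argument you sketch). Your write-up is in fact more explicit than the paper's one-sentence assembly, particularly in noting that no FSC can use the edge $oo'$ because $o$ topples.
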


We first extend Lemma \ref{lem:liberty-rule-1} to all clusters:
\begin{lemma}
\label{lem:liberty-rule-2}
Let $G$ be a finite expanded cactus with origin $o$, and let
$C$ be a cluster of $G$ about the origin.
If $\chi$ satisfies the extended filling rules with respect to $C$
and its attached radicals, then $\chi$ is a recurrent configuration
if and only if it has at least one liberty that is a strongly allowed.
\end{lemma}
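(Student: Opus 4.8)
The plan is to mirror the proof of Lemma \ref{lem:liberty-rule-1} but handle the new case in which the cluster $C$ contains the cell directly attached to $o$, so that the graph is split into two rooted subtrees $U_1$ and $U_2$ both of which are filled according to the filling rules. In the case where $C$ does not contain the cell opposite $o$, the statement is exactly Lemma \ref{lem:liberty-rule-1} (with the small adaptation that $\chi_2$ is a stopper), so there is nothing new to do. Thus the bulk of the argument concerns the case where $C$ contains the origin-opposite cell. Writing $C_1 = C \cap U_1$ and $C_2 = C \cap U_2$, I would note that each $C_i$ is a cluster of the rooted subtree $U_i$ about its root (the root of $U_2$ being $o'$), and that the liberties of $\chi$ split accordingly into liberties of $\chi_1$ with respect to $C_1$ and of $\chi_2$ with respect to $C_2$, since a vertex of $C$ is joined to $o$ by a path of height-$3$ vertices if and only if it is joined within its own subtree to the root by such a path (the root vertices $o$, $o'$ both having height $3$ under the filling rules).

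For the forward direction I would argue by contrapositive: assume no liberty is strongly allowed and exhibit a forbidden subconfiguration. As in Lemma \ref{lem:liberty-rule-1}, let $D$ be the set of cells of $C$ whose origin-facing vertex is connected to $o$ by a path of height-$3$ vertices, and let $S$ consist of all vertices in $D$ together with all the (weakly allowed) liberties. Every vertex of $D$ has height $3$ and at least two neighbors inside $S$; each weakly-allowed liberty, when attached to its height-$1$-or-$2$ root vertex of $D$, contributes a forbidden subconfiguration in the usual way; so $S$ is a forbidden subconfiguration, contradicting recurrence. The only point requiring care is that $D$ now stretches across the cut edge $oo'$, but since both $o$ and $o'$ have height $3$ and are adjacent, the two halves of $D$ glue together into a single connected set with the required neighbor-count property.

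For the reverse direction, suppose some liberty $U$ of $\chi$ is strongly allowed; without loss of generality $U$ is a liberty of $\chi_1$ with respect to $C_1$. Run the burning algorithm. Since $U$ is strongly allowed, all of its vertices burn; since $U$ is a liberty, the vertex $v$ it is attached to is joined to $o$ by a path of height-$3$ vertices, which then burn in sequence, so $o$ burns. Once $o$ burns, the cross edge lets $o'$ burn (it has height $3$ and now one burnt neighbor suffices), and then $\chi_2$ burns: by the proof of Theorem \ref{thm:filling-rules} applied to the rooted subtree $U_2$ with its root $o'$ burnt, every vertex of $U_2$ burns, and likewise all remaining vertices of $U_1$ burn once $o$ is burnt. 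Hence every vertex of $G$ burns and $\chi$ is recurrent.

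The main obstacle is purely bookkeeping: verifying that the decomposition of liberties, the set $D$, and the burning propagation all behave correctly across the cut edge $oo'$ — in particular that Lemma \ref{lem:possible-forbidden-subconfigurations} and the burning argument from the proof of Theorem \ref{thm:filling-rules}, which were stated for a single rooted subtree, can be invoked separately on $U_1$ and $U_2$ and then stitched together at $o$ and $o'$. Once one is careful that both root vertices have height $3$ under the extended filling rules (so that burning crosses the edge in either direction and a forbidden subconfiguration straddling the edge is genuinely connected), the argument is a routine extension of Lemma \ref{lem:liberty-rule-1}.
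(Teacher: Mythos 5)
Your proof is correct and follows exactly the route the paper intends: the paper simply states that the proof ``parallels that of Lemma \ref{lem:liberty-rule-1},'' and your argument is precisely that parallel, with the additional bookkeeping of splitting at the edge $oo'$, decomposing the liberties between $U_1$ and $U_2$, and checking that both the forbidden-subconfiguration set $D$ and the burning propagation cross the cut edge because $o$ and $o'$ both have height $3$ under the extended filling rules. You have in fact supplied more detail than the paper does.
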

The proof parallels that of Lemma \ref{lem:liberty-rule-1}.

Theorem \ref{thm:filling-rules-2} follows from Theorem \ref{thm:filling-rules},
Lemma \ref{lem:liberty-rule-2}, and our characterization of the vertices
that topple in the first wave of an avalanche on a finite expanded cactus
caused by adding a grain at $o$.

\section{Counting the configurations}
Let $(G_i)_{i=1}^\infty$ be an increasing graph sequence exhausting
infinite expanded cactus $\Gamma$, and let $\epsilon > 0$ be given.
By Theorem 1, for each positive integer $n$,
we can choose a graph $G_N$ in the sequence
such that for every cluster $C$ about the origin with $n$ cells,
every induced decorated rooted subtree $U$ has $2 > x_U > 2 - \epsilon$.
To examine the limiting behavior of the avalanche dynamics, we consider
a sufficiently large expanded cactus to not only contain every cluster of size
up to $N$ but also ensure that every induced decorated rooted subtree
$U$ satisfies that $x_U$ arbitrarily close to $2$.

Ignoring the liberty rule,
the problem of enumerating the total number
of recurrent configurations on a sufficiently large copy of the expanded cactus
in which exactly $n$ cells topple in the first wave, divided by $\prod_{i=1}^{n+1} N_s(U_i)$,
is reduced to finding the total weight of animals of $n$
cells on a rooted 3-regular tree where an animals weight is the product of 3 for each internal vertex,
4 for each medial vertex, and 12 for each terminal vertex (corresponding to the number
of effective fillings).

The sum of the weights for rooted
animals with $n$ cells is given by the $x^n$ term in the power series expansion of the generating
functions $f(x)$ satisfying
$$f(x) = 12x + 8xf(x) + 3x[f(x)]^2$$
Also, we can write an equation determining $g(x)$, the generating function whose $x^n$ term gives
the sum of the weights of all animals with $n$ vertices:
$$g(x) = f(x) + [f(x)]^2$$

To help determine asymptotic behavior of the coefficients of $x^n$ for $f(x)$, and $g(x)$,
we use properties of the generating functions, especially their singularities (a good source
for the methods involved is \cite{Odlyzko95}).

First, using the quadratic formula, we can solve directly for $f(x)$, as follows:
$$3x[f(x)]^2 + (8x - 1)f(x) + 12x = 0$$
$$f(x) = \frac{1 - 8x - \sqrt{1 - 16x - 80x^2}}{6x}$$
Note that we choose the negative square root to ensure that $f(x)$ is defined at $x = 0$.

For $x > 0$, the smallest $x$ for which $f$ is not analytic is $\frac{1}{20}$; since all the coefficients
of $f$ are nonnegative real numbers, $f$ is analytic for $|x| \leq \frac{1}{20}$ except for $\frac{1}{20}$.
\cite{Odlyzko95}

Because of this, we can use the lemma of Polya:
\begin{lemma}[\cite{PolyaRead87} (see also \cite{HararyRobinsonSchwenk75})]
Let $f(x)$ be a generating function with the power series expansion
$$f(x) = a_0 + a_1 x + a_2 x^2 + \cdots$$
with radius of convergence $\alpha$.  Further, suppose that $f$ is analytic for each $x$ with $|x| \leq
\alpha$ except for $x = \alpha$.  Suppose that there exist functions $g$, $k$ analytic and regular in
a neighborhood of $\alpha$, with $g(\alpha) \neq 0$, such that
\begin{equation}
\label{eqn:polya-lemma-reqs}
f(x) = \left(1 - \frac{x}{\alpha}\right)^{-s} g(x) + \left(1 - \frac{x}{\alpha}\right)^{-t} k(x)
\end{equation}
where $s$ is not a nonpositive integer, and $t = 0$ or $t < s$.  Then the coefficients
$\{a_n\}_{n=1}^\infty$ satisfy the asymptotic relation
$$a_n \sim \alpha^{-n} n^{s-1} \frac{g(\alpha)}{\Gamma(s)}$$
\end{lemma}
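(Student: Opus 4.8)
The statement to be proved is Polya's transfer lemma, a form of Darboux's method, so the plan is to run the standard singularity-analysis argument. First I would normalize: the substitution $x \mapsto \alpha x$ lets me assume $\alpha = 1$, at the cost of multiplying each $a_n$ by $\alpha^{-n}$; all hypotheses are preserved (with $g(x),k(x)$ replaced by $g(\alpha x),k(\alpha x)$). The only closed-form computation needed is the coefficient asymptotics of a pure power: for $\sigma \notin \{0,-1,-2,\dots\}$,
$$[x^n](1-x)^{-\sigma} = \binom{n+\sigma-1}{n} = \frac{\Gamma(n+\sigma)}{\Gamma(\sigma)\,\Gamma(n+1)} = \frac{n^{\sigma-1}}{\Gamma(\sigma)}\bigl(1 + O(1/n)\bigr),$$
the last step being Stirling's expansion for the ratio of Gamma functions. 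The exclusion of nonpositive integers is exactly what keeps $\Gamma(\sigma)$ finite and nonzero and prevents $(1-x)^{-\sigma}$ from collapsing to a polynomial.

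Next I would build the Darboux comparison function. Fix a large integer $m$ and let $P(x),Q(x)$ be the degree-$m$ Taylor polynomials of $g,k$ about $x=1$; set $\phi(x) = (1-x)^{-s}P(x) + (1-x)^{-t}Q(x)$. Expanding $P(x) = \sum_{j=0}^{m} p_j(1-x)^j$ (with $p_0 = g(1)$) and similarly for $Q$, and applying the pure-power asymptotics termwise, one finds that $[x^n]\phi$ has leading term $\tfrac{g(1)}{\Gamma(s)}\,n^{s-1}$ together with a finite collection of terms of orders $n^{s-2},n^{s-3},\dots$ and $n^{t-1},n^{t-2},\dots$, each of which is $o(n^{s-1})$ because $t < s$; when $t = 0$ the factor $(1-x)^{-t}Q(x)$ is just a polynomial and contributes nothing for large $n$.

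Finally I would control the error $h := f - \phi$. Off the point $x=1$, $f$ — hence $h$ — is analytic on $|x| \le 1$ by hypothesis. Near $x=1$ the local expansion gives $h(x) = (1-x)^{-s}\bigl(g(x)-P(x)\bigr) + (1-x)^{-t}\bigl(k(x)-Q(x)\bigr)$, and since $g-P = O((1-x)^{m+1})$ and $k-Q = O((1-x)^{m+1})$, we get $h(x) = O((1-x)^{m+1-s}) + O((1-x)^{m+1-t})$, so $h$ extends to a $C^k$ function on the closed disk $|x|\le 1$ with $k$ growing linearly in $m$. Writing Cauchy's formula on the unit circle, $a_n - [x^n]\phi = [x^n]h = \tfrac{1}{2\pi}\int_0^{2\pi} h(e^{i\theta})e^{-in\theta}\,d\theta$, and integrating by parts $k$ times yields $[x^n]h = O(n^{-k})$. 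Choosing $m$ (hence $k$) large enough that $-k < s-1$ makes the error $o(n^{s-1})$, and combining with the previous paragraph gives $a_n = \tfrac{g(1)}{\Gamma(s)} n^{s-1} + o(n^{s-1})$; undoing the normalization restores the factor $\alpha^{-n}$ and the claimed $a_n \sim \alpha^{-n} n^{s-1} g(\alpha)/\Gamma(s)$.

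The main obstacle is the last step: verifying that after subtracting the explicit singular comparison function $\phi$, the remainder $h$ genuinely is smooth of the required order \emph{up to and including} the boundary circle. This is where the global hypothesis that $\alpha$ is the \emph{only} singularity on $|x| = \alpha$ is essential — otherwise $h$ would inherit other boundary singularities and the integration-by-parts bound would fail — and it requires carefully matching the purely local expansion $f = (1-x/\alpha)^{-s}g + (1-x/\alpha)^{-t}k$ near $\alpha$ against analyticity elsewhere, with the $t=0$ case treated separately (there $(1-x/\alpha)^{-t}k = k$ is regular at $\alpha$ and simply drops out). An alternative to the Darboux estimate is to deform the Cauchy contour to a Hankel-type contour pinched at $\alpha$ and extract the main term from the Hankel integral for $1/\Gamma$; this gives the same conclusion but wants a slightly stronger (indented-disk) analyticity assumption than is literally stated, so I would prefer the Darboux route.
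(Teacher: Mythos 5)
The paper does not prove this lemma at all: it is imported verbatim as a black-box citation to P\'olya--Read (and Harary--Robinson--Schwenk) and then applied to the explicit generating functions $f$ and $g$ with $\alpha = \tfrac{1}{20}$ and $s = -\tfrac12$. So there is no internal proof to compare against; your proposal supplies the standard Darboux/transfer argument, and in outline it is correct: normalize to $\alpha = 1$, use $[x^n](1-x)^{-\sigma} = \Gamma(n+\sigma)/\bigl(\Gamma(\sigma)\Gamma(n+1)\bigr) \sim n^{\sigma-1}/\Gamma(\sigma)$, subtract the singular comparison function built from Taylor polynomials of $g$ and $k$ at the singularity, and kill the remainder by Cauchy's formula plus repeated integration by parts on the circle, using the hypothesis that $\alpha$ is the \emph{only} boundary singularity. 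Three points in your sketch deserve to be made explicit if this were written out in full: (i) you must fix the branch of $(1-x)^{-s}$ analytic on the slit plane so that $\phi$ and hence $h=f-\phi$ are single-valued on the closed disk minus $\{1\}$; (ii) the passage from ``$g - P$ vanishes to order $m+1$'' to ``$h$ is $C^{k}$ up to the boundary'' should be justified by writing $g - P = (1-x)^{m+1}G$ with $G$ analytic and differentiating $(1-x)^{m+1-s}G$ term by term, since a pointwise $O((1-x)^{m+1-s})$ bound alone does not control derivatives; and (iii) among the lower-order terms $(1-x)^{j-s}$ with $j\ge 1$, those for which $s-j$ is a nonpositive integer degenerate to polynomials rather than contributing $n^{s-j-1}/\Gamma(s-j)$, which is harmless but should be noted so the termwise application of the pure-power asymptotics is legitimate. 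With those caveats the argument is sound, and it is in fact more informative than the paper, which simply trusts the reference.
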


The functions $f$ and $g$
can be written in the form of Equation \ref{eqn:polya-lemma-reqs}, allowing for asymptotic expansion.  In
particular
$$f(x) = -\frac{\sqrt{1 + 4x}}{6x} (1 - 20x)^{-\left(-\frac{1}{2}\right)} + \left(\frac{1}{6x} - \frac{4}{3}
\right)(1 - 20x)^0$$

Also, $$g(x) = \frac{5x - 1}{18x^2} \sqrt{1+4x} (1 - 20x)^{-\left(-\frac{1}{2}\right)}
-\left(\frac{16}{9} - \frac{13}{18x} + \frac{1}{18x^2}\right)$$

Therefore, the coefficients of the power series expansions of
both $f(x)$ and $g(x)$
are asymptotic
$K 20^n n^{-\frac{3}{2}}$ for an appropriate constant $K$
in each case.  Therefore, not taking into
account the liberty rule,
the fraction of recurrent configurations
in which topplings occur in exactly $n$ cells
when a grain is added to the origin
is asymptotic to
$$\frac{K 20^n n^{-\frac{3}{2}}}{5 \cdot 20^n} = K' n^{-\frac{3}{2}}$$

Moreover, we can create a sufficiently large expanded cactus,
and the results do not change based on which graph sequence was used.
This means that the cell-wise first-wave
critical exponent does not depend on the graph sequence we choose.

\section{Accounting for the liberty rule}

Ignoring the liberty rule results in overcounting the number of recurrent configurations in which exactly $n$ cells
topple by including non-recurrent configurations in the total.  Fortunately, it is possible to account for
the liberty rule as well.

Each configuration $\chi$ has a given number of liberties.
Recall that the substitution rules specify that among the liberties,
at least one
must be a strongly allowed stopper.  Since $\frac{7}{20}$ of the strongly allowed radicals are stoppers
as well (see Section \ref{sec:enumerating-the-configurations}),
the fraction of the configurations of the whole graph fitting $\chi$
compliant with the liberty rule is
$$\left(1 - \frac{13}{20}\right)^{\ell(\chi)}$$
where $\ell(\chi)$ is the number of liberties of the cluster configuration $\chi$.

For a cluster $C$, define $$\phi(C) = \frac{\sum_{\chi} 5^{s(\chi)} \left[1 - \left(
\frac{13}{20}\right)^
{\ell(\chi)}\right]}{\sum_{\chi} 5^{s(\chi)}}$$ where $\chi$ ranges over all configurations of $C$
that satisfy the extended filling rules, and $s$ maps each function $\chi$ to the
number of terminal cells that have a 3-2-2 filling.
For each $n$, define $\phi_n$ by
$$\phi_n = \frac{\sum_{C} \phi(C) 3^{i(C)}4^{m(C)}12^{t(C)}}{\sum_{C} 3^{i(C)}4^{m(C)}12^{t(C)}}$$
where $C$ runs over all clusters of size $n$, and $i(C), m(C), t(C)$ denote the number of internal,
medial, and terminal cells of $C$.

Recall that $g(x) = \sum_{n=0}^\infty b_n x^n$, where $b_n$ is the weighted number of clusters about
the origin of size $n$.  Define $h(x) = \sum_{n=0}^\infty \phi_n b_n x^n$, where $\phi_n b_n$ is the
number of recurrent configurations
in which adding a grain to the origin causes exactly $n$ cells to topple in the first wave.

We claim that $\phi_n > \frac{7}{48}$ for each $n$.  To do this, we show that at most
$\frac{7}{12}$ of the configurations of clusters satisfying the extended filling rules have zero liberties.
Let $\chi$ be a configuration of cluster $C$ with zero liberties satisfying the extended filling rules.  We will use the following method
to arrive at a terminal cell connected by a path of vertices with height 3 to the origin.  We start at
the origin cell.  From there, we will construct a connected chain of cells, all of which have a vertex
that can be connected to the origin by a path of vertices with height 3.  Given a chain of cells,
if the last cell is not terminal (satisfying our goal), we choose the next cell as follows:
if the last cell is medial, then we choose the cell in $C$ farther from the origin; that cell is connected
to the origin by a path of vertices with height 3 because the only valid fillings for the previous
cell possible in a zero-liberty configuration $\chi$ are 3-3-1 and 3-3-2, with the 3's facing the
two adjoining cells in $C$.  If the last cell is internal, then it is either 3-3-3 or 3-3-2, in which
case we arbitrarily select a cell not already in our chain directly attached to a 3 from the previous cell.
Because the expanded cactus is treelike and $C$ is finite, we must arrive at a terminal cell.
This terminal cell must either be 3-2-1 or 3-2-2.  However, we can replace the selected terminal cell
with 3-3-1, 3-3-2, or 3-3-3, giving the new configuration at least one liberty.  For every seven
zero-liberty configurations on $C$ satisfying the extended filling rules, there are five configurations on $C$ with at least one liberty also satisfying the extended filling rules.
Therefore, at most $\frac{7}{12}$ of the configurations on $C$ have zero liberties.  Thus,
$$\phi(C) \geq \frac{5}{12}\frac{7}{20} = \frac{7}{48}$$
Since this holds for all clusters $C$ of any size, we have $\phi_n \geq \frac{7}{48}$.  Also,
it follows directly from the definition of $\phi$ that $\phi_n \leq 1$ for each $n$.  Thus,
$\phi_n$ is bounded between two positive numbers.

Therefore, the number of recurrent configurations
in which exactly $n$ cells topple in the first wave is bounded between
$\frac{7}{48}$ and $1$ times a function asymptotic to $Kn^{-\frac{3}{2}}$.  This means that the
cell-wise first-wave critical
exponent of the expanded cactus is $\frac{3}{2}$,
and the proof
of this paper's theorem is established.

\section{Open questions and extension to arbitrary transitive decorations}
The first open question is to examine what the asymptotic behavior is
on the expanded cactus
for the total masses of avalanches that include more than one wave.
In decorating a tree, we lose the important property of trees that
any vertices that topple in an avalanche must topple in the first wave.
Evaluating how the possibility that new vertices may topple in a second or
subsequent wave
affects the asymptotic behavior of avalanche mass in a nontrivial fashion.
Thus, for the purposes of gathering evidence for mean-field theory,
it might be better to consider measures like first-wave critical exponent
that are easier to determine than the critical exponent for graphs like the
cactus.

In Section 1, we noted the following mean-field conjecture
\begin{conj}
Let $\Gamma$ be an infinite $d$-regular tree
and $\Gamma'$ be a graph quasi-isometric
to $\Gamma$ (including, but not limited to, a decoration of $\Gamma$).
Then $\Gamma'$ has the same critical exponent as $\Gamma$, namely $\frac{3}{2}$.
\end{conj}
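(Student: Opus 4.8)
The plan is to reduce the critical exponent to the type of the dominant singularity of an algebraic generating function, and then to argue that quasi-isometry to a $d$-regular tree is exactly the hypothesis that forces this singularity to be of square-root type, which is what produces the exponent $\frac{3}{2}$.

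First I would isolate \emph{why} the tree yields $\frac{3}{2}$, abstracting the mechanism of Dhar and Majumdar and of Sections 3--6 above. In every case treated here, the weighted count of configurations in which $n$ units topple is governed by a generating function satisfying a quadratic relation of the form $f = P(x) + Q(x)f + R(x)f^2$, whose solution has a singular term proportional to $(1-x/\rho)^{1/2}$; in the notation of the lemma of Pólya this is $s = -\tfrac12$, and $n^{s-1} = n^{-3/2}$. Dividing by the total count of recurrent configurations, which grows purely exponentially, leaves $p(n) \sim K n^{-3/2}$. The exponent $\frac32$ is therefore the universal signature of a mean-field (finite-variance) branching mechanism, and the entire problem becomes: show that this square-root singularity survives quasi-isometry.

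Second, I would exploit the structural characterization of graphs quasi-isometric to a regular tree. Such $\Gamma'$ are Gromov hyperbolic and admit a tree-decomposition into bags of uniformly bounded diameter glued in a tree pattern. Fixing such a decomposition, I would build a transfer recursion for the avalanche generating function over the tree of bags, directly generalizing the recursions for $N_s(U)$, $N_w(U)$, and $x_U$ in Section~\ref{sec:enumerating-recurrent-configurations} and the filling-rule bookkeeping of Sections 4--6: the finite local state now records a configuration on a bag together with its bounded boundary interaction (note that quasi-isometry need not preserve vertex degrees, so the local sandpile rule genuinely varies across $\Gamma'$). Because bag size and boundary degree are uniformly bounded there are only finitely many local states, so the recursion is a \emph{finite} positive polynomial system, and the generating function is algebraic. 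I would then invoke the Drmota--Lalley--Woods theorem, which states that a strongly connected, aperiodic positive polynomial system has a dominant singularity of exactly square-root type; combined with the reduction above this gives $p(n) \sim K' n^{-3/2}$ independently of the particular $\Gamma'$.

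The hard part is twofold, and it is exactly what keeps the conjecture open. The first obstacle is the multi-wave cascade flagged in Section 7: on a tree, every toppling vertex topples in the first wave (Lemma \ref{lem:tree-first-wave}), so the true critical exponent is captured by the first-wave count, but on a general quasi-isometric graph vertices can topple in later waves, and it is not obvious that the resulting back-propagation is governed by a finite transfer system. One would need to prove that these cascades contribute only bounded, uniformly local corrections that shift $\rho$ and the constants but not the singularity exponent --- precisely the step this paper avoids by restricting to first waves and cell masses. The second obstacle is verifying the strong-connectivity and aperiodicity hypotheses of Drmota--Lalley--Woods uniformly over all $\Gamma'$ quasi-isometric to the tree: a degenerate or periodic local system could produce a singularity exponent other than $\frac12$ and hence a different critical exponent, and ruling this out for every admissible decoration or quasi-isometry --- rather than for the single, highly symmetric expanded cactus handled here --- is where the genuine difficulty of the universality statement resides.
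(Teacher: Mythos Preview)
The statement you are addressing is labelled as a \emph{Conjecture} in the paper, and the paper does not prove it. On the contrary, the paper explicitly says that ``this has not been proven for non-trivial pairs of infinite graphs,'' and the main theorem of the paper --- that the cell-wise first-wave critical exponent of the expanded cactus is $\tfrac{3}{2}$ --- is offered only as \emph{evidence consistent with} the conjecture, not as a proof of it. So there is no proof in the paper to compare your proposal against.

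To your credit, you seem to understand this: your final two paragraphs correctly identify the multi-wave cascade problem of Section~5 and the need for uniform Drmota--Lalley--Woods hypotheses as genuine obstacles, and you write that these are ``exactly what keeps the conjecture open.'' That is an honest assessment, and your outline --- reduce to an algebraic generating function via a finite transfer system over a tree-of-bags decomposition, then invoke the square-root singularity theorem --- is a plausible line of attack. But it is a research programme, not a proof. The two gaps you name are real and unresolved: the paper's own Figure~\ref{fig:sandpile-cactus-counterexample} shows that later waves can genuinely enlarge the toppled set on non-tree graphs, and nothing in your sketch controls how this interacts with the transfer recursion; and the strong-connectivity/aperiodicity verification you defer is exactly the kind of uniformity statement over \emph{all} quasi-isometric $\Gamma'$ that nobody currently knows how to establish. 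So your proposal should be read as a heuristic strategy with explicitly flagged open steps, which is consistent with the conjecture's status in the paper, but it is not a proof and the paper does not claim one either.
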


We can propose a similar mean-field conjecture for cell-wise first-wave critical exponents:
\begin{conj}
\label{con:quasi-isometric-first-wave-critical}
Let $\Gamma$ be an infinite $d$-regular tree
and $\Gamma'$ be a graph quasi-isometric
to $\Gamma$ (including, but not limited to, a decoration of $\Gamma$).
Then $\Gamma'$ has the same cell-wise first-wave critical exponent as 
the first-wave critical exponent of $\Gamma$, namely $\frac{3}{2}$.
\end{conj}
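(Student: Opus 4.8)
The plan is to attack this in two stages of increasing generality, acknowledging at the outset that the full statement is genuinely open: quasi-isometry is a coarse-geometric equivalence, whereas the sandpile measure is exquisitely sensitive to local structure, so no general transfer principle across quasi-isometries is currently available. The tractable target is the parenthetical subclass---decorations of the $d$-regular tree $\Gamma$ by a finite transitive connected graph $H$ (the expanded cactus being the case where $H$ is a $3$-cycle)---where the tree-like recursion survives and the machinery of Sections 2--6 can be pushed through. Note first that on the tree $\Gamma$ itself, Lemma \ref{lem:tree-first-wave} forces every toppling into the first wave, so its first-wave exponent coincides with the Dhar--Majumdar value $\tfrac{3}{2}$; the target value is thus pinned down, and the task reduces to showing that an $H$-decoration reproduces it. I would first prove the conjecture for this subclass and then isolate exactly what additional input a general quasi-isometry would require.

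First I would set up the weakly/strongly allowed radical dichotomy for the $H$-decoration exactly as in Section 2. The structural input is that removing a single decorating cell (a copy of $H$) disconnects the decorated tree into finitely many decorated rooted subtrees, so the count of recurrent configurations on a radical obeys a transfer recurrence $x_U = R(x_{U_1}, \ldots, x_{U_r})$ for a rational map $R$ determined by a finite table analogous to Table \ref{tab:weakly-strongly-allowed-radical-count}. The key analytic step, generalizing the computation $x_n = 1 + \tfrac{1}{2}x_{n-1} \to 2$, is to show that $R$ has an attracting fixed point $x^\ast$ with derivative of modulus less than one; this is what lets Theorem \ref{thm:increasing-graph-sequence-epsilon} go through, making $x_U$ uniformly close to $x^\ast$ for deep enough radicals, \emph{independent of the exhausting graph sequence}. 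Transitivity of $H$ guarantees the recurrence is the same at every cell, which is essential for a single fixed point to govern the whole tree.

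Next I would generalize the filling rules (Theorem \ref{thm:filling-rules}) and the first-wave analysis (Theorem \ref{thm:filling-rules-2}) to the $H$-decoration. Lemmas \ref{lem:origin-first-repeated-toppling-vertex} through \ref{lem:tree-first-wave} are stated for arbitrary finite subgraphs of a $d$-regular graph and so transfer unchanged; they confine first-wave toppling to a connected cluster of cells and confine any forbidden subconfiguration to a single non-toppling cell and its radicals. From this one reads off effective filling weights for internal, medial and terminal cells, and the weighted cluster (``animal'') generating function $f(x)$ then satisfies an algebraic equation of the shape $f = w_{\mathrm{term}}\,x + (\text{linear in } x)\,f + (\text{quadratic branching term})\,f^2$, mirroring $f(x) = 12x + 8xf(x) + 3x[f(x)]^2$. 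The crucial universal feature is that this quadratic fixed-point equation forces a square-root branch point at the dominant singularity $\rho$, so the Polya/singularity-analysis lemma yields coefficients asymptotic to $K\rho^{-n} n^{-3/2}$. The exponent $-\tfrac{3}{2}$ is thus not special to the expanded cactus: it is the signature of a single square-root singularity of an algebraic generating function arising from a critical branching recursion on a tree, and it survives the liberty-rule correction because the argument of Section 7 bounds each $\phi_n$ between two positive constants (the $\tfrac{7}{48}$ bound used only boundedness, not the specific value, of the stopper fraction).

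The hard part is the step I cannot complete: passing from decorations to an arbitrary graph $\Gamma'$ merely quasi-isometric to $\Gamma$. A quasi-isometry may destroy the exact cell structure, the finite transfer table, and even the clean separation of $\Gamma'$ into rooted subtrees on which a single attracting fixed point governs $x_U$. What one would need is a \emph{coarse} version of the radical dichotomy---a proof that the ratio of boundary-admissible to boundary-inadmissible configurations on a large sub-region still converges to a constant governed by an attracting fixed point, and that the weighted count of toppling regions of size $n$ still obeys a branching recursion with an algebraic, square-root-type generating function. Establishing such a quasi-isometry-invariant transfer principle is precisely the content of the conjecture and is beyond the local, cell-by-cell methods developed here; I would regard it as the central obstacle, and would expect any genuine proof to require a new argument showing that the $n^{-3/2}$ law is a robust consequence of the Gromov-hyperbolic (tree-like) large-scale geometry rather than of any particular local decoration.
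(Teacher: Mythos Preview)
This statement is Conjecture~\ref{con:quasi-isometric-first-wave-critical} in the paper and is presented there as genuinely open: the paper offers no proof, only the framework of Section~8 together with the downstream Conjecture~\ref{con:numerator-denominator} and Open Question~1. You correctly flag the full quasi-isometric statement as beyond current methods, and your plan for the decoration subclass tracks the paper's Section~8 outline closely.

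Where your proposal overreaches is in treating as routine several steps the paper explicitly leaves open even for decorations. First, the generating-function equation for $f$ is not quadratic when $d>3$: it has degree $d-1$ in $f$ (see the displayed equation for $f(x)$ in Section~8), so your ``quadratic fixed-point forces a square-root branch'' claim needs the full smooth-implicit-function-schema argument, which neither you nor the paper supplies. Second, and more seriously, you never address the matching of exponential rates. The denominator count of recurrent configurations grows like $D^n$ for a constant $D$ coming from the strongly-allowed recursion, while the numerator's dominant singularity $\rho$ is governed by the filling weights $a_S$. That $\rho = 1/D$ is exactly the content of Conjecture~\ref{con:numerator-denominator}; without it the ratio $p^{cf}(n)$ would carry an uncancelled exponential factor and no power law would emerge at all. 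Your outline silently assumes this cancellation but provides no mechanism for it---the paper explicitly asks (Open Question~1) whether a combinatorial argument exists. Third, the existence and attractivity of a fixed point $x^\ast$ for the rational transfer map $R$ is verified in the paper only for the $K_3$ decoration (where it reduces to the affine map $x\mapsto 1+\tfrac12 x$); for a general transitive $H$ this is plausible but unproved, and you treat it as automatic.

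In short: your proposal is an accurate reconstruction of the paper's own heuristic program for decorations, but it does not close the gaps the paper itself identifies, and so it is a proof sketch of a conjecture rather than a proof.
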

(Recall that, on trees, the first-wave critical exponent is equal to the critical exponent
by Lemma \ref{lem:tree-first-wave}.)

Let $\Gamma$ be an infinite $d$-regular tree, and let $F$ be a connected
transitive graph on $d$ vertices.
(This means that $F$ is $k$-regular for some positive integer $k$.
We only consider such graphs $F$ for which $k > 1$;
in particular, we prohibit $F$ from being $K_2$.)
Then, if we define $\Gamma'$ to be the decoration of $\Gamma$ where each
vertex is replaced by a copy of $F$, we have that $\Gamma'$ is transitive
and $k+1$-regular.  We can consider the first-wave avalanche cell mass on $\Gamma'$.
We can then enumerate
all possible allowed cell configurations
and come up with tables similar to Tables \ref{tab:recurrent-configuration-origin}
and \ref{tab:recurrent-configuration-radical} in
Section \ref{sec:enumerating-recurrent-configurations} that indicate the following:
\begin{itemize}
\item for a given origin cell configuration, what combinations of weakly-allowed
and strongly-allowed radicals result in a recurrent configuration
\item for a given cell configuration
that is at the root of a decorated rooted subtree,
what combinations of weakly-allowed and strongly-allowed radicals
result in weakly-allowed and strongly-allowed whole radicals
\end{itemize}
We can compute the limiting weak-to-strong ratio $x$.
Using the same methods as in the expanded cactus,
we can show that for any positive integer $n$
and for any cluster $C$ about the origin of $n$ cells,
there is a deep enough finite subgraph $G$ of
$\Gamma$ that are the union of cells, where the ratio of
the number of recurrent configurations of $G$ to the product of the
number of strongly-allowed radicals induced by $C$ grows as $D^n$
for some $D \in \mathbb{Z}^+$.  This gives the exponential factor in
the denominator.

To calculate the exponential factor in the numerator, we use another counting
method analogous to the extended filling rules for the expanded cactus.
The same filling rule principles in the expanded cactus apply to the
general case:  each cell in the cluster must be filled in with
an allowed cell configuration
corresponding to its position within the cluster, and apart from the liberty
rule (which still holds), the requirements of the cluster's induced radicals
are completely determined by the configuration of the cell to which they
are directly attached.  In the general case, instead of internal, medial,
and terminal cells, there are $2^{d-1}$ classes of cells
in the cluster, identified by subsets of the set $\{2, 3, \ldots, d\}$.
For each cell in the cluster, reindex the vertices of $F$
so that the closest vertex in the cell to the origin vertex of $G$ has
index $1$.  Then the cell belongs to the class
$$\{i \in \{2, 3, \ldots, d\}| \textrm{a radical is attached to vertex }
 i \textrm{ of the cell}\}$$

We then use logic to calculate for each $S \subset \{2, 3, \ldots, d\}$,
the allowed cell configurations for a cell in class $S$ and the restrictions
on the attached radicals, if any.  Using these rules, we can compute
a weighting factor $a_S$ that gives the number of effective fillings.
To count the coefficient in the numerator, we use generating functions
to count the number of weighted clusters.  The generating functions $f(x)$
and $g(x)$ in the expanded cactus case can be adapted for this purpose.
We have
$$f(x) = x\left[\sum_{S \subset \{2, 3, \ldots, d\}} a_S (f(x))^{d-1-|S|}
\right]$$
The same argument in the expanded cactus that $g(x)$, the generating
function for the number of weighted clusters containing the origin, tells us
that the formula $g(x) = f(x) + (f(x))^2$ applies unchanged to the general case.

If Conjecture \ref{con:quasi-isometric-first-wave-critical} is true,
the following conjecture should hold regarding $g(x)$.
\begin{conj}
\label{con:numerator-denominator}
The generating function $g(x)$ has its closest singularity at the origin
at $x = \frac{1}{D}$.  Moreover, near $\frac{1}{D}$, the dominant term in
$g(x)$ is $(1 - Dx)^{\frac{1}{2}}$.
\end{conj}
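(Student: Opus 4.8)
The plan is to derive Conjecture~\ref{con:numerator-denominator} from Conjecture~\ref{con:quasi-isometric-first-wave-critical} together with the general-decoration analogues of Sections~\ref{sec:enumerating-recurrent-configurations}--\ref{sec:enumerating-the-configurations} and the classical singularity analysis of smooth implicit-function schemes (the lemma of P\'olya recalled above). First I would carry the bookkeeping of the expanded-cactus case over to the $F$-decoration $\Gamma'$ of the $d$-regular tree. For a cluster $C$ of $n$ cells about the origin inside a sufficiently deep finite piece of $\Gamma'$ (whose existence is the analogue of Theorem~\ref{thm:increasing-graph-sequence-epsilon}), the number of recurrent configurations is $C_1 D^n\prod_j N_s(U_j)$, the product running over the decorated rooted subtrees induced by $C$, with $D=R(x_\infty,\dots,x_\infty)$ the value of the ``$N_s$-merge'' polynomial $R$ (read off from the analogue of Table~\ref{tab:recurrent-configuration-radical}) at the limiting weak-to-strong ratio $x_\infty$; and the number of recurrent configurations in which exactly the cells of $C$ topple in the first wave is $\bigl[\sum_\chi a^{\chi}\,(\text{liberty correction})\bigr]\prod_j N_s(U_j)$, where $a^{\chi}$ is the product of effective fillings over the cells of $C$. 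Dividing, the factor $\prod_j N_s(U_j)$ cancels and $p^{cf}(n)=\phi_n b_n/(C_1 D^n)$, where $b_n=[x^n]g(x)$ is the total effective-filling weight of $n$-cell clusters about the origin and $\phi_n$ the averaged liberty correction; the general-decoration version of the ``at most a fixed fraction of cluster configurations have zero liberties'' argument should give constants $0<\delta\le\phi_n\le 1$.

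Granting Conjecture~\ref{con:quasi-isometric-first-wave-critical}, the cell-wise first-wave critical exponent of $\Gamma'$ is $\frac{3}{2}$, so $p^{cf}(n)=\Theta(n^{-3/2})$; together with $p^{cf}(n)=\phi_n b_n/(C_1 D^n)$ and $\delta\le\phi_n\le 1$ this forces $b_n=\Theta(D^n n^{-3/2})$, hence $b_n^{1/n}\to D$, so $g$ has radius of convergence exactly $1/D$ and its closest singularity to $0$ sits at $x=1/D$. For the shape of the singularity I would apply the smooth implicit-function schema to $f(x)=xP(f(x))$ with $P(y)=\sum_{S\subseteq\{2,\dots,d\}}a_S\,y^{d-1-|S|}$: one checks that $P$ has nonnegative coefficients, positive constant term $a_{\{2,\dots,d\}}>0$ and positive leading coefficient $a_\emptyset>0$, degree $d-1\ge 2$, and nonzero coefficient at every power $0,1,\dots,d-1$ (so $P$ is aperiodic). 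Hence $\tau\mapsto P(\tau)-\tau P'(\tau)$ has a unique positive root $\tau$, $f$ is singular only at $\rho=\tau/P(\tau)$ on its circle of convergence, and near $\rho$, $f(x)=\tau-c\sqrt{1-x/\rho}+O(1-x/\rho)$ with $c=\sqrt{2P(\tau)/P''(\tau)}>0$. Since $g=f+f^2$ and the derivative of $y+y^2$ at $\tau$ is $1+2\tau\ne 0$, $g$ has the same radius of convergence and $g(x)=(\tau+\tau^2)-c(1+2\tau)\sqrt{1-x/\rho}+O(1-x/\rho)$; with $\rho=1/D$ already known, the dominant term of $g$ near $1/D$ is a nonzero multiple of $(1-Dx)^{1/2}$, as claimed.

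The main obstacle is the appeal to Conjecture~\ref{con:quasi-isometric-first-wave-critical}. One half is free: from $\sum_{n\ge 1}p^{cf}(n)\le 1$ (the events for distinct $n$ are disjoint), $p^{cf}(n)=\phi_n b_n/(C_1 D^n)$, and $\phi_n\ge\delta$, we get $\sum_n b_n D^{-n}<\infty$, so $\rho\ge 1/D$. The reverse inequality $\rho\le 1/D$ is precisely the statement that the sandpile on $\Gamma'$ is not subcritical, and an unconditional proof would need either a general quasi-isometry-invariance theorem for the critical exponent, or the purely algebraic identity $P(x_\infty)=D\,x_\infty$. The natural route to the latter is to prove $\tau=f(\rho)=x_\infty$ --- that the singular value of the cluster generating function equals the fixed point of the weak-to-strong recurrence --- after which $P(x_\infty)/x_\infty=R(x_\infty,\dots,x_\infty)=D$ by definition of $D$. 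Establishing $P(x_\infty)=D\,x_\infty$ for every admissible $F$, i.e.\ reconciling the effective-filling weights $a_S$ with the coefficients of the $N_s$-merge recurrence at their common fixed point, is where I expect the real difficulty to lie; I do not see how to do it without a case-free correspondence between the filling-rule count and the $N_s$-recurrence.
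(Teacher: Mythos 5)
You should be aware at the outset that the paper contains no proof of this statement: it is posed as Conjecture~\ref{con:numerator-denominator} and deliberately left open, and the paper's own Open Question asks precisely for the combinatorial argument that would establish it. So the only thing to compare your attempt against is the paper's one-line heuristic that the conjecture ``should hold'' if Conjecture~\ref{con:quasi-isometric-first-wave-critical} is true.

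Within that framing, it is worth separating what in your proposal is solid from where the genuine gap sits. The singularity-\emph{type} half is essentially sound and unconditional: granting that the general filling rules yield nonnegative weights $a_S$ with $a_\emptyset>0$ and $a_{\{2,\dots,d\}}>0$ (plus an aperiodicity check that the paper never performs and that is not automatic for intermediate $|S|$), the smooth implicit-function schema applied to $f=xP(f)$ gives a square-root singularity of $f$ at $\rho=\tau/P(\tau)$, hence of $g=f+f^2$ since $1+2\tau\neq 0$. Your ``free half'' $\rho\geq 1/D$, obtained from $\sum_n p^{cf}(n)\leq 1$ and $\phi_n\geq\delta$, is also a correct and nice observation (modulo extending the zero-liberty bound beyond $K_3$, which the paper does not do). The gap is exactly the one you name: the \emph{location} $\rho=1/D$, equivalently $P(\tau)=D\tau$ at the critical point $\tau$. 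Deriving this from Conjecture~\ref{con:quasi-isometric-first-wave-critical} is circular for the paper's purposes, since that conjecture is itself open and the paper intends Conjecture~\ref{con:numerator-denominator} as the route \emph{toward} the critical-exponent statement, not a consequence of it. Your reformulation --- prove $\tau=x_\infty$ and $P(x_\infty)=Dx_\infty$, which does check out on the expanded cactus, where $P(2)=12+16+12=40=20\cdot 2$ and $P'(2)=20$ so that $\tau=2=x_\infty$ --- is a sharper and more actionable version of the paper's Open Question than the paper itself gives, and is genuinely useful; but it remains an unproven identity, so the proposal does not close the conjecture.
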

This would immediately imply that the critical exponent of $\Gamma'$ is
$\frac{3}{2}$.

If Conjecture \ref{con:numerator-denominator} holds,
then the two separate methods of counting configurations
give an exponential factor of $D$ in the denominator and an generating function
with a singularity at $\frac{1}{D}$.
This brings up the following open question:
\begin{open}
Is there a combinatorial argument for why Conjecture 2 holds in the general case
of decoration of a $d$-regular tree by a connected transitive graph with $d$
vertices?
\end{open}

The two exponential factors in the numerator and the denominator are
computed using different methods.  However, these methods currently
rely on examining all recurrent cell configurations.  The
number of cell configurations is $d^{k+1}$, where $d$ is the number
of vertices in $F$, the decorating $k$-regular graph.  Of course,
some of these might be forbidden, but exhaustive examination of
each individual recurrent cell configuration quickly becomes prohibitive
for larger graphs.  Therefore, the best decorating graphs to examine next
are graphs whose symmetry we can leverage to reduce the number of cases
to examine, such as complete graphs.

In the case of a complete graph $K_d$, one can see that the allowed
cell configurations are precisely those satisfying that
there does not exist a positive integer $m$ such that more than $m$
vertices in the cell has height less than or equal to $m$.
Moreover, when computing the exponential factor in the numerator,
we can use automorphisms of the vertices in a cell and the
attached radicals to conclude that, for every $m$ with $0 \leq m \leq d-1$,
the weighting factors $a_S$ are equal for every
$S \subset \{2, 3, \ldots, d\}$ with $|S|=m$.  Therefore, when
formulating an analogue of the filling rules for these cactus
graphs, we can classify cells in a cluster based on the number
of attached radicals without having to reference their positions,
just as in the case of the expanded cactus (decorated by $K_3$).

That said, if Open Question 1 is answered in the affirmative, then
we can conclude that the broad conjecture that every cactus graph
has cell-wise first-wave critical exponent $\frac{3}{2}$ is true without having
to engage in prohibitive complex case analysis.

\bibliographystyle{amsalpha}
\bibliography{sandpilebib}

\providecommand{\bysame}{\leavevmode\hbox to3em{\hrulefill}\thinspace}
\providecommand{\MR}{\relax\ifhmode\unskip\space\fi MR }
\providecommand{\MRhref}[2]{%
  \href{http://www.ams.org/mathscinet-getitem?mr=#1}{#2}
}
\providecommand{\href}[2]{#2}
\begin{thebibliography}{MRZ01}

\bibitem[Dha99]{Dhar99}
Deepak Dhar, \emph{Studying self-organized criticality with exactly solved
  models}, Pre-print from arXiv, available at {\tt
  http://arXiv.org/abs/cond-mat/9909009v1}.

\bibitem[DM90]{DharMajumdar90}
Deepak Dhar and S.~N. Majumdar, \emph{Abelian sandpile model on the {B}ethe
  lattice}, J Phys. A \textbf{23} (1990), 4333--4350.

\bibitem[FE61]{FisherEssam61}
Michael~E. Fisher and John~W. Essam, \emph{Some cluster size and percolation
  problems}, J. Math. Phys. \textbf{2} (1961), no.~4, 609--619.

\bibitem[HRS75]{HararyRobinsonSchwenk75}
Frank Harary, Robert~W. Robinson, and Allen~J. Schwenk, \emph{Twenty-step
  algorithm for determining the asymptotic number of trees of various species},
  J. Austral. Math. Soc. (Series A) \textbf{20} (1975), 483--503.

\bibitem[Koz08]{Kozakova08}
Iva Koz{\'{a}}kov{\'{a}}, \emph{Critical percolation of free product of
  groups}, Internat. J. of Algebra and Comput. \textbf{18} (2008), no.~4,
  683--704.

\bibitem[MN11]{MatterNagnibeda11}
M.~Matter and T.~Nagnibeda, \emph{Abelian sandpile model on randomly rooted
  graphs and self-similar groups}, arXiv, May 2011, Pre-print from arXiv,
  available at {\tt http://arxiv.org/abs/1105.4036}.

\bibitem[MRZ01]{MeesterRedigZnamenski01}
Ronald Meester, Frank Redig, and Dmitri Znamenski, \emph{The abelian sandpile;
  a mathematical introduction}, Markov Processes and Related Fields \textbf{7}
  (2001), 509--523.

\bibitem[Odl95]{Odlyzko95}
A.~M. Odlyzko, \emph{Asymptotic enumeration methods}, Handbook of Combinatorics
  (R.~L. Graham, M.~Gr{\"{o}}tschel, and L.~Lov{\'{a}}sz, eds.), vol.~2, The
  MIT Press, 1995, pp.~1063--1229.

\bibitem[PR87]{PolyaRead87}
G.~P{\'{o}}lya and R.~C. Read, \emph{Combinatorial enumeration of groups,
  graphs, and chemical compounds}, Springer-Verlag, 1987.

\bibitem[ZLS95]{ZapperiLauritsenStanley95}
Stefano Zapperi, Kent~B{\ae}kgaard Lauritsen, and H.~Eugene Stanley,
  \emph{Self-organized branching processes: Mean-field theory for avalanches},
  Phys. Rev. Lett. \textbf{75} (1995), no.~22, 4071--4074.

\end{thebibliography}

\end{document}